\newcommand{\Real}{\mathbb{R}}
\renewcommand{\subset}{\subseteq}
\newtheorem{thm}{\bf Theorem}[section]
\newtheorem{cor}[thm]{\bf Corollary}
\newtheorem{lem}[thm]{\bf Lemma}
\newtheorem{rem}[thm]{\bf Remark}
\newtheorem{prop}[thm]{\bf Proposition}
\newtheorem{dfn}[thm]{\bf Definition}
\newtheorem{hyp}[thm]{\bf Hypothesis}
\newcommand{\eps}{\varepsilon}
\newcommand{\tta}{\vartheta}
\newcommand{\R}{\mathbb{R}}
\newcommand{\N}{\mathbb{N}}
\newcommand{\uu}{\mathcal{U}}
\newcommand{\lsim}{\raisebox{-0.13cm}{~\shortstack{$<$ \\[-0.07cm]
      $\sim$}}~}
\newcommand{\lk}{L}
\newcommand{\xk}{x}
\newcommand{\uk}{u}
\newcommand{\dd}{\Omega}
\newcommand{\zero}{\mathbf{0}}
\newcommand{\ra}{\rightarrow}
\newcommand{\normr}[1]{\left\|#1\right\|_{\operatorname{R}}}
\newcommand{\norm}[1]{\left| #1 \right|}
\newcommand{\abs}[1]{\left\vert #1 \right\vert}
\newcommand{\norms}[1]{|#1|_{\infty}}
\newcommand{\oball}{\mathcal{B}}
\newcommand{\Lip}{\operatorname{Lip}}
\newcommand{\hv}{\hat{V}}
\newcommand{\hk}{\hat{\kappa}}
\newcommand{\tG}{\tilde{G}}
\newcommand{\tV}{\tilde{V}}
\newcommand{\tk}{\tilde{\kappa}}
\newcommand{\rr}{\mathcal{R}}
\newcommand{\ff}{\mathcal{F}}
\newcommand{\mvf}[1]{\left\|#1\right\|_{\operatorname{M, \infty}}}
\newcommand{\mv}[1]{\left\|#1\right\|_{\operatorname{M}}}
\DeclarePairedDelimiter\roundBracket{\lparen}{\rparen}
\newcommand{\round}{\roundBracket*}
\DeclarePairedDelimiter{\setBracket}{\{} {\}}
\newcommand{\set}{\setBracket*}
\NewDocumentCommand{\dist}{o}{\IfNoValueTF{#1} {\operatorname{d}} {\operatorname{d}\round{#1}} }
\newcommand{\ball}{\mathcal B}
\newcommand{\ballbrack}{\round}
\NewDocumentCommand{\B}{o m m}{\IfNoValueTF {#1} {\ball_{#2} \ballbrack {#3}} {\ball^{#1}_{#2} \ballbrack {#3}}}
\icmltitlerunning{Physics-Informed Neural Network Policy Iteration}
\begin{document}

\twocolumn[
\icmltitle{Physics-Informed Neural Network Policy Iteration: \\
Algorithms, Convergence, and Verification}

\icmlsetsymbol{equal}{*}

\begin{icmlauthorlist}
\icmlauthor{Yiming Meng}{uiuc,equal}
\icmlauthor{Ruikun Zhou}{uw,equal}
\icmlauthor{Amartya Mukherjee}{uw}
\icmlauthor{Maxwell Fitzsimmons}{uw}
\icmlauthor{Christopher Song}{uw}
\icmlauthor{Jun Liu}{uw}
\end{icmlauthorlist}

\icmlaffiliation{uiuc}{University of Illinois at Urbana-Champaign, United States}
\icmlaffiliation{uw}{University of Waterloo, Waterloo, Canada}

\icmlcorrespondingauthor{Jun Liu}{j.liu@uwaterloo.ca}

\icmlkeywords{Machine Learning, ICML}

\vskip 0.3in
]

\printAffiliationsAndNotice{\icmlEqualContribution} %

\begin{abstract}
Solving nonlinear optimal control problems is a challenging task, particularly for high-dimensional problems. We propose algorithms for model-based policy iterations to solve nonlinear optimal control problems with convergence guarantees. The main component of our approach is an iterative procedure that utilizes neural approximations to solve linear partial differential equations (PDEs), ensuring convergence. We present two variants of the algorithms. The first variant formulates the optimization problem as a linear least square problem, drawing inspiration from extreme learning machine (ELM) for solving PDEs. This variant efficiently handles low-dimensional problems with high accuracy. The second variant is based on a physics-informed neural network (PINN) for solving PDEs and has the potential to address high-dimensional problems. We demonstrate that both algorithms outperform traditional approaches, such as Galerkin methods, by a significant margin. We provide a theoretical analysis of both algorithms in terms of convergence of neural approximations towards the true optimal solutions in a general setting. Furthermore, we employ formal verification techniques to demonstrate the verifiable stability of the resulting controllers.

\end{abstract}

\section{Introduction}

Reinforcement learning in discrete environments has achieved remarkable success over the past decade, from AlphaGo \cite{silver2017mastering} to the recent breakthrough of GPT-3 \cite{ouyang2022training}, which uses reinforcement learning \cite{schulman2017proximal} from human feedback to fine-tune large language models. However, reinforcement learning in continuous environments, where states and actions evolve continuously in both space and time, remains a challenge \cite{duan2016benchmarking}. Theoretically speaking, when considering continuous-time scenarios, the discrete-time Bellman equation is replaced by a nonlinear partial differential equation (PDE) known as the Hamilton–Jacobi–Bellman (HJB) equation. Solving and analyzing this equation, in general, becomes a complex task due to its intricate nature. One major challenge arises from the possibility that the optimal cost function may not be differentiable, even for relatively straightforward problems \cite{bertsekas2015value}. In such cases, one has to resort to viscosity solutions \cite{crandall1984some} to study HJB equations.

A rich literature exists on policy iteration techniques for obtaining suboptimal solutions to the HJB equations \cite{leake1967construction,saridis1979approximation,beard1995improving,beard1997galerkin,beard1998approximate}. One notable approach is to construct successive approximations to solutions of the so-called Generalized Hamilton-Jacobi-Bellman (GHJB) equation, which is a linear PDE and potentially easier to solve. Galerkin approximations for solving the GHJB are proposed in \cite{beard1997galerkin,beard1998approximate} and have proven effective for solving low-dimensional problems. However, such approaches do not scale well to high-dimensional problems. Indeed, Galerkin methods are known to suffer from the curse of dimensionality.

Motivated by recent successes in solving PDEs using neural networks \cite{raissi2019physics,huang2006extreme,chen2022bridging,han2018solving,sirignano2018dgm,weinan2021algorithms} and the potential of neural networks to overcome the curse of dimensionality \cite{poggio2017and}, we set out to revisit the policy iteration approach by solving GHJB equations using neural networks. The main goal is to answer the following questions:
\begin{enumerate}
    \item Can neural approximations of the solutions to GHJB converge to the viscosity solution of the HJB equation? 
\item Can neural approximations efficiently compute solutions of the HJB with high accuracy?
\item Can neural policy iteration overcome the curse of dimensionality?
\item Can neural approximations be guaranteed to lead to stabilizing controllers?  
\end{enumerate}

The answers to these questions are all positive to some degree. The main contributions of this paper are as follows: 
\begin{enumerate}
    \item We prove that policy iteration indeed converges to viscosity solutions of the HJB equation. 

\item We propose two variants of neural policy iteration. The first, inspired by the Extreme Learning Machine \cite{huang2006extreme} and termed ELM-PI, can achieve remarkable accuracy and efficiency on low-dimensional problems. The second, based on Physics-Informed Neural Networks (PINN) \cite{raissi2019physics} for solving PDEs, has been shown to scale better than ELM-PI as dimensions increase.

\item We formulate formal verification problems for the resulting controllers to verify their stability. We show with a simple example that seemingly convergent results can lead to unstable controllers, which necessitate the use of formal verification when safety is a concern.

\end{enumerate}

\textbf{Related work:} 
(1) The idea of policy iteration in the context of designing optimal stabilizing controllers has a long history. For linear systems, this reduces solving an algebraic Riccati equation (ARE), which is quadratic in the unknown matrix, into a sequence of Lyapunov equations, which are linear and easier to solve. The algorithm is known as Kleinman's algorithm \cite{kleinman1968iterative}. In the nonlinear case, this procedure reduces the nonlinear HJB equation to a sequence of linear PDEs that characterize the value (and Lyapunov) functions for the stabilizing controllers obtained at each iteration of policy evaluation. This result dates back at least to 1960s \cite{milshtein1964successive,vaisbord1963concerning} and followed by \cite{leake1967construction,saridis1979approximation,beard1995improving,jiang2017robust,bhasin2013novel,vrabie2009neural,jiang2012computational,jiang2014robust} and many others. To the best knowledge of the authors, however, none of these works establish the convergence of policy iteration to viscosity solutions \cite{crandall1984some} of the HJB equation, especially when function approximators are involved. Furthermore, classical computational approaches, such as Galerkin methods, for solving PDEs often do not scale well.

(2) We draw significant inspiration from recent work on neural networks for solving PDEs \cite{raissi2019physics, huang2006extreme, chen2022bridging, han2018solving, sirignano2018dgm} (see the recent survey \cite{weinan2021algorithms} and discussions on the potential for machine learning to overcome the curse of dimensionality when solving PDEs). To the best of our knowledge, no previous work has reported the use of neural PDE solving for policy iterations in the context of nonlinear optimal control on benchmark problems. We address these gaps in this paper.

\section{Problem formulation}
\label{sec:prob}

We consider a class of  optimal control problems subject to control-affine dynamical systems of the form
\begin{equation}\label{E:sys}
\dot{x}=f(x)+g(x)u,
\end{equation}
where $f:\mathbb{R}^n\rightarrow\R^n$ is a continuously differentiable  vector field and $g:\mathbb{R}^n\rightarrow\R^{n\times m}$ is smooth,  $x\in\R^n$ is the state, $u\in\R^m$ is the control input. We also assume that $f(\zero)=\zero$. 

We are interested in the case where the maximal interval of existence is $[0,\infty)$ for admissible controls. For simplicity, in the context of infinite-horizon trajectory, we overload the notation $u$ as the control signal,   i.e. $u:[0,\infty)\ra\R^m$.  Subject to the control  $u$, the unique solution starting from $x_0$  is denoted by $\phi(t; x_0, u)$. We may also write the solution as $\phi(t)$ or $\phi$, if the rest of the arguments are not emphasized. 

Let $R$ be a symmetric and positive definite matrix. Introduce $ \lk(\xk, \uk) = Q(\xk)+\normr{\uk}^2$, where $Q:\,\R^n\ra\R$ is a symmetric and positive definite function, %
and $\normr{\uk}=\uk^TR\uk$, 
where $R:\,\R^n\ra\Real^{m\times m}$ is also  symmetric and positive definite. The associated cost is then commonly defined as:
\begin{equation}\label{E:cost}
    J(x, u)=\int_0^\infty \lk(\phi(s; x, u), u(s))ds.
\end{equation}

\begin{dfn}[Admissible Controls]\label{dfn: adm}
Given a subset $\dd\subseteq\R^n$ containing the origin. A control $u:\Omega\ra\R^m$ is admissible on $\dd$, denoted as $u\in\uu(\dd)$ or simply $u\in\uu$, if (1) $u$ is Lipschitz continuous on $\dd$; (2) $u(0)=0$; (3) $u$ is a stabilizing control, i.e., $\lim_{t\ra\infty}|\phi(t; x_0, u)| = 0$ for all $x_0\in\dd$; and (4) $J(x_0, u)<\infty$.
\end{dfn}
Let $V:\R^n\ra \R$ be the value function for this problem, i.e.,
\begin{equation}\label{E: v}
    V(x):=\inf_{u\in\uu} J(x,u).
\end{equation}
We aim to find $V$ as well as the associated optimal control $u^*$. If we introduce 
\begin{equation}
    G(x,u,p):= \lk(x, u)+p\cdot(f(x)+g(x)u), 
\end{equation}
where $p\in \R^n$, 
and define the Hamiltonian
\begin{equation}
    H(x, p) = \sup_{u\in\R^m}-G(x,u,p), 
\end{equation}
then $V$ is generally a viscosity solution (see Appendix \ref{app: vis} for a formal definition) within $C(\dd)$ to the HJB equation 
\begin{equation}
    H(x,  DV(x)) = 0. 
\end{equation}
We show this in the proof of Proposition \ref{prop: uniqueness_H}. 

\begin{rem}
Basic properties of viscosity solutions are discussed in Appendix \ref{app: vis}. The concept of viscosity solution relaxes $C^1$ solutions. Note that, at differentiable points,  $DV(x)$ exists and $\{DV(x)\}=\partial^+V(x) = \partial^-V(x)$. In this case, to justify a viscosity solution, we can simply substitute $DV(x)$ and check if $F(x, V(x), DV(x))=0$ pointwise. %
If $V$ is not differentiable at a given point, then we have to go through the conditions in Definition \ref{dfn: vis1} to verify that $V$ is a viscosity solution. A classical example is that $V(x)= 1- |x|$, $x\in\R$, is a viscosity solution to $|DV|-1=0$ with boundary conditions $V(-1)=V(1)=0$. To verify this, we only have to check if (1) and (2) in Definition \ref{dfn: vis1} are satisfied at $0$. \qed
\end{rem}

We also provide the following nice properties and complete the proofs in Appendix \ref{app: proof_sec_2}. 
\begin{prop}[Dynamic Programming Principle]\label{prop: DPP}
For all $x\in\R^n$ and $t>0$, 
\begin{equation}
    V(x)=\inf_{u\in\uu}\left\{\int_0^t \lk(\phi(s;x,u), u(s))ds + V(\phi(t;x,u))\right\}. 
\end{equation}
\end{prop}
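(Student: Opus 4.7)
The plan is the classical two-inequality argument for Bellman's principle. Denote the right-hand side by $W(x)$, and prove $V(x)\geq W(x)$ and $V(x)\leq W(x)$ separately.

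For $V(x)\geq W(x)$, fix any $u\in\uu$ and split the cost integral at time $t$:
\[
J(x,u) \;=\; \int_0^t L(\phi(s;x,u), u(s))\,ds \;+\; \int_t^\infty L(\phi(s;x,u), u(s))\,ds.
\]
Set $y := \phi(t;x,u)$ and let $u_t(\cdot):= u(\cdot+t)$. By uniqueness of the Cauchy problem for \eqref{E:sys}, $\phi(s;y,u_t) = \phi(s+t;x,u)$, so the tail integral equals $J(y,u_t)$. The shifted control $u_t$ inherits admissibility (stabilization and finite cost follow from those of $u$; Lipschitz regularity and the vanishing-at-origin condition transfer via the feedback interpretation). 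Hence $J(y,u_t)\geq V(y)$, and the split gives $J(x,u)\geq \int_0^t L + V(\phi(t;x,u))\geq W(x)$. Infimizing over $u$ yields $V(x)\geq W(x)$.

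For $V(x)\leq W(x)$, fix $\eps>0$. Pick $u_1$ almost attaining the infimum defining $W(x)$, within $\eps/2$, and set $y=\phi(t;x,u_1)$. Then pick $u_2\in\uu$ with $J(y,u_2)\leq V(y)+\eps/2$. Concatenate
\[
\tilde u(s) = \begin{cases} u_1(s), & s\in[0,t],\\ u_2(s-t), & s>t.\end{cases}
\]
If $\tilde u\in\uu$, then $V(x)\leq J(x,\tilde u) = \int_0^t L(\phi(s;x,u_1),u_1(s))\,ds + J(y,u_2)\leq W(x)+\eps$, and letting $\eps\downarrow 0$ yields $V(x)\leq W(x)$.

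The main technical obstacle is verifying that $\tilde u\in\uu$. Stability (condition 3), the zero-at-origin property, and finiteness of $J(x,\tilde u)$ all follow directly from the corresponding properties of $u_1$ and $u_2$; the delicate point is the regularity requirement at the switching time $s=t$, where $u_1(t)$ and $u_2(0)=0$ need not match smoothly. The standard remedy is to smooth the transition over a short interval $[t,t+\delta]$ by linear interpolation (or mollification) and to use continuous dependence of $\phi$ and of the running cost on the control to bound the resulting perturbation by an error tending to $0$ as $\delta\downarrow 0$; a diagonal argument then sends $\eps$ and $\delta$ simultaneously to zero. Equivalently, one may first establish DPP for a broader class of measurable admissible controls (for which concatenation is automatic) and then argue by density that the infimum is unchanged when restricted to $\uu$ as defined.
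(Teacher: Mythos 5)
Your proof is correct and follows essentially the same two-inequality argument as the paper: splitting the cost at time $t$ with a time-shifted control for one direction, and concatenating a near-optimal continuation for the other. The only difference is that you explicitly flag and repair the admissibility of the concatenated control at the switching time, a point the paper's proof passes over silently, so your extra care is a refinement rather than a departure.
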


\begin{prop}[Uniqueness of Viscosity Solution]\label{prop: uniqueness_H}
The $V$ defined in \eqref{E: v} is the unique viscosity
solution of $H(x,DV(x))=0.$
\end{prop}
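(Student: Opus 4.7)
The plan is to prove Proposition \ref{prop: uniqueness_H} in two stages: first show that $V$ defined in \eqref{E: v} is a viscosity solution of $H(x,DV(x))=0$, then establish uniqueness via a comparison principle. The Dynamic Programming Principle (Proposition \ref{prop: DPP}) is the central tool throughout, and the control-affine, quadratic-in-$u$ structure of $G$ gives the Hamiltonian the explicit closed form $H(x,p) = \tfrac{1}{4}p^\top g(x) R^{-1} g(x)^\top p - p \cdot f(x) - Q(x)$, with the unique maximizer $u^*(x,p) = -\tfrac{1}{2} R^{-1} g(x)^\top p$. This explicit form will be essential both for passing $v$ through the Hamiltonian in the subsolution argument and for verifying comparison-principle hypotheses.

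For the existence half, I would first verify continuity of $V$ on $\dd$ via the DPP together with Lipschitz continuity of $f$, $g$, and admissible controls, plus the running-cost bound. For the subsolution property, fix $x_0\in\dd$ and $\varphi\in C^1$ with $V-\varphi$ attaining a local max at $x_0$ and $V(x_0)=\varphi(x_0)$. For each $v\in\R^m$, I construct an admissible control that equals the constant $v$ on $[0,t]$ for small $t>0$ and switches to a stabilizing admissible control afterward; applying the DPP inequality and $V\le\varphi$ locally gives $\varphi(x_0)-\varphi(\phi(t))\le\int_0^t \lk(\phi(s),v)\,ds$. Dividing by $t$, sending $t\to 0^+$, and taking supremum over $v$ yields $H(x_0,D\varphi(x_0))\le 0$. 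The supersolution inequality follows symmetrically by selecting $\eps$-optimal controls in the DPP and testing with a $\varphi$ that touches $V$ from below.

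For uniqueness, I invoke a comparison principle for viscosity sub- and supersolutions of first-order Hamilton--Jacobi equations. Using the explicit form above, $H(x,p)$ is smooth in $p$ and locally Lipschitz in $x$, so the Crandall--Ishii doubling-of-variables technique applies on compact sublevel sets. The anchoring boundary datum is $V(\zero)=0$, which follows from admissibility of the trivial control at the origin combined with $f(\zero)=\zero$ and $Q(\zero)=0$. Comparing any two continuous viscosity solutions agreeing at $\zero$ on each compact set containing the origin, then exhausting $\dd$, forces equality and thus uniqueness within the class of continuous viscosity solutions.

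The hard part will be the comparison step. Because $\dd$ may be unbounded and $H$ is quadratic (hence not globally Lipschitz) in $p$, one cannot apply textbook comparison results directly: the doubled-variable penalty $\eps|x-y|^2/\alpha$ must be supplemented with a confining term such as $\eps|x|^2$ to force approximate maximizers into a compact region where local Lipschitz bounds on $H$ suffice. Legitimizing this requires appropriate growth control of candidate solutions at infinity (or at the relative boundary of $\dd$), which for $V$ itself is inherited from the stabilizing nature of admissible controls, the coercivity of $Q$, and the zero-anchoring at the origin, but for a generic competing viscosity solution must be imposed as part of the solution class. Making this class precise and verifying $V$ belongs to it is the delicate technical point; once secured, the remainder is standard viscosity-solution machinery.
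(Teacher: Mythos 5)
Your proposal is correct and follows essentially the same route as the paper: the viscosity subsolution inequality is obtained from the Dynamic Programming Principle with (locally) constant controls and the supersolution inequality with $\eps$-optimal controls, after which uniqueness is delegated to a standard comparison principle. The only presentational difference is that the paper verifies the structural bounds $|H(x,0)|\leq M$ and $|H(x,p)-H(y,p)|\leq L_r|x-y|(1+|p|)$ and cites Evans, Bardi--Capuzzo-Dolcetta, and Camilli for the comparison step, whereas you sketch the doubling-of-variables argument (with the growth/confinement caveat) yourself.
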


\begin{thm}[Optimal Feedback Control]\label{thm: optimal_feedback}
Let $\kappa: \Omega\ra \R^m$ be locally Lipschitz continuous. Suppose that $u^*(\cdot):=\kappa(\phi(\cdot))$ and $u^*\in\uu$.  If $V$ is the viscosity solution of 
$G(x, \kappa(x), DV(x))=0$, 
then 
$J(x, \kappa(\phi(\cdot)))=V(x).$
\end{thm}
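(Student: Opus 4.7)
My proof would follow a verification-theorem route: define the candidate $W(x) := J(x, u^*)$, show it is itself a viscosity solution of the same linear PDE as $V$, and conclude by uniqueness.

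For the first ingredient, since $u^* \in \uu$, the integral defining $W$ converges for each $x \in \Omega$, and continuous dependence of the ODE on initial data yields continuity of $W$. The semigroup property $\phi(s; \phi(t; x, u^*), u^*) = \phi(s+t; x, u^*)$ combined with a change of variables gives the flow identity
\begin{equation*}
W(x) = \int_0^t \lk(\phi(s; x, u^*), u^*(s))\, ds + W(\phi(t; x, u^*)), \qquad t \geq 0.
\end{equation*}
For the second ingredient, fix $x_0 \in \Omega$ and a test function $\psi \in C^1$ with $W - \psi$ attaining a local extremum at $x_0$; substitute $y = \phi(t; x_0, u^*)$ into the extremum inequality, apply the flow identity, divide by $t > 0$, and send $t \to 0^+$. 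Using $\dot\phi(0) = f(x_0) + g(x_0)\kappa(x_0)$, this yields the required viscosity inequality on $\lk(x_0, \kappa(x_0)) + D\psi(x_0) \cdot (f(x_0) + g(x_0)\kappa(x_0))$ at super- and sub-differentials in turn. Finally, $\kappa(0) = 0$ (forced by $u^* \in \uu$) implies $\phi(\cdot; 0, u^*) \equiv 0$, so $W(0) = 0$; the same value is deducible for $V$ from the PDE structure. Uniqueness of the viscosity solution with this boundary datum (implicit in the statement's use of ``the'') then identifies $V$ with $W$.

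The principal obstacle is the uniqueness step on a possibly unbounded domain $\Omega$. A clean way to circumvent a general comparison principle is to argue pointwise along trajectories: using the sub-/superdifferential characterization of the viscosity property for $V$, one shows that the map $t \mapsto V(\phi(t; x, u^*)) + \int_0^t \lk(\phi(s; x, u^*), u^*(s))\, ds$ is constant in $t$. Letting $t \to \infty$ and using the stabilizing property $\phi(t; x, u^*) \to 0$ together with continuity of $V$ at the origin then gives $V(x) = \int_0^\infty \lk\, ds = J(x, u^*)$ directly. Some care with the viscosity sign convention is also required so that $W$ satisfies the PDE as written in the statement rather than its negation.
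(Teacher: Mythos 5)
Your fallback argument---showing that $t \mapsto V(\phi(t;x,u^*)) + \int_0^t \lk(\phi(s;x,u^*),u^*(s))\,ds$ is constant along the closed-loop trajectory via the sub-/superdifferential characterization, then sending $t\to\infty$ and using the stabilizing property together with $V(\zero)=0$---is essentially the paper's own proof: there the two inequalities $J(x,\kappa(\phi(\cdot)))\ge V(x)$ and $J(x,\kappa(\phi(\cdot)))\le V(x)$ are obtained by patching $C^1$ test functions supplied by Lemma~\ref{lem: facts} at points of the dense set where $\partial^+V$ (resp.\ $\partial^-V$) is nonempty along the trajectory, and letting $t_k\to\infty$ to dispose of the tail term. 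Your primary route---prove that $W:=J(\cdot,u^*)$ is itself a viscosity solution of the same GHJB and conclude by uniqueness---is a genuinely different decomposition, and it is in fact the content of Proposition~\ref{prop: uniqueness_GHJB} later in the paper, where the cost functional is shown to be a viscosity solution and uniqueness is established only within the class of positive definite functions in $C(\Omega)\cap C^1(\Omega\setminus\{\zero\})$. What your route buys is modularity (the theorem becomes a corollary of existence-plus-uniqueness for the GHJB); what it costs is exactly the step you flagged: the equation has no zeroth-order term, so solutions are invariant under adding constants, there is no unqualified comparison principle, $V(\zero)=0$ is \emph{not} deducible from the PDE structure alone (it is part of the normalization that makes ``the'' solution unique), and you would also need to verify that $W$ belongs to the uniqueness class, e.g.\ its $C^1$ regularity away from the origin, which the paper obtains via the method of characteristics. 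Since you anticipated this obstacle and your trajectory-based circumvention is sound and coincides with the paper's argument, the proposal is correct overall.
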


Note that $G(x, u, p)$ is minimized given that $u(x) = -\frac12 R^{-1} g^T(x)p^T$. With this, the HJB equation reduces to 
\begin{equation}\label{E: HJB_optimal}
\begin{split}
     H(x, DV(x)) 
    = &-Q(x) - DV(x)\cdot f(x) \\
    &+ \frac{1}{4} DV(x)g(x)R^{-1}g^T(x)(DV(x))^T. 
\end{split}
\end{equation}
We can either numerically solve the nonlinear PDE \eqref{E: HJB_optimal} or use policy iteration to approximate $V$ and the optimal controller. The conventional policy iteration \cite{bardi1997optimal, jiang2017robust} assumes that $V\in C^1(\Omega)$ and  seeks $C^1$ solutions $V_i$ to the  GHJBs $G_i(x, u_i, DV_i(x))=0$ for each $i\in\{0,1,\cdots\}$, where $u_i=-\frac{1}{2}R^{-1}g^T(x)DV_i(x)$ for $i\in\{1, 2, \cdots\}$ and $u_0\in\uu$. The convergence value function $V_\infty$ is expected to solve \eqref{E: HJB_optimal} and $u_i\ra u^*$ at least pointwise \citep[Theorem 3.1.4, ][]{jiang2017robust}. The numerical solution of GHJBs, which are linear, is commonly believed to be achieved more easily.

However, the continuous differentiability (on $\Omega$) of  $\{V_i\}$ and $V$ are assumed %
without justification \cite{bardi1997optimal, jiang2017robust}, leading to uncertainty regarding the applicability of the obtained results.  Even though $V_i\in C^1(\Omega)$ for all $i$, the limit $V_\infty$ w.r.t. the uniform norm in \citep[Theorem 3.1.4, ][]{jiang2017robust} may not be continuously differentiable, and hence may not be the approximation of $V\in C^1(\Omega)$. Motivated by this, we characterize solutions to GHJB equations and demonstrate in Section \ref{sec: exact_PI} that the exact policy iteration based on viscosity solutions converges to the viscosity solution of the HJB. The convergence analysis differs from the conventional case. Based on this analysis, we show in Section \ref{sec:algos} that the neural policy iteration algorithms, ELM-PI and PINN-PI, also converge to viscosity solutions under less restrictive assumptions. The algorithms will be presented in Section \ref{sec:algos}, and the convergence analysis will be discussed in Section \ref{sec:convergence}.

\section{Algorithms}
\label{sec:algos}

\subsection{Exact policy iteration}\label{sec: exact_PI}

To begin, we provide an overview of the theoretical foundation of policy iteration. Policy iteration (PI) originates optimal control of Markov decision processes (MDP) \citep{bellman1957dynamic,howard1960dynamic} (additional references can be found in recent texts and monographs \citep{bertsekas2012dynamic,bertsekas2019reinforcement}). In this section, we present a fundamental version of PI for the system (\ref{E:sys}) with the cost (\ref{E:cost}). The algorithm dates back to the 1960s \cite{leake1967construction,milshtein1964successive,vaisbord1963concerning,kleinman1968iterative}, and its convergence is established in various sources \cite{saridis1979approximation, beard1995improving,milshtein1964successive,vaisbord1963concerning,jiang2017robust,farsi2023reinforcement}. 
However, all these proofs rely on strong assumptions on the smoothness of the optimal value function, which may or may not be satisfied in general by the solutions to the HJB equation associated with the optimal control problem. To illustrate this, consider the bilinear scalar problem $\dot x=xu$, with $Q(x)=x^2$ and $R=1$. The optimal value function is $V(x)=2\abs{x}$, which fails to be differentiable at $x=0$. We provide a regularity analysis in the general setting where viscosity solutions are allowed (see Section \ref{sec:convergence}). 

We now define policy iteration with exact solutions to PDEs, which we refer to as exact-PI. This process begins with an initial policy $u=\kappa_0(x)$, where $\kappa_0(\zero)=\zero$. This initial policy is assumed to be an admissible controller. For each $i\ge 0$, exact-PI performs the following two steps iteratively:
\begin{enumerate}
	\item (\textbf{Policy evaluation}) Compute a value function $V_i(x)$ at all $x\in\Omega\setminus\{\zero\}$ for the policy $\kappa_i$ by solving the GHJB
	\begin{equation}\label{eq:pi_evaluate}
	\begin{split}
	 G_i(x, \kappa_i(x), DV_i(x))  := &
	     	Q(x) + \kappa_i^T(x) R(x) \kappa_i(x)\\
       & +  DV_i(x) (f(x)+g(x)\kappa_i(x))\\
	     	 = & 0. 
	\end{split}
	\end{equation}
	We set $V_i(\zero)=0$ for all $i\geq 0$. 
	
	\item (\textbf{Policy improvement}) Update the policy 
	\begin{equation}\label{eq:pi_improve}
	\kappa_{i+1}(x) = \left\{\begin{array}{lr} 
-\frac{1}{2}R^{-1} g^T(x) (DV_i(x))^T, \;\;\text{if}\; x\neq \zero;\\
\zero \;\;\text{otherwise.}
\end{array}\right.  
\end{equation}
\end{enumerate}

The exact-PI algorithm is impractical because exact solution of the linear PDE (\ref{eq:pi_evaluate}) is generally unavailable. In the following sections, we propose two algorithms for neural policy iterations by solving this PDE iteratively using function approximators.  

\subsection{ELM-PI via linear least squares}

The first algorithm uses a one-layer function of the form
\begin{equation}
    \label{eq:Vbasis}
    \hv(x):=V(x;\beta) = \beta^T \sigma(Wx+b),
\end{equation}
where $\beta\in\Real^m$, $W\in\Real^{m\times n}$, $b\in \Real^m$, and $\sigma:\,\Real\ra\Real$ is an activation function applied element-wise. It can be easily verified that the gradient of $V$ with respect to $x$ takes the form
\begin{equation}
    \label{eq:dVbasis}
    DV(x; \beta) = \beta^T \text{diag}(\sigma'(Wx+b))W ,
\end{equation}
where $\text{diag}$ maps the $m$-vector $\sigma'(Wx+b)$ to an $m\times m$ diagonal matrix and $\sigma'(\cdot)$ is the derivative of $\sigma$ applied element-wise. 

We will briefly describe how to solve a PDE via optimization in this paragraph. Suppose we would like to solve a PDE $H(x,DV)=0$. A general idea for solving a PDE via optimization is to collect a number of collocation points $\set{x_s}_{s=1}^N$, on which we evaluate the derivative $DV(x;\beta)$ of a parameterized,  potential solution $V(x;\beta)$ and formulate the residual loss with mean squared error (MSE) as
\begin{equation}
    \label{eq:lossV}
    \begin{split}
         &\text{Loss}(\beta)\\
         =& \frac{1}{N}\sum_{s=1}^N H(x_s,DV(x_s;\beta))^2 + \lambda \sum_{p=1}^{N_b}(V(y_p;\beta)-\hat{V}(y_p))^2, 
    \end{split}
\end{equation}
where $\lambda>0$ is a weight parameter, the points $\set{y_p}_{p=1}^{N_b}$ are boundary points and $\hat{V}(y_p)$ describes the boundary value at these points. 

To efficiently solve (\ref{eq:pi_evaluate}) by (\ref{eq:Vbasis}), the main idea is to randomize $W$ and $b$ and then fix them when optimizing $\text{Loss}(\beta)$ defined by (\ref{eq:lossV}). Due to the linearity of $V(x;\beta)$ in $\beta$, the linearity of the PDE (\ref{eq:pi_evaluate}), and definition of $\text{Loss}(\beta)$ by (\ref{eq:lossV}), we obtain a linear least square optimization problem, which can be solved efficiently and accurately for moderate sized problems. We call this ELM-PI\footnote{We choose the name ELM-PI over LS-PI because we essentially solve the PDE via a similar architecture to an extreme learning machine (ELM) \cite{huang2006extreme} for solving PDEs \cite{dong2021local}. LS-PI in fact exists in the literature for solving control problems with finite state and action spaces \cite{lagoudakis2003least} where PDE solving is irrelevant. Here we want to address the more difficult problem of solving continuous control problems with continuous state and action spaces, where PDE solving seems inevitable if one is to compute the optimal solutions. 
} and describe it in Algorithm \ref{alg:elm-pi}. Here, the boundary condition\footnote{Alternatively, in this setting, we can simply set $V(\zero)=0$ by subtracting a nonzero $V(\zero)$ as a bias term. This does not affect the subsequent controller.} is simply $V(\zero)=0$, because the the origin is an equilibrium point without $u=0$. From (\ref{eq:pi_evaluate}), (\ref{eq:Vbasis}), and (\ref{eq:dVbasis}), the loss function of $\beta$ reduces to 
 \begin{equation}
\label{eq:lsqloss}
\begin{split}
   & \text{Loss}(\beta)\\
    =& \frac{1}{N}\sum_{s=1}^N H(x_s,\beta^T \text{diag}(\sigma'(Wx+b))W)^2 + \lambda (\beta^T \sigma(b))^2, 
\end{split}  
\end{equation}
where $H$ is given by the left-hand side of (\ref{eq:pi_evaluate}) and linear in $DV$. Hence minimizing (\ref{eq:lsqloss}) with $\beta$ is a linear least square problem.

From our experiments, it appears immaterial whether Steps 2 and 3 of Algorithm \ref{alg:elm-pi} are placed within or outside of the loop. In other words, we can use the same set of parameters $W$ and $b$ as well as the set of collocation points for all iterations.

\subsection{PINN-PI via physics-informed neural network}

Physics-informed neural networks (PINN) \cite{raissi2019physics} are a popular method for solving PDEs. We propose a variant for performing physics-informed neural policy iteration in this subsection. 

For each $i$, instead of assuming that a solution $V_i(x)$ to Equation (\ref{eq:pi_evaluate}) takes the form (\ref{eq:Vbasis}), we consider a more general approach by assuming it to be a neural network function:
\begin{equation}
\label{eq:neuralV}
\hat{V}_{i}(x) := V_{i,\text{NN}}(x;\theta),
\end{equation}
where $V_{i, \text{NN}}$ represents a feedforward neural network with potentially multiple layers and nonlinear activation functions. In this formulation, $\theta$ represents the parameters of the neural network, allowing for a flexible and adaptable representation of the solution $V_i(x)$. Even with just one hidden layer, the PINN approach would be allowed to change all parameters in the optimization process, leading to a non-convex optimization problem. Gradient descent methods are usually used to solve these large-scale non-convex optimization problems.  

Similar to ELM-PI, at each iteration, we choose a set of collocation points $\set{x_s}_{s=1}^N$, evaluate the derivatives $DV_{i, \text{NN}}$ of $V_{i, \text{NN}}$ at these points using automatic differentiation, and form a residual loss 
 \begin{equation}
\label{eq:NNloss}
    \text{Loss}(\theta)= \frac{1}{N}\sum_{s=1}^N H(x_s,DV_{i, \text{NN}}(x_s;\theta))^2 + \lambda (V_{i, \text{NN}}(0;\theta))^2, 
\end{equation}
which is in general a non-convex function of $\theta$. 

We describe PINN-PI in Algorithm \ref{alg:pinn-pi}.

\begin{algorithm}[H]
	\caption{Extreme Learning Machine Policy Iteration  (ELM-PI)}\label{alg:elm-pi} 
	\begin{algorithmic}[1]
		\REQUIRE  $f$, $g$, $Q$, $R$, $k_0$, $\Omega$, $N$, $m$
		\REPEAT 
            \STATE Generate random $W$ and $b$ 
            \STATE Generate random $\set{x_s}_{s=1}^N\subset\Omega$
		\STATE Finding $\beta$ that minimizes (\ref{eq:lsqloss}) to form $V_i$ from (\ref{eq:Vbasis})
		\STATE Update $\kappa_{i+1}$ according to (\ref{eq:pi_improve})
		\STATE $i=i+1$
		\UNTIL{desired accuracy max iterations reached}
	\end{algorithmic}
\end{algorithm}

\begin{algorithm}[H]
	\caption{Physics-Informed Neural Network Policy Iteration  (PINN-PI)}\label{alg:pinn-pi} 
	\begin{algorithmic}[1]
		\REQUIRE  $f$, $g$, $Q$, $R$, $k_0$, $\Omega$, $V_{i, \text{NN}}(x;\theta)$
		\REPEAT 
            \STATE Generate random $\set{x_s}_{s=1}^N\subset\Omega$ 
            \REPEAT
                \STATE Run gradient descent on $\theta$ with (\ref{eq:NNloss})
            \UNTIL{desired accuracy or max epochs reached}		
            \STATE Form $V_i(x)$ from $V_{i, \text{NN}}(x;\theta)$
            \STATE Update $\kappa_{i+1}$ according to (\ref{eq:pi_improve})
		\STATE $i=i+1$
		\UNTIL{desired accuracy or max iterations reached}
	\end{algorithmic}
\end{algorithm}

A natural question to ask is when to terminate the algorithm. Clearly, there is no guarantee that gradient descent will find a global minimum $\theta^*$ for (\ref{eq:NNloss}). Even if it does, the resulting $V_{i, \text{NN}}(x;\theta^*)$ will not satisfy (\ref{eq:pi_evaluate}) precisely. What one can hope for is that when the observed loss is sufficiently small and the number of iterations become large, $V_{i, \text{NN}}(x;\theta^*)$, where $\theta^*$ is a returned minimizer, can approximate the optimal solution to an arbitrary precision. In Section \ref{sec:convergence}, we provide a convergence analysis and further discussion on this issue. In practice, the algorithm will terminate when either a desired accuracy is reached or a predetermined number of iterations is completed. In such cases, due to the approximation errors, it is unclear whether the resulting controller is stabilizing. We provide a verification framework to address this issue, which we discuss in Section \ref{sec:verify}. 

\subsection{Loss term to ensure local stability is preserved across iterations}

Based on our observations, training optimal controllers for high-dimensional systems remains \textit{extremely} challenging. In fact, most state-of-the-art reinforcement learning algorithms, whether model-free or model-based, struggle to solve the benchmark control problems we chose with stability guarantees, even in a small region around the equilibrium point to be stabilized (such as cartpole and quadrotors). 

Through our extensive testing (see Table \ref{tab:synthetic} in Appendix \ref{sec:appendix:experiments}), we found that ELM-PI excels in solving low-dimensional problems with high accuracy and fast solver time. However, PINN-PI scales better with state dimensions. Hence, we focus on the PINN-PI algorithm for high-dimensional control problems. 

When naively implemented, PINN-PI can also lead to unstable controllers. This is because the loss function (\ref{eq:NNloss}) does not capture the stabilization requirement of the resulting controller 
\begin{equation}\label{eq:neural_control_update}
  \hat{\kappa}_{i+1}(x) = -\frac{1}{2}R^{-1} g^T(x) (DV_{i, \text{NN}}(x))^T. 
\end{equation}

To overcome this issue, we draw inspiration from classical control theory. When $f(x)=Ax$ and $g(x)=B$, the exact-PI algorithm is nothing but a sequence of Lyapunov equations that can be used to iteratively solve the algebraic Riccati equation. Given the assumptions on $f$ and $g$, locally (\ref{E:sys}) is approximated by $\dot x=Ax+Bu$, where $A=Df(0)$ and $g(0)=B$. 

We examine the linear approximation of $\hat{k}$ and quadratic approximation of $V_{i, \text{NN}}(x)$ around the origin. Assume $\nabla ^2 Q(0)=\hat Q>0$ and let $\hat R = R(0)$. Furthermore, suppose that, for each $i\ge 0$, the controller $\hat k_i$ is exponentially stabilizing and denote $\hat{K}_i:=D\hat{k}_i(0)$. Write $\hat A_i=A+B\hat K_i$. Since $\hat A_i$ is Hurwitz, by linear system theory, there exists a quadratic function $\hat P$ that solves the Lyapunov equation  
\begin{equation}\label{eq:linear_pi}
\hat P_i \hat A_i + \hat A_i^T\hat P_i = - \hat Q - \hat K_i^T \hat R \hat K_i.    
\end{equation}
Comparing this with (\ref{eq:pi_evaluate}), we expect the quadratic part of $V_{i, \text{NN}}(x)$ to be approximated by $x^T \hat P x$ near the origin and the next neural controller $\hat{\kappa}_{i+1}(x)$ is well approximated by a linear controller $\hat K_{i+1}x$ near the origin with 
\begin{equation}\label{eq:predicted_linear_gain}
\hat K_{i+1} = - R^{-1}B^T \hat P_{i},
\end{equation}
which is precisely the gain update required for policy improvement for linear systems. Hence we expect that 
\begin{equation}\label{eq:gain_match}
\hat K_{i+1} =D\hat{\kappa}_{i+1}(0). 
\end{equation}
In view of (\ref{eq:neural_control_update}), this can be easily encoded as a loss term 
\begin{equation}
    \label{eq:loss_gain_match}
    \norm {\frac{\partial}{\partial x}(-\frac{1}{2}R^{-1} g^T(x) (DV_{i, \text{NN}}(x))^T)\big\vert_{x=0} - \hat K_{i+1}}_\text{F} 
\end{equation}
where $\norm{\cdot}_\text{F}$ is the Frobenius norm and $K_{i+1}$ is solved by (\ref{eq:predicted_linear_gain}) and (\ref{eq:linear_pi}). This loss term plays a significant role in stabilizing the training process of PINN-PI for high-dimensional systems.

\subsection{Verification of stability via neural Lyapunov functions}\label{sec:verify}

Upon termination of Algorithms \ref{alg:elm-pi} or \ref{alg:pinn-pi}, we obtain an approximation $\hv(x)$ of the optimal value function. A corresponding approximate optimal control is given by
\begin{equation}
    \label{eq:controlV}
    	u = \hk(x) %
     = \left\{\begin{array}{lr} 
-\frac{1}{2}R^{-1} g^T(x) (D\hv(x))^T, \;\;\text{if}\; x\neq \zero;\\
\zero \;\;\text{otherwise.}
\end{array}\right.   
\end{equation}
For either Algorithms \ref{alg:elm-pi} or \ref{alg:pinn-pi}, $D\hv$ can be readily computed, and is a function involving nonlinear activation functions and possible compositions of them when using multi-layer neural networks in Algorithm \ref{alg:pinn-pi}. 

Suppose that the algorithms terminate perfectly as in the exact-PI case, we have $\hv_{i+1}(x)=\hv_i{(x)}=V(x)$ and $\hk_{i+1}(x)=\hk_{i}(x)=\kappa(x)$. We obtain from (\ref{eq:pi_evaluate}) that 
\begin{equation}
    \label{eq:dV}
    \begin{split}
        &DV(x)(f+g(x)\kappa(x)) \\
        =& -Q(x) + \kappa^T(x) R(x) \kappa(x) < 0,\quad x\neq \zero,
    \end{split} 
\end{equation}
provided that $Q(x)$ is positive definite. However, because of the use of function approximators, we cannot obtain (\ref{eq:dV}). Instead, we use a satisfiability modulo theories (SMT) solver \cite{gao2013dreal} to verify the following nonlinear inequality
\begin{equation}
    \label{eq:dVeps}
    D\hv(x)(f+g(x)\hk(x)) \le - \mu,\quad x\in \Omega \setminus U_\eps,
\end{equation}
where $U_\eps$ is a small neighborhood around the origin of radius $\eps>0$, and $\mu>0$ is a small constant. While checking the exact satisfaction of inequality (\ref{eq:dV}) is in general undecidable, there exist delta-complete SMT solvers \cite{gao2013dreal} that can either verify the inequality or falsify a $\delta$-weakened version of it, where $\delta>0$ can be any arbitrary precision parameter. To use such tools, it is necessary to exclude a small neighborhood of the origin \cite{chang2019neural,zhou2022neural}, as  (\ref{eq:dV}) turns into an equation at the origin. It is worth noting that, with additional assumptions, one may be able to verify exact stability including the origin through examination of the derivatives of the vector fields \cite{liu2023towards}. In this paper, we verify the stability given the controllers generated from Algorithms \ref{alg:elm-pi} and \ref{alg:pinn-pi} using (\ref{eq:dVeps}) for a $U_\eps$ with %
some small $\eps>0$, which ensures that solutions are attracted to any prescribed small neighborhood of the origin. Note that, by the continuity (or smoothness) of the approximators, for sufficiently small $\eps>0$, we can also have 
\begin{equation}\label{E: dV_nbhd}
D\hv(x)(f+g(x)\hk(x)) \le - \mu +\mathcal{O}(\eps)<0,\quad x\in U_\eps \setminus\{\zero\} ,
\end{equation}
where $\mathcal{O}(\eps)\ra 0$ as $\eps\ra0$. Assuming that a suitable value for $\eps$ can be chosen, such that both \eqref{eq:dVeps} and \eqref{E: dV_nbhd} hold, the stability can be verified using neural Lyapunov functions.

\section{Convergence analysis}
\label{sec:convergence}

We state the main regularity and convergence results in this section. The proofs can be found in Appendix \ref{app: proof_sec_4}. 
\subsection{Convergence analysis for exact-PI}\label{sec: conv_exact}
In view of Proposition \ref{prop: uniqueness_H}, we expect that each policy evaluation in exact-PI has a unique solution so that the algorithm eventually yields a meaningful outcome. We first establish that each GHJB in exact-PI possesses a unique viscosity solution characterized by a specific pattern.

\begin{prop}\label{prop: uniqueness_GHJB}
Let $u\in\uu$ be any (autonomous) state feedback controller so that there exists some feedback policy $\kappa: \R^n\ra\R$ such that $u(\cdot)=\kappa(\phi(\cdot))$. Then the infinitesimal dynamic $-G(x, u, DV(x))=0$ has a unique positive definite viscosity solution within the space $C(\Omega)\cap C^1(\Omega\setminus\{\zero\})$.
\end{prop}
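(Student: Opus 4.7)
The plan is to exhibit the closed-loop cost-to-go as the unique viscosity solution and reduce uniqueness to an ODE along characteristics. Concretely, define
$$V_\kappa(x) := \int_0^\infty \lk\bigl(\phi(s;x,u), \kappa(\phi(s;x,u))\bigr)\, ds.$$
Admissibility of $u\in\uu$ makes $V_\kappa$ finite on $\Omega$; positive definiteness of $Q$ and $R$ together with $\kappa(\zero)=\zero$ makes $V_\kappa$ positive definite with $V_\kappa(\zero)=0$; continuous dependence of $\phi$ on initial data over finite horizons, combined with an integrability tail estimate drawn from $u\in\uu$, gives $V_\kappa\in C(\Omega)$. This parallels the proof of Proposition~\ref{prop: DPP}.

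Next I would establish the semigroup identity
$$V_\kappa(x) = \int_0^t \lk(\phi(s),\kappa(\phi(s)))\, ds + V_\kappa(\phi(t;x,u)), \quad t>0,$$
by splitting the integral and using uniqueness of closed-loop trajectories. Dividing by $t$, testing against $C^1$ functions that touch $V_\kappa$ from above or below at a prescribed point, and passing to $t\downarrow 0$ yields the standard sub/super-solution inequalities, so $V_\kappa$ is a viscosity solution of $-G(x,\kappa(x), DV(x))=0$. For $C^1$ regularity on $\Omega\setminus\{\zero\}$ I would exploit that trajectories starting off the origin never reach it in finite time, together with smoothness of $f$, $g$, $Q$ and Lipschitz continuity of $\kappa$: the semigroup identity supplies differentiability of $V_\kappa$ along the direction $f+g\kappa$, while sensitivity of the flow with respect to initial data provides regularity in transverse directions. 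Once $V_\kappa\in C^1(\Omega\setminus\{\zero\})$, the viscosity equation collapses to the classical pointwise identity $DV_\kappa\cdot(f+g\kappa) = -\lk$ there.

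Uniqueness then reduces to an ODE argument along characteristics: if $W\in C(\Omega)\cap C^1(\Omega\setminus\{\zero\})$ is any positive definite viscosity solution, it is a classical solution off the origin, so $t\mapsto W(\phi(t;x,\kappa))$ satisfies $\tfrac{d}{dt}W(\phi(t))=-\lk(\phi(t),\kappa(\phi(t)))$; integrating from $0$ to $T$ and letting $T\to\infty$, and using $\phi(T;x,\kappa)\to\zero$, $W\in C(\Omega)$, and $W(\zero)=0$ (from positive definiteness), yields $W(x)=\int_0^\infty \lk\, ds = V_\kappa(x)$. The main obstacle is the $C^1$ regularity of $V_\kappa$ on $\Omega\setminus\{\zero\}$: since $\kappa$ is only assumed locally Lipschitz, the flow map $x\mapsto\phi(t;x,\kappa)$ is Lipschitz but not necessarily classically differentiable, so naive differentiation under the integral is not justified. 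Handling this carefully — for instance by combining the directional derivative coming from the DPP with sensitivity estimates in directions transverse to $f+g\kappa$, or by upgrading almost-everywhere differentiability (via Rademacher) to $C^1$ smoothness using continuity of the right-hand side of the PDE — is the most technical step of the argument.
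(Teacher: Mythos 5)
Your construction and verification of the candidate solution match the paper's proof: both define the closed-loop cost-to-go $V_\kappa(x)=\int_0^\infty \lk(\phi(s;x,u),\kappa(\phi(s;x,u)))\,ds$, establish the semigroup (dynamic-programming) identity by splitting the integral, and pass to the small-time limit against $C^1$ test functions exactly as in the proof of Proposition \ref{prop: uniqueness_H} to conclude that $V_\kappa$ is a positive definite viscosity solution of $-G(x,\kappa(x),DV(x))=0$. Where you diverge is the uniqueness step: the paper invokes cited one-dimensional results and, in higher dimensions, the method of characteristics (observing that the closed-loop field $f+g\kappa$ vanishes only at $\zero$, so the characteristic Jacobian is nonsingular off the origin), whereas you give a direct, self-contained argument — any positive definite solution $W\in C(\Omega)\cap C^1(\Omega\setminus\{\zero\})$ is classical off the origin, so $t\mapsto W(\phi(t))$ integrates to $W(x)=\int_0^\infty \lk\,ds=V_\kappa(x)$ using $\phi(T)\to\zero$, continuity of $W$, and $W(\zero)=0$. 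This trajectory-integration argument is cleaner and more elementary than the paper's citation-plus-characteristics route, and it legitimately exploits that uniqueness is only claimed within the class $C(\Omega)\cap C^1(\Omega\setminus\{\zero\})$.

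The one genuine gap is the existence half of the regularity claim, which you flag but do not close: to place $V_\kappa$ in the stated class you must show $V_\kappa\in C^1(\Omega\setminus\{\zero\})$, and your suggested routes (directional derivative from the DPP plus transverse flow sensitivity, or a Rademacher upgrade) are left as sketches. This is precisely the point the paper settles with the method of characteristics: the GHJB is a linear transport equation $DV\cdot(f+g\kappa)=-\lk$, its characteristics are the closed-loop trajectories, and since admissibility forces $f(x)+g(x)\kappa(x)\neq\zero$ for $x\neq\zero$ (any other zero would be a closed-loop equilibrium, contradicting stabilization), the characteristic construction produces a $C^1$ solution on $\Omega\setminus\{\zero\}$, with continuity at the origin supplied by the value-function representation. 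Without some version of this step your argument only delivers a continuous viscosity solution together with uniqueness among solutions that happen to lie in the smoother class, not the full statement of the proposition.
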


\begin{cor}\label{cor: iteration}
For each $i\geq 0$, the GHJB $G_i(x, \kappa_i(x), DV_i(x))=0$ has a unique positive definite viscosity solution $V_i$,  which belongs to $C(\Omega)\cap C^1(\Omega\setminus\{\zero\})$.
\end{cor}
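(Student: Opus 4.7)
The plan is a straightforward induction on $i$, with Proposition~\ref{prop: uniqueness_GHJB} as the workhorse at each step and the admissibility of the updated policy $\kappa_{i+1}$ as the inductive bridge. The base case $i=0$ is immediate: $\kappa_0 \in \uu$ by hypothesis in the statement of exact-PI, so Proposition~\ref{prop: uniqueness_GHJB} applies directly and yields a unique positive definite $V_0 \in C(\dd) \cap C^1(\dd \setminus \{\zero\})$. For the inductive step, assume the conclusion holds at stage $i$; since $DV_i$ is then well-defined and continuous on $\dd \setminus \{\zero\}$, the formula \eqref{eq:pi_improve} produces a candidate $\kappa_{i+1}$, and the task is to verify $\kappa_{i+1} \in \uu$ so that Proposition~\ref{prop: uniqueness_GHJB} can be invoked once more at stage $i+1$.

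The four admissibility requirements in Definition~\ref{dfn: adm} are handled as follows. Condition (2), $\kappa_{i+1}(\zero)=\zero$, is built into the definition. Conditions (3) and (4), namely global asymptotic stability and finiteness of the cost, are the classical Lyapunov argument for policy iteration: since $\kappa_{i+1}(x)$ is by construction the pointwise minimizer of $u \mapsto G(x,u,DV_i(x))$, one has
\begin{equation*}
G(x,\kappa_{i+1}(x),DV_i(x)) \le G(x,\kappa_i(x),DV_i(x)) = 0
\end{equation*}
on $\dd \setminus \{\zero\}$. Rearranging yields
\begin{equation*}
DV_i(x)\cdot\bigl(f(x)+g(x)\kappa_{i+1}(x)\bigr) \le -Q(x) - \kappa_{i+1}^T(x) R \kappa_{i+1}(x) < 0,
\end{equation*}
so the positive definite $V_i$ serves as a (classical on $\dd\setminus\{\zero\}$) Lyapunov function for the closed-loop system under $\kappa_{i+1}$, delivering asymptotic stability. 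Integrating this inequality along trajectories and passing to the limit $t\to\infty$ shows $J(x_0,\kappa_{i+1}) \le V_i(x_0) < \infty$, which settles condition (4).

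The remaining and genuinely subtle item is condition (1), the Lipschitz continuity of $\kappa_{i+1}$ on $\dd$. Away from the origin this is immediate from smoothness of $g$ and $R^{-1}$ together with continuity of $DV_i$, upgraded to local Lipschitz continuity by invoking the additional regularity of $V_i$ on $\dd\setminus\{\zero\}$ that one extracts from the linear GHJB (since the coefficients $f + g\kappa_i$ and $Q + \kappa_i^T R\kappa_i$ are locally Lipschitz and the policy is Lipschitz by inductive hypothesis, standard regularity for first-order linear PDEs along characteristics gives $V_i \in C^{1,1}_{\text{loc}}(\dd\setminus\{\zero\})$). The behavior at the origin is the main obstacle: one must show $\kappa_{i+1}$ extends Lipschitz-continuously across $\zero$ even though a priori $V_i$ is only continuous there. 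I would handle this by controlling $|DV_i(x)|$ as $x\to \zero$ using the GHJB itself: the bound $|Q(x)| = O(|x|^2)$ and $|\kappa_i(x)|=O(|x|)$ (from the previous inductive step) combined with $(f+g\kappa_i)(\zero)=\zero$ give $|DV_i(x)|\cdot|f+g\kappa_i|(x) = O(|x|^2)$; pairing this with the fact that the flow of $f+g\kappa_i$ contracts linearly near the equilibrium (by exponential stability of $\kappa_i$) then yields $|DV_i(x)| = O(|x|)$, and hence $|\kappa_{i+1}(x)| = O(|x|)$ with a uniformly Lipschitz modulus.

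Once condition (1) is established, all four clauses of Definition~\ref{dfn: adm} are in place, $\kappa_{i+1} \in \uu$, and Proposition~\ref{prop: uniqueness_GHJB} applies at stage $i+1$ to produce the unique positive definite viscosity solution $V_{i+1} \in C(\dd) \cap C^1(\dd\setminus\{\zero\})$, closing the induction. The anticipated main obstacle is the behavior at the origin just described; everything else is a clean combination of the pointwise minimization property of the policy update and the direct invocation of the preceding proposition.
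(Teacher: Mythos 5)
Your overall structure is the same as the paper's: the corollary is obtained by combining Proposition \ref{prop: uniqueness_GHJB} with admissibility of every iterate $\kappa_i$, and the admissibility of $\kappa_{i+1}$ is the classical policy-improvement/Lyapunov argument (the paper delegates exactly this to the proof of Theorem \ref{thm: convergence}(1), citing Beard's Lemma 5.2.4, and then invokes Proposition \ref{prop: uniqueness_GHJB} at each stage). Your treatment of conditions (2)--(4) of Definition \ref{dfn: adm} is fine and matches that route.

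The genuine gap is in your argument for condition (1) at the origin. From the GHJB you only control the \emph{component of $DV_i$ along the closed-loop vector field}: the identity $DV_i(x)\cdot\bigl(f(x)+g(x)\kappa_i(x)\bigr)=-\bigl(Q(x)+\kappa_i^T(x)R\kappa_i(x)\bigr)$ together with $|f+g\kappa_i|\gtrsim |x|$ bounds a directional derivative, not $|DV_i(x)|$ itself; in dimension $n\ge 2$ the transverse part of $DV_i$ is completely unconstrained by this computation, so $|DV_i(x)|=O(|x|)$ does not follow. Worse, the claimed bound is false even in the paper's own running example: for $\dot x=xu$, $Q=x^2$, $R=1$ with $\kappa_0(x)=-\tfrac12|x|$ one gets $V_0(x)=2|x|$, so $|DV_0|\equiv 2$ near the origin; there $\kappa_1(x)=-\tfrac12 g(x)DV_0(x)=-|x|$ is Lipschitz only because $g(x)=x$ vanishes at $\zero$ (the ``smoothing effect'' of $g$ the paper points to in Remark \ref{rem: conv_require}), not because $DV_0$ vanishes. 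Note also that $g(\zero)\ne 0$ in general, so one cannot lean on that factor either, and the exponential stability you invoke (``the flow contracts linearly'') is not available: Definition \ref{dfn: adm} only gives asymptotic stability and finite cost, and the corollary assumes nothing like $\nabla^2 Q(0)>0$ or a stabilizable linearization. A similar (milder) soft spot is your $C^{1,1}_{\mathrm{loc}}$ claim away from the origin: differentiability of the flow in initial data needs more than Lipschitz $\kappa_i$, so Lipschitz continuity of $DV_i$ (hence of $\kappa_{i+1}$) off the origin also requires an argument beyond ``standard regularity along characteristics.'' So the subtle step you correctly identified as the crux is exactly where your proposed argument breaks down; the conclusion may still hold, but it needs a different mechanism (boundedness of the Dini derivatives of $V_i$ and the structure of the update $-\tfrac12 R^{-1}g^T(DV_i)^T$, which is how the paper frames it) rather than a pointwise $O(|x|)$ decay of $DV_i$.
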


\begin{rem}\label{rem: zeros}
Note that in the situation where the state feedback controller $u$ is not necessarily stabilizing, but (1)(2)(4) of Definition \ref{dfn: adm} still hold and $f(x)+g(x)u$ has countable zeros, the above existence and uniqueness of viscosity solution still follow. However, this discussion is beyond the scope of this paper. \qed
\end{rem}

The following theorem states that exact-PI converges to the true solution to \eqref{E: HJB_optimal}. 

\begin{thm}[Convergence of Successive Approximations of Viscosity Solution]\label{thm: convergence}
For each $i\geq 0$, let $u_i(\cdot)=\kappa_i(\phi(\cdot))$, where $\kappa_i$ is defined in \eqref{eq:pi_improve}.
Suppose that $u_0\in\uu$, then,
\begin{itemize}
\item[(1)] $u_i\in\uu$ for all $i\in\{0, 1, \cdots\}$. 
    \item[(2)] $V^*\leq V_{i+1}\leq V_i$ for all $x\in\Omega$ and for all $i\in\{0,1,\cdots\}$, where $V_i$ is the viscosity solution to $G_i(x, \kappa_i(x), DV_i(x))=0$ and $V^*$ is the viscosity solution to \eqref{E: HJB_optimal}. 
    \item[(3)] $V_i\ra V^*$ uniformly on $\Omega$ as $i\ra\infty$ given the compactness of $\Omega$. 
\end{itemize}
\end{thm}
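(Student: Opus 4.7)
The plan is to handle the three claims in order: induction plus a Lyapunov identity for (1) and (2), and half-relaxed limits plus uniqueness for (3). The key algebraic step, obtained by substituting $\kappa_{i+1}=-\frac{1}{2}R^{-1}g^T(DV_i)^T$ into $G_i(x,\kappa_i,DV_i)=0$ and completing the square in the control, is the identity
\begin{equation*}
DV_i(x)(f(x)+g(x)\kappa_{i+1}(x))=-Q(x)-(\kappa_i-\kappa_{i+1})^T R(\kappa_i-\kappa_{i+1})-\kappa_{i+1}^T R\kappa_{i+1},
\end{equation*}
valid on $\Omega\setminus\{\zero\}$ wherever $V_i$ is classically differentiable. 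For (1) I induct on $i$: given $u_i\in\uu$, Corollary \ref{cor: iteration} produces a unique positive definite viscosity solution $V_i\in C(\Omega)\cap C^1(\Omega\setminus\{\zero\})$, so $\kappa_{i+1}$ is locally Lipschitz away from the origin, and the identity exhibits $V_i$ as a strict Lyapunov function for the closed loop under $\kappa_{i+1}$, delivering global attraction on $\Omega$ and finiteness of $J(x,u_{i+1})$ by direct integration. Lipschitz continuity of $\kappa_{i+1}$ at $\zero$ requires a short separate argument that exploits the quadratic leading order of $V_i$ near the origin inherited from the Riccati linearization of the closed loop.

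For (2), integrating the identity along the trajectory $\phi(\cdot;x,u_{i+1})$ from $0$ to $\infty$, using $\phi(t;x,u_{i+1})\to\zero$ and $V_i(\zero)=0$, yields
\begin{equation*}
V_i(x)=J(x,u_{i+1})+\int_0^\infty (\kappa_i-\kappa_{i+1})^T R(\kappa_i-\kappa_{i+1})(\phi(s;x,u_{i+1}))\,ds.
\end{equation*}
By Theorem \ref{thm: optimal_feedback} applied with $V_{i+1}$ and $\kappa_{i+1}$, the first term equals $V_{i+1}(x)$, and the second is nonnegative, giving $V_{i+1}\le V_i$. Since $u_{i+1}\in\uu$ by (1), the defining infimum for $V^*$ immediately gives $V^*(x)\le J(x,u_{i+1})=V_{i+1}(x)$.

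For (3), the bounded monotone sequence $V_i\searrow V_\infty$ converges pointwise on $\Omega$ with $V^*\le V_\infty$. The strategy is to show that $V_\infty$ is itself a viscosity solution of the HJB $H(x,DV)=0$; Proposition \ref{prop: uniqueness_H} then forces $V_\infty=V^*$, and since $V^*$ is continuous on the compact $\Omega$, Dini's theorem upgrades the monotone pointwise convergence to uniform convergence. The supersolution half is essentially automatic: because $G_i(x,\kappa_i(x),p)\ge \inf_\kappa G(x,\kappa,p)=-H(x,p)$, every $V_i$ is a viscosity supersolution of $H=0$, so the half-relaxed liminf $\underline V(x):=\liminf_{y\to x,\,j\to\infty}V_j(y)$ inherits this property by standard stability of viscosity supersolutions.

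The main obstacle is the matching subsolution bound on the half-relaxed limsup $\overline V$. The PDE family is nonstationary because $\kappa_i=-\frac{1}{2}R^{-1}g^T(DV_{i-1})^T$ depends on the gradient of the previous iterate, and pointwise monotone convergence of $V_{i-1}$ does not by itself entail convergence of $DV_{i-1}$. My plan is to leverage the greedy structure of the improvement step: at a local maximum $x_0$ of $V_\infty-\varphi$, the viscosity subsolution inequalities of $V_i$ at nearby maxima $x_i$ of $V_i-\varphi$ read $G(x_i,\kappa_i(x_i),D\varphi(x_i))\le 0$, and the monotonicity of $\{V_j\}$ together with the touching-point construction should let me control the gap between $DV_{i-1}(x_i)$ and $D\varphi(x_i)$ enough to pass to $H(x_0,D\varphi(x_0))\le 0$ in the limit. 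Once the subsolution property of $\overline V$ is in hand, the standard chain $\overline V\le V^*\le\underline V\le\overline V$ collapses to $V_\infty=V^*$, and Dini's theorem finishes the uniform convergence claim.
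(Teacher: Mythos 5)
Your parts (1) and (2) are essentially sound and close in spirit to the paper's argument: the completed-square identity $DV_i(f+g\kappa_{i+1})=-Q-(\kappa_i-\kappa_{i+1})^TR(\kappa_i-\kappa_{i+1})-\kappa_{i+1}^TR\kappa_{i+1}$ is correct on $\Omega\setminus\{\zero\}$, and integrating it along closed-loop trajectories (which avoid the origin in finite time, so classical differentiation applies) gives admissibility, $V_{i+1}=J(\cdot,u_{i+1})\le V_i$, and $V^*\le V_{i+1}$; the paper instead runs the same estimate through Dini derivatives and Theorem \ref{thm: dini_directional}, which is a cosmetic difference. (Both you and the paper are brief about Lipschitz continuity of $\kappa_{i+1}$ at the origin; your appeal to a ``quadratic leading order inherited from the Riccati linearization'' is not justified in the stated generality, but this is a secondary issue.)

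The genuine gap is the subsolution half of (3), which you yourself flag as the main obstacle but only sketch. Showing $H(x_0,D\varphi(x_0))\le 0$ at a maximum point of $V_\infty-\varphi$ requires relating the control $\kappa_i(x_i)=-\tfrac12R^{-1}g^T(DV_{i-1}(x_i))^T$ appearing in the $i$-th subsolution inequality to the maximizing control $-\tfrac12R^{-1}g^T(D\varphi(x_0))^T$, i.e.\ it requires some convergence of the gradients $DV_{i-1}$ at the touching points. Monotone pointwise (or even uniform) convergence of $\{V_j\}$ gives no such control, and the touching-point construction constrains $D\varphi$, not $DV_{i-1}$; so the step ``should let me control the gap'' is precisely the unproven crux, and without it the chain $\overline V\le V^*\le\underline V$ does not close. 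The paper resolves this differently: it first shows $\{V_i\}$ is equi-Lipschitz (hence $V_\infty$ is Lipschitz and, by Dini, the convergence is uniform), argues that $DV_i\to DV_\infty$ almost everywhere so that $DV_\infty$ satisfies \eqref{E: HJB_optimal} a.e.\ on $\Omega\setminus\{\zero\}$, and then invokes the convexity of $p\mapsto H(x,p)$ together with the result that a locally Lipschitz a.e.\ solution of a convex Hamilton--Jacobi equation is a viscosity solution \citep[Proposition 5.2, Chapter II, ][]{bardi1997optimal}; uniqueness (Proposition \ref{prop: uniqueness_H}) then yields $V_\infty=V^*$. To repair your proof you would need either to adopt such an a.e.-gradient/convexity argument (which itself needs the equi-Lipschitz bound and a justification of a.e.\ convergence of $DV_i$) or to supply an actual proof of the gradient control at touching points; as written, the half-relaxed-limit plan does not go through.
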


\subsection{Convergence analysis for policy iteration using neural approximations}\label{sec: conv_PINN}

The main idea is to formalize properties of the loss function that capture the desired convergence of neural approximations to true solutions, which in this context are the viscosity solutions to the GHJB and HJB equations. We expect that when the training error (or the loss function in \eqref{eq:lsqloss}  or 
\eqref{eq:NNloss}) is small,  the generalization error is also small. In practice, this requires that the number of collocation points chosen from $\Omega$, at which the residual of each GHJB $G_i$ is evaluated, be sufficiently large.  

However, based on Corollary \ref{cor: iteration}, the viscosity solution for each iteration does not exhibit uniform differentiability across the entire domain of $\Omega$. In addition, most convergence results for data-driven methods are typically based on a compact subset of the state space. %
Therefore, direct consideration of $C^1$-uniform convergence on $\Omega \setminus\{\zero\}$ is not feasible. Instead, we achieve the $C^1$-uniform convergence on $\Omega\setminus U_\eps$ and a weaker (asymptotic)  convergence on $U_\eps$, where $U_\eps$ is some open set centered at $\zero$ of arbitrarily small radius $\eps>0$.

To circumvent complex notation, let us consider the general case for any GHJB $G(x,\kappa(x), DV(x))=0$ with admissible $\kappa$ as in Proposition \ref{prop: uniqueness_GHJB} to illustrate the idea.
Focusing on $\Omega\setminus U_\eps$, we consider the space of continuously differentiable functions  $\mathcal{G}=C^1(\Omega\setminus U_\eps,\Real)$  equipped with the $C^1$-uniform norm, $\norm{V}_{C^1} := \sup_{x\in \Omega\setminus U_\eps} \norm{V(x)} + \sup_{x\in \Omega\setminus U_\eps}\norm{DV(x)}$. We consider a training error  
$E_{T, N}:\, \mathcal{G}\to [0,\infty)$ of the following form (e.g. the loss function in \eqref{eq:lsqloss} and \eqref{eq:NNloss}), 
 \begin{equation*}
     \begin{split}
         E_{T,N}(V) = &\frac{1}{N} \sum_{k=1}^N{\norm{G(x_k, \kappa(x_k), DV(x_k))}^2}
 		+\norm{V(\zero)}^2, 
     \end{split}
 \end{equation*}
where $N\in\N$ is associated with the number of collocation points chosen from $\Omega$.  We seek approximations $\{\hv_{ N} \}_{N\in \N}\subseteq \mathcal F$ of the unique viscosity solution $V$ in some function space $\mathcal{F}\subseteq \mathcal{G}$, for instance, the space of functions representable by a one hidden-layer network. Then, we aim to determine whether   $E_{T,N}(\hv_{ N})\ra 0$ implies $\hv_{ N} \ra V\text{ in } \mathcal G$.

Continuing the above settings, in the following proposition, %
we state that by incorporating additional assumptions, a convergence result %
can be obtained on $\Omega\setminus U_\eps$.
\begin{hyp}\label{hyp: lip}
For any Lipschitz continuous function $h$ and its smooth neural approximations $\{\hat{h}_N\}_{N\in\N}$, the Lipschitz constant of $\hat{h}_N$ converges to the true Lipschitz constant as  $\frac{1}{N} \sum_{k=1}^N{\norm{\hat{h}(x_k))}^2}$ converges to $0$. 
\end{hyp}

\begin{rem}
This phenomenon has been thoroughly investigated by \cite{khromov2023some}. For low-dimensional systems, it is possible to also directly penalize the Lipschitz constant of the residual and achieve higher accuracy (see the proof of Proposition \ref{cor: pinn} for details). In contrast, when the Lipschitz constant of the residual is difficult to verify, it is reasonable to assume Hypothesis \ref{hyp: lip}. \qed
\end{rem}

\begin{prop}\label{cor: pinn}
Let $\eps>0$ any arbitrarily small number and  $U_\eps$ be an open set centered at $\zero$ of radius $\eps$.
Let $\mathcal F \subseteq \mathcal{G}$ be a subspace with uniformly bounded Lipschitz constant on $\Omega\setminus U_\eps$. Suppose that $\set{x_k}_{k\in\N} $ is a sequence dense on $\Omega\setminus U_\eps$ with the additional requirement that, for all $N\in \N$,
 		$\delta_N= \inf\set{\delta >0: \Omega\setminus U_\eps\subseteq \bigcup_{k=1}^N\B[\Omega\setminus U_\eps]{\delta}{x_k}}$$ \text{ and } $$ C=\sup\set{ N\mu(\B[\Omega\setminus U_\eps]{\delta_n}{x_1}) : N\in \N} <\infty
 		$
		where $\mu$ is the Lebesgue measure and $\B[\Omega\setminus U_\eps]{\delta}{x}$ is the open ball of radius $\delta >0$ centered at $x$ in $\Omega\setminus U_\eps$. 

Suppose that Hypothesis \ref{hyp: lip} holds and the training error $E_{T,N}(\hv_N)$ can be arbitrarily small for sufficiently large $N$. 
Then,  the neural network $\hv_N \ra V$ in $\mathcal{G}$. 

\end{prop}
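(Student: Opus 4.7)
The plan is to convert empirical smallness of the residual at the collocation points into $C^1$-uniform closeness of $\hv_N$ to the unique viscosity solution $V$ on $\Omega\setminus U_\eps$. I would carry this out in three steps: (i) a covering step ensuring $\delta_N\ra 0$; (ii) a uniform-smallness step for the residual $\rr_N(x):=G(x,\kappa(x), D\hv_N(x))$, driven by Hypothesis \ref{hyp: lip}; and (iii) a transport-equation step along closed-loop trajectories that turns residual smallness into $C^1$-closeness of $\hv_N$ and $V$.

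For step (i), the measure assumption $N\mu(\B[\Omega\setminus U_\eps]{\delta_N}{x_1})\le C$ combined with the standard lower bound $\mu(B_\delta)\gtrsim \delta^n$ on Euclidean balls forces $\delta_N = O(N^{-1/n})\ra 0$. For step (ii), note that by Proposition \ref{prop: uniqueness_GHJB} the true residual $G(x,\kappa(x), DV(x))\equiv 0$ on $\Omega\setminus\{\zero\}$ and hence trivially has Lipschitz constant $0$. The training-error hypothesis $E_{T,N}(\hv_N)\ra 0$ yields both $\frac{1}{N}\sum_{k=1}^N |\rr_N(x_k)|^2\ra 0$ and $\hv_N(\zero)\ra 0$. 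Applying Hypothesis \ref{hyp: lip} with $h\equiv 0$ and $\hat h_N = \rr_N$ then gives $\Lip(\rr_N)\ra 0$. Because $f(\zero)=\zero$, $\kappa(\zero)=\zero$, and $Q(\zero)=0$, the residual vanishes at the origin, $\rr_N(\zero)=0$, and therefore
\[
\|\rr_N\|_{L^\infty(\Omega)}\le \Lip(\rr_N)\cdot\mathrm{diam}(\Omega)\ra 0.
\]

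For step (iii), write $F_\kappa:=f+g\kappa$, fix $x_0\in\Omega\setminus U_\eps$, and track the closed-loop trajectory $\phi(\cdot; x_0)$ under $\kappa$. Since $DV\cdot F_\kappa = -\lk(\cdot,\kappa(\cdot))$ and $D\hv_N\cdot F_\kappa = -\lk(\cdot,\kappa(\cdot)) + \rr_N$, subtracting and integrating along $\phi$ yields
\[
(\hv_N - V)(\phi(T; x_0)) - (\hv_N - V)(x_0) = \int_0^T \rr_N(\phi(s; x_0))\,ds.
\]
By admissibility of $\kappa$, $\phi(T; x_0)\ra \zero$ uniformly in $x_0\in\Omega\setminus U_\eps$, so for any $\eta>0$ one can fix $T$ large enough that $|\phi(T; x_0)|<\eta$ uniformly. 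The uniform Lipschitz bound on $\hv_N$ and $V$ then gives $|(\hv_N - V)(\phi(T; x_0))| \le |\hv_N(\zero)| + C\eta$, while the integral is bounded by $T\|\rr_N\|_{L^\infty(\Omega)}\ra 0$. Letting first $N\ra\infty$ and then $\eta\ra 0$ yields $\hv_N\ra V$ uniformly on $\Omega\setminus U_\eps$.

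For the gradient component of the $C^1$ norm, subtracting the same two equations gives only $(D\hv_N - DV)(x)\cdot F_\kappa(x) = \rr_N(x)\ra 0$ uniformly, which controls a single directional derivative. I would close the argument by equicontinuity: the uniformly bounded Lipschitz structure of $\mathcal F$, together with Hypothesis \ref{hyp: lip}, renders $\{D\hv_N\}$ equicontinuous on $\Omega\setminus U_\eps$, so Arzel\`a-Ascoli extracts from every subsequence a uniformly convergent sub-subsequence. Since $\hv_N\ra V$ uniformly and $DV$ is continuous on $\Omega\setminus U_\eps$ by Corollary \ref{cor: iteration}, the only possible sub-subsequential limit of the gradients is $DV$, and hence the whole sequence $D\hv_N\ra DV$ uniformly. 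The main technical obstacle sits in this last step: the stated hypotheses bound $\Lip(\hv_N)$ and $\Lip(\rr_N)$ but not directly $\Lip(D\hv_N)$, so the equicontinuity step must rely either on extra regularity built into $\mathcal F$ (e.g., neural networks with smooth activations yielding Lipschitz gradients with uniformly bounded constants) or on a bootstrap using the Lipschitz regularity of $\rr_N$ and the linearity of $G$ in $p$.
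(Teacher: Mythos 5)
Your route is genuinely different from the paper's, but it contains a gap that you yourself flag and do not close, and it is the decisive one. Convergence in $\mathcal{G}$ means $\sup_{\Omega\setminus U_\eps}|\hv_N-V|+\sup_{\Omega\setminus U_\eps}|D\hv_N-DV|\to 0$, and your transport argument only controls the directional component $(D\hv_N-DV)\cdot(f+g\kappa)$; the remaining components are deferred to an equicontinuity/Arzel\`a--Ascoli step for $\{D\hv_N\}$ that the stated hypotheses do not support (they bound $\Lip(\hv_N)$ and, via Hypothesis \ref{hyp: lip}, $\Lip(\rr_N)$, but give no modulus of continuity for $D\hv_N$). The paper closes exactly this point by a different mechanism: on $\Omega\setminus U_\eps$ the closed-loop field $f+g\kappa$ is bounded away from zero, so a residual difference controls the corresponding gradient difference in sup norm (the bound \eqref{E: b2} in its Step 1); combined with Step 2, which uses the covering/measure assumption on $\{x_k\}$ to turn the empirical training error into an $L^2$ bound on $\rr_N$ (this is where that assumption actually enters -- in your proposal it is essentially unused), and with \eqref{E: b3}, this makes $\{\hv_N\}$ a Cauchy sequence in the $C^1$ norm; the limit solves the GHJB and is identified with $V$ by the uniqueness in Proposition \ref{prop: uniqueness_GHJB}. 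No equicontinuity of gradients and no trajectory integration are needed.

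A second, structural problem is that your sup-norm argument repeatedly leaves the region where the hypotheses hold. The anchor $\rr_N(\zero)=0$, the bound $\|\rr_N\|_{L^\infty(\Omega)}\le\Lip(\rr_N)\,\mathrm{diam}(\Omega)$, and the estimate $|(\hv_N-V)(\phi(T;x_0))|\le|\hv_N(\zero)|+C\eta$ all require Lipschitz control of $\rr_N$ and $\hv_N$ inside $U_\eps$ and at the origin, whereas $\mathcal{F}\subseteq\mathcal{G}=C^1(\Omega\setminus U_\eps,\Real)$ and both the uniform Lipschitz bound and Hypothesis \ref{hyp: lip} are posited only on $\Omega\setminus U_\eps$; moreover the closed-loop trajectory must enter $U_\eps$ before $\hv_N-V$ becomes small, and stopping it at the boundary of $U_\eps$ destroys the anchoring. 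The first defect is easily repaired (anchor at a collocation point whose residual is at most the empirical mean), but the second is intrinsic to the trajectory route under these hypotheses; the paper avoids it by arguing algebraically on $\Omega\setminus U_\eps$ and deferring all control inside $U_\eps$ to the patching argument of Theorem \ref{thm: conv}, where smallness comes from the $|x|\le\eps$ factors.
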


\begin{rem}\label{rem: pinn_conv}
The additional requirement on  $\{x_k\}_{k\in\N}$ indicates that the smallest volume of the ``finite coverings'' on $\Omega\setminus U_\eps$ is finite. %
As a technical matter, instead of using the dense set $\{x_k\}$, we can use a sequence of finite sets $A_N:=\{x_k\}_{k=1}^N$ (as in Algorithm \ref{alg:pinn-pi}) that is ``eventually dense" in $\R^n$, i.e. $A_N\to X$ in the Hausdorff metric. %
This would allow one to use new training points,  as long as the finite sets are good approximations of  $\R^n$. \qed
\end{rem}
By patching the above sound approximation on $\Omega\setminus U_\eps$ for any small $\eps>0$ and the asymptotic approximation within $U_\eps$, one can obtain the following convergence guarantee. 
\begin{thm}\label{thm: conv}
    Given $\kappa_0$, let $\{V_i\}$ and $\{\kappa_{i+1}\}$ be updated by exact-PI. 
Let $\{\hv_i\}$ and $\{\hk_{i+1}\}$ be updated by PINN-PI with $\hk_0=\kappa_0$. Let the conditions in Proposition \ref{cor: pinn} be held. Then, for any $i\geq 0$ and $\tta>0$, we can choose a sufficiently dense set of collocation points $\{x_k\}_{k=1}^N$ such that
$$|\hv_i(x)-V_i(x)|\leq \tta, \;|\hk_{i+1}(x)-\kappa_{i+1}(x)|\leq \tta, \;\;x\in\Omega.$$
\end{thm}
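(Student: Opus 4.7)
The plan is to proceed by induction on the iteration index $i$, using Proposition \ref{cor: pinn} to control the error of each policy evaluation and then propagating the error through policy improvement. The base case is trivial since $\hat \kappa_0 = \kappa_0$ by assumption. For the inductive step, fix $\theta > 0$ and suppose $|\hat \kappa_i - \kappa_i|\le \eta$ uniformly on $\Omega$ for a small $\eta$ that we will choose at the end. The exact-PI iterate $V_i$ is the unique viscosity solution of $G_i(x,\kappa_i(x),DV_i(x))=0$, whereas $\hat V_i$ is trained against the residual of $G_i(x,\hat\kappa_i(x),D\hat V_i(x))$. These are two \emph{different} GHJBs, so the first step is an auxiliary continuous-dependence argument: let $\tilde V_i$ be the exact viscosity solution of the GHJB driven by $\hat\kappa_i$, and show $\|V_i - \tilde V_i\|_{C^1(\Omega\setminus U_\eps)} \to 0$ as $\eta\to 0$. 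This follows by feeding both solutions through the value-function representation along trajectories (using Proposition \ref{prop: uniqueness_GHJB} and the dynamic programming style integral formula), noting that the costs and closed-loop vector fields depend Lipschitz-continuously on $\kappa_i,\hat\kappa_i$ on trajectories that remain in $\Omega$ for long times before entering $U_\eps$.

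Next, apply Proposition \ref{cor: pinn} with the exact viscosity target $\tilde V_i$ in $\mathcal G = C^1(\Omega \setminus U_\eps,\R)$: by picking a sufficiently dense collocation set $\{x_k\}_{k=1}^N$ satisfying the covering hypothesis and driving the training error to zero, we obtain $\|\hat V_i - \tilde V_i\|_{C^1(\Omega\setminus U_\eps)}\le \theta/4$. Combining with the previous step, $\|\hat V_i - V_i\|_{C^1(\Omega\setminus U_\eps)}\le \theta/2$. To extend to all of $\Omega$, patch in the behaviour on $U_\eps$: $V_i$ is continuous at $\zero$ with $V_i(\zero)=0$ (Corollary \ref{cor: iteration}), and $\hat V_i$ satisfies $|\hat V_i(\zero)|\to 0$ by construction of the loss (the boundary penalty term $\|V(\zero)\|^2$) while the Lipschitz bound on $\mathcal F$ from Hypothesis \ref{hyp: lip} controls the oscillation of both functions inside $U_\eps$. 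Choosing $\eps$ small enough that this oscillation is at most $\theta/4$, we obtain $|\hat V_i - V_i|\le \theta$ pointwise on $\Omega$.

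For the controller approximation, use the explicit formula $\kappa_{i+1}(x) = -\tfrac{1}{2}R^{-1}g^T(x)(DV_i(x))^T$ and similarly for $\hat\kappa_{i+1}$. On $\Omega\setminus U_\eps$, the gradient estimate $\|D\hat V_i - DV_i\|_\infty \le \theta/2$ transfers directly through the local Lipschitz constant of $-\tfrac12 R^{-1} g^T(\cdot)$, giving $|\hat\kappa_{i+1}-\kappa_{i+1}|\le C\theta$ for some constant depending only on $g$ and $R$. On $U_\eps$, both $\kappa_{i+1}$ and $\hat\kappa_{i+1}$ vanish at $\zero$ and are locally Lipschitz; the uniform Lipschitz bound from Hypothesis \ref{hyp: lip} again makes their difference $O(\eps)$. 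Absorbing constants and shrinking $\eta,\eps,\theta/4$ accordingly closes the inductive step.

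The main obstacle will be the continuous-dependence estimate in the first paragraph: the GHJB viscosity solution depends on the feedback $\kappa_i$ both through the running cost $\kappa_i^T R\kappa_i$ and through the closed-loop dynamics $f+g\kappa_i$, and we are forced to control this dependence in the $C^1$ norm on $\Omega\setminus U_\eps$, not merely in $C^0$. The cleanest route I see is to use the trajectory representation $\tilde V_i(x)=\int_0^\infty L(\phi,\hat\kappa_i(\phi))\,ds$ together with differentiability-in-initial-condition of the ODE flow to estimate $D\tilde V_i - DV_i$, exploiting the fact that stabilizing trajectories decay exponentially so that the tail of the integral is uniformly small away from $U_\eps$. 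A secondary technical issue is ensuring that $\hat\kappa_i$ remains admissible at each stage (so that $\tilde V_i$ is even well defined); this is where the gain-matching loss term \eqref{eq:loss_gain_match} enters to guarantee local exponential stability, and one combines it with $C^1$ closeness on $\Omega\setminus U_\eps$ to conclude global admissibility on $\Omega$ for $\eta$ small enough.
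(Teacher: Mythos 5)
Your overall architecture coincides with the paper's: introduce the intermediate exact solution $\tV_i$ of the GHJB driven by $\hk_i$, invoke Proposition \ref{cor: pinn} to get $C^1$-closeness of $\hv_i$ to $\tV_i$ on $\Omega\setminus U_\eps$, patch across $U_\eps$ using that $\hv_i,\tV_i,\hk_{i+1},\tk_{i+1}$ all vanish at $\zero$ together with bounds on their derivatives, and propagate the error across iterations by induction; this is precisely the role played by Lemma \ref{lem: ideal} and Remark \ref{rem: conv_require} in the paper. Where you diverge is in the one step you yourself flag as the main obstacle: the continuous dependence of the GHJB solution on the feedback, i.e. showing that $V_i-\tV_i$ is small (and its gradient small away from the origin) when $\hk_i$ is uniformly close to $\kappa_i$. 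The paper does this without trajectories: it sets $W_i=V_i-\tV_i$, subtracts the two GHJBs to obtain the equation satisfied by $W_i$, dominates it by an intermediate Hamiltonian $F_i(x,DW_i)$ that is convex in the gradient variable, and then applies the stability result for viscosity solutions \citep[Proposition 2.2, Chapter II,][]{bardi1997optimal} to conclude $W_i\to 0$ uniformly on $\Omega$ and $DW_i\to 0$ uniformly on $\Omega\setminus\{\zero\}$, hence $\tk_{i+1}\to\kappa_{i+1}$.

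Your proposed route for that step --- representing $V_i$ and $\tV_i$ as infinite-horizon integrals along closed-loop trajectories and differentiating the flow in the initial condition --- has a genuine gap as sketched: admissibility (Definition \ref{dfn: adm}) only gives asymptotic convergence of trajectories and finite cost, not the exponential decay you invoke, and without an integrable-in-time bound on the solution of the variational equation the differentiation under the integral over $[0,\infty)$ (which you need for the $C^1(\Omega\setminus U_\eps)$ estimate) is not justified. To make it work you would have to add a local exponential stability hypothesis plus a uniform bound on trajectory sensitivities up to the entry time into a neighborhood of the origin. Relatedly, the admissibility of $\hk_i$ itself (so that $\tV_i$ is even well defined) is handled only heuristically via the gain-matching loss; the paper is also brief on this point, but its PDE-stability argument at least dispenses with exponential decay and flow differentiability altogether. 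If you replace your continuous-dependence paragraph with the convex intermediate-Hamiltonian comparison, the remainder of your induction goes through essentially as in the paper.
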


\section{Numerical experiments}
\label{sec:examples}

In this section, we present numerical examples to evaluate the performance of the proposed algorithms. We aim to accomplish three goals: 1) Evaluate the performance characteristics of ELM-PI and PINN-PI, ranging from low to high-dimensional systems; 2) Compare with approaches in classical control literature and demonstrate the superior performance of ELM-PI in solving low-dimensional systems and highlight the importance of formal verification; 3) Demonstrate the superior capabilities of PINN-PI in solving high-dimensional benchmark control problems and compare it with state-of-the-art model-free and model-based reinforcement learning algorithms.

\subsection{Synthetic $n$-dimensional nonlinear control}

Consider the nonlinear control problem given by $f_i(x)=x_i^3 + u_i$, where $i=1,2,\ldots,n$. The cost is defined by $Q(x)=\sum_{i=1}^n (x_i^2 + 2x_i^4)$ and $R=I_n$. From optimal control theory, the optimal value function is obtained by solving the HJB equation, giving 
$
V^*(x)= \sum_{i=1}^n \left(\frac12 x_i^4 + \frac12  (x_i^2 + 1)^2 - \frac12\right).
$ 
We run ELM-PI and PINN-PI and compare their performance for various dimensions $n$ in terms of computational time and maximum testing error relative to the true optimal value function. The results are summarized in Table \ref{tab:synthetic} in Appendix \ref{sec:appendix:experiments}. 

From the experimental results, we see that for low-dimensional problems ($n\le 3$), ELM-PI outperforms PINN-PI in terms of both computational efficiency and approximation accuracy. As the dimension increases, more computational units ($m$) are required for ELM-PI to achieve a higher accuracy. For example when $n=4$, to achieve $10^{-3}$ accuracy, it requires $m=3200$, and for $10^{-5}$, $m=6400$. The complexity of solving linear least square is $O(mN^2)$, provided that $N>m$. In our experiments, we set $N=d*m$. Hence, the time complexity is $O(d^2m^3)$, which roughly captures the increase in computational time reported in Table \ref{tab:synthetic} as $m$ and $d$ increase.  For $n\ge 5$, it is evident that ELM-PI becomes inefficient. In comparison, PINN-PI can achieve $10^{-2}$ to $10^{-3}$ accuracy across all dimensions within a reasonable amount of computational time. In fact, we were able to obtain $10^{-2}$  error with $m=800$ across all dimensions. The accuracy, however, does not seem to improve significantly as $m$, $N$, or the number of steps increase in training PINN-PI. 

Based on these evaluations, we recommend to use ELM-PI for low-dimensional problems and PINN-PI for high-dimensional problems. 

\subsection{Inverted pendulum and comparison with successive Galerkin approximations}

We run both ELM-PI and PINN-PI to compute the optimal control and policy for the inverted pendulum (see Section \ref{sec:sga} in the Appendix for more details) Figure \ref{fig:pendulum} in Appendix \ref{sec:appendix:experiments} displays the results of implementing ELM-PI on an inverted pendulum. The value functions, projected onto $x_1$, are plotted for each iteration. The left panel represents the scenario with $m=100$ and $N=200$, while the right panel corresponds to $m=50$ and $N=100$. Despite the visual similarity and apparent convergence after five iterations, it is surprising that the controller obtained from $m=50$ does not actually stabilize the system. Conversely, we have verified that the controller derived from $m=100$ is indeed stabilizing using dReal \cite{gao2013dreal}. This example demonstrates the justification for employing formal verification alongside policy iteration to attain both optimality and stability, particularly in safety-critical scenarios.

We also compare ELM-PI and PINN-PI with successive Galerkin approximations for solving GHJB \cite{beard1997galerkin} and provide further verification results. The conclusion is that while successive Galerkin approximations (SGA) can effectively solve low-dimensional problems, ELM-PI is significantly superior in terms of solver time. Furthermore, as demonstrated in the next section, PINN-PI can solve high-dimensional problems that are beyond the reach of SGA. Due to space limitations, detailed results are included in the supplementary material.

\subsection{Comparison with reinforcement learning algorithms}

We compare PINN-PI against well-established reinforcement learning (RL) algorithms, including Proximal Policy Optimization (PPO) \cite{schulman2017proximal}, Hamilton Jacobi Bellman PPO (HJBPPO) \cite{mukherjee2023bridging}, and Continuous Time Model-Based Reinforcement Learning (CT-MBRL) \cite{yildiz2021continuous}. We train each algorithm in benchmark control environments (inverted pendulum, cartpole, 2D quadrotor, and 3D quadrotor) and compare their control costs.

PPO is a model-free actor-critic algorithm that uses a clipped objective function to limit policy updates, ensuring incremental learning steps. It consists of an actor-network $\pi_\theta(a|s)$ that takes the state $s$ as input and outputs a distribution over actions $a$, and a value network $V_\phi(s)$ that takes the state $s$ as input and outputs the expected return. HJBPPO is an extension of PPO that uses the continuous-time HJB equation as a loss function, instead of the discrete-time Bellman optimality equation. CT-MBRL introduces a model-based approach, employing continuous-time dynamics for more precise control and prediction in RL tasks. While we are aware there are other RL algorithms available in the literature, the rationale for choosing these RL algorithms for comparison is given in Section \ref{sec:rationale}. 

The comparison is presented in Appendix \ref{app:comparison}. While the RL algorithms exhibit comparable performance to PI-Policy in the inverted pendulum environment, it is evident that for higher-dimensional environments like the cartpole and quadrotor, the RL algorithms struggle to achieve stability, leading to their accumulated control cost diverging. PINN-PI, on the other hand, demonstrates convergence to equilibrium in less than two seconds, marking a significant improvement. This may appear striking at first, but being able to encode local asymptotic stability as a loss (\ref{eq:loss_gain_match}) in training plays an important role in achieving an asymptotically stabilizing optimal controller, whereas other RL algorithms typically use episodic training over a finite time horizon. This discrepancy explains the superior performance of PINN-PI in problems where asymptotic stability is closely tied to the performance criteria (think of LQR as a special case).

\section{Conclusions}

We propose two algorithms for conducting model-based policy iterations to solve nonlinear optimal control problems. The first algorithm leverages the linearity of the PDE that defines the policy value and utilizes linear least squares to obtain the approximation. This approach proves to be highly efficient and accurate for low-dimensional problems. The second approach employs physics-informed neural networks and demonstrates better scalability for high-dimensional problems. We emphasize the importance of incorporating formal verification on top of policy iterations to achieve both optimality and stability when safety is a concern. We provide theoretical analysis that shows policy iterations (both exact and approximate) converge to the true optimal solutions in general settings. Limitations of this work and potential future work are discussed in Section \ref{sec:future}.

\section{Impact Statements}

This paper presents work whose goal is to advance the field of Machine Learning. There are many potential societal consequences of our work, none which we feel must be specifically highlighted here. 

\bibliography{ref}
\bibliographystyle{icml2024}

\newpage
\appendix
\onecolumn

\section{Basic properties of viscosity solutions}\label{app: vis} 

The definition of viscosity solutions is given below. 

\begin{dfn}\label{dfn: vis1}
Define the superdifferential and the subdifferential sets of $V$ at $x$ respectively as
\begin{subequations}
\begin{align}
& \partial^+V(x) \nonumber\\
=&\left\{p\in \R^n: \;\limsup_{y\ra x}\frac{V(y)-V(x)-p\cdot (y-x)}{|y-x|}\leq 0\right\}\label{E: compare1a},\\
& \partial^-V(x) \nonumber\\
=&\left\{q\in \R^n: \;\liminf_{y\ra x}\frac{V(y)-V(x)-q\cdot (y-x)}{|y-x|}\geq 0\right\}\label{E: compare1b}.
\end{align}
\end{subequations}
A continuous function $V$ of a PDE of the form $F(x,V(x),DV(x))=0$ (possibly encoded with boundary conditions) is a viscosity solution if the following conditions are satisfied:
\begin{itemize}
    \item[(1)] (viscosity subsolution) $F(x, V(x), p)\leq 0$ for all $x\in\R^n$ and for all $p\in \partial^+V(x)$.
    \item[(2)] (viscosity supersolution) $F(x, V(x), q)\geq 0$ for all $x\in\R^n$ and for all $q\in \partial^-V(x)$.
\end{itemize}
\end{dfn}

The following lemma \citep[Lemma 1.7, Lemma 1.8, Chapter I, ][]{bardi1997optimal} provides some insights on $\partial^+V(x)$ and $\partial^+V(x)$ for some $V\in C(\dd)$. 

\begin{lem}[Sub- and Supperdifferential]\label{lem: facts}
Let $V\in C(\dd)$. Then
\begin{itemize}
    \item[(1)] $p\in \partial^+V(x)$ if and only if there exists $\psi\in C^1(\dd)$ such that $D\psi(x)=p$ and $u-\psi$ has a local maximum at $x$;
    \item[(2)] $q\in \partial^-V(x)$ if and only if there exists $\psi\in C^1(\dd)$ such that $D\psi(x)=q$ and $u-\psi$ has a local minimum at $x$;
    \item[(3)] if for some $x$ both $\partial^+V(x)$ and  $\partial^-V(x)$ are nonempty, then $\partial^+V(x)=\partial^-V(x)=\{DV(x)\}$;
    \item[(4)] the sets  $\{x\in\dd: \partial^+V(x)\neq \emptyset\}$ and $\{x\in\dd: \partial^-V(x)\neq \emptyset\}$ are dense. 
\end{itemize}
\end{lem}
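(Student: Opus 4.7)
The plan is to prove (1) and (2) together (they are duals under $V \mapsto -V$, $p \mapsto -p$), to deduce (3) as a short direction-by-direction consequence of the two definitions, and to establish (4) via a barrier-function argument that forces an extremum of $V-\psi$ into the interior of a small ball. The main technical step is the construction of a $C^1$ test function in the forward direction of (1); the rest of the proof is essentially calculus.

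For the equivalence in (1), the reverse direction is a one-line Taylor expansion: if $\psi \in C^1(\Omega)$, $D\psi(x) = p$, and $V - \psi$ has a local max at $x$, then $V(y)-V(x) \leq \psi(y)-\psi(x) = p\cdot(y-x) + o(|y-x|)$, so $p \in \partial^+V(x)$. The forward direction is substantive. Given $p \in \partial^+V(x)$, I would define the excess modulus
$$m(r) := \sup\big\{[V(y) - V(x) - p\cdot(y-x)]^+ : y \in \overline{B(x,r)} \cap \Omega\big\},$$
which is nondecreasing and continuous with $m(0)=0$ and $m(r)/r \to 0$ by the superdifferential inequality. I then construct a $C^1$ nondecreasing majorant $\mu \geq m$ with $\mu(0) = \mu'(0) = 0$; the explicit choice $\mu(r) := 2\int_0^{2r} m(s)/s\, ds$ works, since monotonicity of $m$ gives $\mu(r) \geq (2\ln 2)\,m(r) \geq m(r)$ and $\mu'(r) = 2m(2r)/r \to 0$ as $r \downarrow 0$. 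The test function $\psi(y) := V(x) + p\cdot(y-x) + \mu(|y-x|)$ is then $C^1$ (vanishing of $\mu'(0)$ smooths out the corner of $|\cdot|$ at the origin), $D\psi(x) = p$, and $V - \psi \leq 0 = (V-\psi)(x)$ locally. Part (2) follows by applying (1) to $-V$ at $-q$.

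For (3), given $p \in \partial^+V(x)$ and $q \in \partial^-V(x)$, combining the two defining inequalities yields $\limsup_{y \to x} (q-p)\cdot(y-x)/|y-x| \leq 0$. Testing along $y = x + tv$ with $t \downarrow 0$ for each unit direction $v$ forces $(q-p)\cdot v \leq 0$ for every $v$, whence $p = q$. With this common value, both the limsup and the liminf of $[V(y) - V(x) - p\cdot(y-x)]/|y-x|$ equal $0$, so $V$ is Fr\'echet differentiable at $x$ with $DV(x) = p$; and the same argument applied to any other candidate in $\partial^\pm V(x)$ pins it to $DV(x)$.

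For (4), fix $x_0 \in \Omega$ and $\eps > 0$ with $\overline{B(x_0,\eps)} \subset \Omega$. The barrier $\rho(y) := 1/(\eps^2 - |y - x_0|^2)$ is $C^1$ on $B(x_0,\eps)$ and blows up at the boundary; since $V$ is bounded on $\overline{B(x_0,\eps)}$, the function $V - \rho$ attains its supremum at some interior point $x^\star$, and part (1) delivers $D\rho(x^\star) \in \partial^+V(x^\star)$. The symmetric construction with $V + \rho$ and an interior minimum produces a point of $B(x_0,\eps)$ where $\partial^-V$ is nonempty, and arbitrariness of $x_0$ and $\eps$ yields density. The main obstacle throughout is the $C^1$-smoothing step in (1)-forward --- producing a $\mu$ that simultaneously dominates $m$, vanishes to first order at the origin, and lifts to a $C^1$ radial function --- and the averaging formula above (equivalently, mollifying a piecewise-linear upper envelope of $m$ and integrating) bypasses any delicate regularization.
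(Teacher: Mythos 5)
Your proof is correct. The paper itself does not prove this lemma---it simply cites \citet[Lemmas 1.7--1.8, Chapter I]{bardi1997optimal}---and your argument is precisely the standard proof found there: the integral-averaged majorant $\mu(r)=2\int_0^{2r}m(s)/s\,ds$ of the excess modulus for the forward direction of (1), duality under $V\mapsto -V$ for (2), the $\limsup$/$\liminf$ comparison along rays for (3), and the blow-up barrier $\rho(y)=1/(\varepsilon^2-|y-x_0|^2)$ for (4). The only cosmetic point is that your $\psi$ is built on a neighborhood of $x$ rather than on all of $\Omega$; a smooth cutoff extension fixes this and affects nothing.
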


In view of (1) and (2) in Lemma \ref{lem: facts}, (1) and (2) in Definition \ref{dfn: vis1} are equivalent as
\begin{itemize}
    \item[(1)] for any $\psi\in C^1$, if $x$ is a local maximum for $V-\psi$, then $F(x, \psi(x), d\psi(x))\leq 0$;  
    \item[(2)] for any $\psi\in C^1$, if $x$ is a local minimum for $V-\psi$, then $F(x, \psi(x), d\psi(x))\geq 0$. 
\end{itemize}

\begin{dfn}
Suppose $V$ is locally Lipschitz. Define the classical upper and lower Dini (directional) derivatives, respectively, as
\begin{subequations}
\begin{align}
& D^+V(x; p)=\limsup_{t\ra 0^+}\frac{V(x+tp)-V(x)}{t}\label{E: compare2a},\\
& D^-V(x; q)=\liminf_{t\ra 0^+}\frac{V(x+tq)-V(x)}{t}\label{E: compare2b}.
\end{align}
\end{subequations}
\end{dfn}

The following theorem \citep[Theorem 2.40, Chapter III, ][]{bardi1997optimal} states the equivalence of Dini solutions to viscosity solutions to GHJBs. We rephrase the theorem as follows. We omit the proof due to the similarity. 

\begin{thm}\label{thm: dini_directional}
Suppose $\Omega$ is a bounded open set, and $V\in C(\bar{\Omega})$. For GHJB $G(x, u, DV(x))=0$ with a fixed $u\in\uu$, the following statements are equivalent:
\begin{enumerate}
    \item[(1)] $-G(x, u, p)\leq 0$ for all $x\in\R^n$ and for all $p\in\partial^+V(x)$ (respectively, $\geq 0$);
    \item[(2)] $-L(x,u) - D^+u(x; f(x)+g(x)u)\leq 0$ for all $x\in\R^n$ (respectively, $\geq 0$). 
\end{enumerate}
\end{thm}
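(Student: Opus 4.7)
The plan is to expand the Hamiltonian as $G(x, u, p) = L(x, u) + p \cdot b(x)$ with $b(x) := f(x) + g(x) u$, which is \emph{affine} in $p$. Condition (1) then reads $p \cdot b(x) \geq -L(x,u)$ for every $p \in \partial^+V(x)$, while (2) reads $D^+V(x; b(x)) \geq -L(x,u)$. The whole statement thus reduces to a duality between the superdifferential and the upper Dini derivative along the \emph{specific} direction $b(x)$ (with the supersolution $\geq 0$ version handled symmetrically using $\partial^-V$ and $D^-V$); the affine structure of $G$ is what makes this one-directional duality sufficient, rather than needing a full duality over all directions.

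For the easier direction $(2)\Rightarrow(1)$, fix any $p \in \partial^+V(x)$ and specialise \eqref{E: compare1a} to the sequence $y_t := x + t b(x)$. Since $|y_t - x| = t |b(x)|$ and $p \cdot (y_t - x) = t\, p \cdot b(x)$, dividing by $t$ and taking $\limsup_{t \to 0^+}$ yields the pointwise inequality $D^+V(x; b(x)) \leq p \cdot b(x)$. Combined with (2), this gives $p \cdot b(x) \geq -L(x,u)$, which is exactly (1). The supersolution case replaces $\limsup$ with $\liminf$ and $\partial^+V$ with $\partial^-V$, with no change in structure.

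For the harder direction $(1)\Rightarrow(2)$, I would argue by contradiction: suppose $D^+V(x_0; b(x_0)) < -L(x_0, u) - 3\alpha$ at some $x_0$ for some $\alpha > 0$. By Lemma \ref{lem: facts}(1), to contradict (1) it suffices to produce a $\psi \in C^1$ such that $V - \psi$ has a local maximum at $x_0$ while $D\psi(x_0) \cdot b(x_0) \leq -L(x_0, u) - \alpha$. The construction proceeds via sup-convolution, $V^\sigma(x) := \sup_y \{V(y) - |x - y|^2/\sigma\}$ for small $\sigma > 0$: this regularisation is semiconvex, has nonempty superdifferential everywhere, satisfies $V^\sigma \downarrow V$ uniformly on $\bar\Omega$, and recovers $D^+V$ in the limit along trajectories. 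One selects $p_\sigma \in \partial^+V^\sigma(x_\sigma)$ at a nearby point, converts $p_\sigma$ into a $C^1$ function touching $V^\sigma$ from above via a quadratic correction, and passes to $\sigma \to 0$ to extract the required $\psi$.

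The main obstacle is precisely this last construction: transforming a directional upper bound on $V$ near $x_0$ into an honest $C^1$ function lying above $V$ with a prescribed gradient along $b(x_0)$. Local Lipschitz continuity of $V$ on $\bar\Omega$ controls oscillations in directions orthogonal to $b(x_0)$ (so a quadratic padding $C|z|^2$ dominates there once $C$ is large relative to the Lipschitz constant), while the sup-convolution is the standard device (as in Bardi--Capuzzo~Dolcetta) that bridges a one-sided directional bound and the existence of an admissible superdifferential element. The affine dependence of $G$ on $p$ is critical throughout, since otherwise both (1) and (2) would have to be upgraded to sup/inf over all test directions.
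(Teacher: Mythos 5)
First, a point of reference: the paper does not actually prove Theorem \ref{thm: dini_directional}; it defers to \citep[Theorem 2.40, Chapter III, ][]{bardi1997optimal} and explicitly omits the argument, so your attempt has to be judged against that standard proof. Your direction $(2)\Rightarrow(1)$ is correct and is the standard one: restricting the $\limsup$ in \eqref{E: compare1a} to the ray $y_t=x+t\,b(x)$ gives $D^+V(x;b(x))\le p\cdot b(x)$ for every $p\in\partial^+V(x)$, and the supersolution half is symmetric with $\partial^-V$ and $D^-V$.

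The direction $(1)\Rightarrow(2)$, however, has a genuine gap. Your contradiction strategy requires producing a $C^1$ function $\psi$ such that $V-\psi$ has a local maximum at the \emph{specific} point $x_0$, i.e., an element of $\partial^+V(x_0)$ with a prescribed bound on $D\psi(x_0)\cdot b(x_0)$. But $\partial^+V(x_0)$ can be empty (take $V(x)=\abs{x}$ at $x_0=0$): condition (1) is vacuous at such a point while condition (2) is not, so no construction localized at $x_0$ can close the argument. The supporting claims also do not hold up: a quadratic padding $C\abs{z}^2$ is $o(\abs{z})$ and can never dominate the order-$\abs{z}$ oscillation of a Lipschitz function near the touching point, which is precisely why a $C^1$ function touching $V$ from above at a prescribed point need not exist; and the sup-convolution $V^\sigma$ is semiconvex, so it is its \emph{sub}differential that is nonempty everywhere, and it only yields superdifferential elements of $V$ at nearby points $y_\sigma\ne x_0$ --- relating those to $D^+V(x_0;b(x_0))$ is the entire difficulty, which the sketch leaves unaddressed. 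The proof in the cited reference instead shows that (1) forces the monotonicity $V(y_x(t))-V(x)\ge-\int_0^t L(y_x(s),u)\,ds$ along the flow of $\dot y=f(y)+g(y)u$ for small $t>0$, by applying the viscosity inequality at (almost every) point along the trajectory after regularization and then integrating; condition (2) follows by dividing by $t$ and letting $t\to0^+$. The information at $x_0$ is thus extracted from (1) holding on a whole neighbourhood of the trajectory, not at $x_0$ alone, and this is the step your proposal is missing.
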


\section{Proofs in Section \ref{sec:prob}}\label{app: proof_sec_2}
\textbf{Proof of Proposition \ref{prop: DPP}:}
Note that for all $t>0$ and $u\in\uu$, we have
\begin{equation*}
    \begin{split}
        J(x,u) 
        =&\int_0^t\lk(\phi(s;x,u), u(s))ds+\int_t^\infty \lk(\phi(s;x,u), u(s)) ds\\
        =&\int_0^t\lk(\phi(s;x,u), u(s))ds+\int_0^\infty \lk(\phi(s+t;x,u), u(s+t)) ds\\
        =&\int_0^t\lk(\phi(s;x,u), u(s))ds+J(\phi(t;x,u), u')\\
         \geq &\inf_{u\in\uu}\left\{\int_0^t \lk(\phi(s;x,u))ds + V(\phi(t;x,u))\right\},
    \end{split}
\end{equation*}
where the controller $u'$ is defined as $u'(s):=u(s+t)$ for all $s>0$. 
Taking the infimum over $\uu$, we have 
$$V(x)\geq \inf_{u\in\uu}\left\{\int_0^t \lk(\phi(s;x,u), u(s))ds + V(\phi(t;x,u))\right\}.$$
Not we fix a $u\in\uu$, an $\eps>0$, and choose a $u'\in\uu$  such that
$$V(\phi(t;x,u))\geq J (\phi(t;x,u), u')-\eps.$$
Let the controller $u''$ be such that
$$u''(s)=\left\{\begin{array}{lr} 
u(s),\qquad s\leq t,\\
u'(s-t), \;s>t.
\end{array}\right.  $$
Then, 
\begin{equation*}
    \begin{split}
        V(x) \leq & J(x, u'')\\
         = &\int_0^t \lk (\phi(s;x,u), u(s))ds + \int_t^\infty \lk (\phi(s;x,u''), u''(s))ds\\
          = &\int_0^t \lk (\phi(s;x,u), u(s))ds + \int_0^\infty \lk (\phi(s;\phi(t;x,u),u'), u'(s))ds\\
         \leq &\int_0^t \lk (\phi(s;x,u), u(s))ds + V(\phi(t;x,u)) + \eps.
    \end{split}
\end{equation*}
Since $u$ and $\eps$ are given arbitrarily, by sending $\eps\ra 0$ and taking the infimum over $\uu$,
we have 
$$V(x)\leq  \inf_{u\in\uu}\left\{\int_0^t \lk(\phi(s;x,u), u(s))ds + V(\phi(t;x,u))\right\},$$
which completes the proof. \qed

\textbf{Proof of Proposition \ref{prop: uniqueness_H}}: 
We first show that $V$ is a viscosity solution using the equivalent conditions introduced in Appendix \ref{app: vis}. 
Let $\psi\in C^1$ and $x$ be a local maximum point of $V-\psi$. Then 
$$V(x)-V(z)\geq \psi(x)-\psi(z),\;\;\forall z\in\oball(x,r),$$
where $\oball(x,r)$ denotes the set $\{z\in\R^n: |z-x|<r\}$. 
Consider any constant control signal $u$. For $t$ sufficiently small, we have $\phi(t;x,u)\in\oball(x,r)$. Therefore,

\begin{equation}\label{E: small_t}
    \begin{split}
    &\psi(x)-\psi(\phi(t;x,u))\\
    \leq &V(x)- V(\phi(t;x,u))\\
        \leq &\int_0^t \lk(\phi(s;x,u), u(s))ds + V(\phi(t;x,u)) - V(\phi(t;x,u))
    \end{split}
\end{equation}
where the second line is in virtue of Proposition  \ref{prop: DPP}. Considering the infinitesimal behavior on both sides of \eqref{E: small_t}, we have
$$-D\psi(x)\cdot(f(x)+g(x)u)\leq \lk(x, u),$$
which implies that $H(x,D\psi)\leq  0$.

Now we verify the case when $x$ is a local minimum of $V-\psi$. For each $\eps>0$ and $t>0$, by the second part of Proposition \ref{prop: DPP}, 
there exists a $u''\in\uu$ such that 
\begin{equation*}
    \begin{split}
        V(x) & \geq \int_0^t\lk(\phi(s;x,u''), u''(s))ds + V(\phi(t;x,u))-t\eps\\
        & \geq \int_0^t\lk(x, u'')ds + V(\phi(t;x,u))-t\eps+ \mathcal{O}(t),\\
    \end{split}
\end{equation*}
where the second line is by the Lipschitz continuity of $\lk$ and $\phi$, and  $\mathcal{O}(t)/t\ra 0$ as $t\ra 0$. Therefore, by the local minimum property of $V-\psi$,
\begin{equation*}
\begin{split}
        \psi(x)-\psi(\phi(t;x,u''))&\geq V(x)-V(\phi(t;x,u''))\\
        & \geq \int_0^t\lk(x, u'')ds -t\eps+ \mathcal{O}(t). 
\end{split}
\end{equation*}
 Note that $u''$ is selected based on some arbitrary $\eps$ and $t$. Now one can use a similar infinitesimal argument as the first part and obtain
$H(x, D\psi(x))\geq 0.$

It can be easily verified that $|H(x,0)|\leq M$ for some constant $M$, and, for every $r$, there exists an $L_r$ such that $|H(x,p)-H(y,p)\leq L_r|x-y|(1+|p|)$ for $|x|, |y|\leq r$. 
The uniqueness argument follows \citep[Theorem 1, Section 10.2, ][]{evans2010partial}, \citep[Section VI.3, ][]{bardi1997optimal}, and \citep[Section 3]{camilli2001generalization}.  \qed

\textbf{Proof of Theorem \ref{thm: optimal_feedback}}:
  Let $p_i\in \partial^+V(x_i)$. Then, by Lemma \ref{lem: facts}, there exists
$\psi_i\in C^1$ such that $D\psi_i(x_i)=p_i$,  $V(x_i)=\psi_i(x_i)$, and $V\leq \psi$ in the neighborhood. Since $\{x\in\dd: \;\partial^+(x)\neq \emptyset\}$ is dense, setting $t_0=0$ and $t_k\ra\infty$ as $k\ra\infty$, we can patch up $J$ in the following sense given that $t_i-t_{i-1}>0$ is sufficient small for all $i\in\{1,2,\cdots\}$:
\begin{equation}
    \begin{split}
        J(x, \kappa(\phi(\cdot))) 
         = &\lim_{k\ra\infty}\left\{\sum_{i=0}^k \int_{t_i}^{t_k} \lk (\phi(s; \phi(t_i;x,\kappa), \kappa(\phi(s))ds 
         + J(\phi(t_k;x, \kappa),\kappa)\right\}\\
         \geq &\psi(x)-\lim_{k\ra\infty}\psi(\phi(t_k;x,\kappa)
         \geq  V(x). 
    \end{split}
\end{equation}
The opposite inequality can be shown in a similar manner. \qed

\section{Proofs in Section \ref{sec:convergence}}\label{app: proof_sec_4}
\subsection{Results in Section \ref{sec: conv_exact}}
\textbf{Proof of Proposition \ref{prop: uniqueness_GHJB}}: 
The existence and uniqueness of $\phi$ follows by the basic assumptions on \eqref{E:sys}. Now we define 
\begin{equation}
    V(x)=\int_0^\infty L(\phi(s; x, u), u(\phi(s; x, u))ds.
\end{equation}
Then $V$ is positive definite, and, for all $t>0$, we have 
\begin{equation*}
    \begin{split}
        V(x) = &  \int_0^t L(\phi(s; x, u), u(\phi(s; x, u))ds  + \int_t^\infty L(\phi(s; x, u), u(\phi(s; x, u))ds\\
         = & \int_0^t L(\phi(s; x, u), u(\phi(s; x, u))ds  + \int_0^\infty L(\phi(s; \phi(t; x, u), u), u(\phi(s; \phi(t; x, u), u))ds\\
         = & \int_0^t L(\phi(s; x, u), u(\phi(s))ds + V(\phi(t; x, u)).
    \end{split}
\end{equation*}
To verify $V$ is a viscosity solution, we use a similar method as in the proof of Proposition \ref{prop: uniqueness_H}. Let $\psi\in C^1$ and $x$ be a local maximum point of $V-\psi$. Then, 
for $t$ sufficiently small, we have $\phi(t;x,u)\in\oball(x,r)$. Therefore,

\begin{equation}\label{E: small_t2}
    \begin{split}
    &\psi(x)-\psi(\phi(t;x,u))\\
    \leq &V(x)- V(\phi(t;x,u))\\
        =& \int_0^t L(\phi(s; x, u), u(\phi(s))ds + V(\phi(t; x, u))- V(\phi(t;x,u)).
    \end{split}
\end{equation}
 Considering the infinitesimal behavior on both sides of \eqref{E: small_t2}, we have
$$-D\psi(x)\cdot(f(x)+g(x)u)\leq \lk(x, u),$$
which implies that $G(x, u, D\psi)\leq  0$. The other side of the comparison falls in the same procedure. 

To validate the uniqueness, we notice that for a stabilizing state feedback control $u=\kappa(x)$, the Lipschitz continuous function $f(x)+g(x)\kappa(x)$ can only have a zero at $\zero$. Given that $V(0)=0$, and suppose that $\Omega\subseteq\R$, the uniqueness is followed by \citep[Proposition 2]{liu2023physics} and \citep[Theorem 19]{meng2023learning}. In addition, we notice that the quantity $DV(x)=-L(x,\kappa(x))/(f(x)+g(x)\kappa(x))$ is differentiable continuous solution other than $0$. For higher dimensional case, we address the problem using the well-known method of characteristics. The solvability of $C^1$ solution depends on the non-singularity of the Jacobian matrix, 
which is only problematic at $\zero$. By the continuity of viscosity solution, $V$ is uniquely defined on $\Omega$. \qed

\begin{rem}

During the policy iteration, we cannot guarantee an everywhere $C^1$ property of the solutions to GHJBs. 
Revisiting the example in Section \ref{sec: exact_PI}, the corresponding GHJB is given by $x^2+ DV(x)\cdot (xu)=0$. One can simply initialize with an  admissible controller $u=\kappa_0(x)=-\frac{1}{2}|x|$. Then,  $V_0(x)=2|x|$ uniquely solves the GHJB only in the viscosity sense, which fails to be everywhere $C^1$ in the first iteration. 
In view of the last part of the above proof, the non-differentiable points are only decided by the zeros of $f(x)+g(x)\kappa(x)$. In addition, to resolve Remark \ref{rem: zeros}, one can simply follow the exact argument. 

\end{rem}

\textbf{Proof of Theorem \ref{thm: convergence}}: 
\begin{enumerate}
\item[(1)]By Proposition \ref{prop: uniqueness_GHJB}, the function $V\in C^1(\Omega\setminus\zero)\cap C(\Omega)$. Apart from $\zero$, the exact-PI provides an admissible controller for each iteration, which follows the exact procedure as in \citep[Lemma 5.2.4, ][]{beard1995improving}. At $\zero$, by exact-PI, the state feedback controller returns a $\zero$. In addition, the upper and lower Dini derivatives $D^+V(x;p)$ and $D^-V(x;p)$ for any $p$ exist and are bounded. The Lipschitz continuity of $u_i$ at $\zero$ also follows by the definition.
    \item[(2)] Note that, at differentiable points, the proof follows exactly as \citep[Theorem 3.1.4, ][]{jiang2017robust}. However, in line with the concept of viscosity solutions, we provide a general proof. To begin with,  along the trajectory subject to the controller $u_{i+1}$, we have, for each $x$, 
    \begin{equation}\label{E: dini_ineq}
    \begin{split}
               &V_{i+1}-V_i \\
               \leq &\int_0^\infty D^+(V_{i+1}-V_i)(\phi(s); f(\phi(s))+g(\phi(s))u_{i+1})ds\\
                \leq &\int_0^\infty (D^+V_{i+1}-D^-V_i)(\phi(s); f(\phi(s))ds+\int_0^\infty g(\phi(s))u_{i+1})ds,
    \end{split}
    \end{equation}
    where the existence of Dini derivatives are granted by the Lipschitz continuity of $V_i$ and $V_{i+1}$. 
    On the other hand,  $V_i$ and $V_{i+1}$ are, respectively, the unique viscosity solution to $G(x, u_i, DV_i)$ and $G(x, u_{i+1}, DV_{i+1})=0$.  By Theorem \ref{thm: dini_directional}, and plugging the directional Dini derivative $D^+V_{i+1}(\phi(s); f(\phi(s))+g(\phi(s))u_{i+1})$ and $D^-V_i(\phi(s); f(\phi(s))+g(\phi(s))u_{i}$ into \eqref{E: dini_ineq}, one can obtain that 
    \begin{equation}
        \begin{split}
            & V_{i+1}(x)-V_i(x)\\
            \leq &-\int_0^\infty \normr{u_i}^2+\normr{u_{i+1}}^2-2u_{i+1}^TRu_i ds\\
            \leq & -\int_0^\infty \normr{u_{i+1}-u_i}^2 ds\leq 0.
        \end{split}
    \end{equation}
    
    \item[(3)] It is a well-known result that a monotonic sequence of functions $\{V_i\}_{i\geq 0}$ that is bounded from below converges pointwise to a function $V_\infty$. In view of Dini's theorem (see also the proof of \citep[Theorem 5.3.1, ][]{beard1995improving}), the sequence also converges uniformly provided that $\Omega$ is compact. In addition, it can be verified that $\{V_i\}$ has a uniformly bounded Lipschitz constant on $\Omega$ and forms a compact subspace in $C(\Omega)$, which implies the Lipschitz continuity of $V_\infty$. The pointwise convergence of $\{u_i\}_{i\geq 0}$ also follows \citep[Theorem 3.1.4, ][]{jiang2017robust}. 
    
    It suffices to show that $V_\infty$ is a viscosity solution to \eqref{E: HJB_optimal}. On %
    $\Omega\setminus\{\zero\}$, $DV_\infty$ exists uniquely almost everywhere (a.e.) and is Lebesgue integrable. Based on the definition of $G_i$ as well as the pointwise convergence of $\{u_i\}$, we have that $DV_i$ is Lebesgue integrable for each $i$ and converges pointwise. Combining the fact that $V_i\ra V_\infty$ uniformly, we have  $\lim_{i\ra\infty}DV_i = DV_\infty$ a.e. by Radon-Nikodym theorem. However, it can be verified that $\lim_{i\ra\infty}DV_i$ solves \eqref{E: HJB_optimal} pointwise on $\Omega\setminus\{\zero\}$. It follows that $DV_\infty$ solves \eqref{E: HJB_optimal} a.e. on $\Omega\setminus\{\zero\}$. %
    At $\zero$,  we have that $V_\infty(\zero)=\lim_{i\ra\infty}V_i(\zero)=0$, which may not be differentiable. However, it is clear that $H(x,p)$ is convex in $p$ for each fixed $x$, and $H(x, DV_\infty(x))=0$ a.e. on $\Omega$.   By \citep[Proposition 5.2, Chapter II, ][]{bardi1997optimal}, $V_\infty$ is a viscosity solution on $\Omega$. In virtue of Proposition \ref{prop: uniqueness_H}, $V_\infty$ should also be the unique viscosity solution. Therefore, $V_\infty=V^*$. \qed
\end{enumerate}

\begin{rem}
It is worth noting that  (3) of  \citep[Theorem 3.1.4, ][]{jiang2017robust} is based on the  assumptions  $V^*, V_\infty\in C^1(\Omega)$. However, both assumptions are not necessarily guaranteed. As pointed out in Section \ref{sec: exact_PI}, the HJB 
$-x^2+\frac{1}{4}(DV(x))^2x^2=0$ has a unique viscosity solution $V^*(x)=2|x|$, which fails to be differentiable at $0$. As for $V_\infty$, it is the limit of $\{V_i\}$ only w.r.t. the uniform norm rather than the $C^1$-norm. The $C^1$ property of $V_\infty$ on $\Omega\setminus\{\zero\}$ is not even guaranteed.  

The convergence of $DV_i\ra DV_\infty$ (if exists) is also not clear in \citep[Theorem 3.1.4, ][]{jiang2017robust} and \citep[Theorem 5.3.1, ][]{beard1995improving}  relying only on the uniform convergence of $\{V_i\}$. To ensure that $V_\infty$ solves \eqref{E: HJB_optimal}, the work  \citep[Theorem 3.2, ][]{farsi2023reinforcement} made a strong assumption on the uniform convergence of $\{DV_i\}$, which cannot be guaranteed in practice.     In this view, it is necessary to consider the convergence and solutions in the viscosity sense for general cases. The above proof takes advantage of the convexity of $H(x, p)$ in $p$ such that only the property of $\lim_{i\ra\infty}DV_i=DV_\infty$ a.e. is needed. \qed

\end{rem}

\subsection{Results in Section \ref{sec: conv_PINN}}

In order to ensure the coherence of the proofs within this subsection, we recall the following notation. Given $\kappa_0$,  $\{V_i\}$ and $\{\kappa_{i+1}\}$ are updated by exact-PI. In other words, for each $i\geq 0$, $V_i$ is the unique viscosity solution to
\begin{equation}\label{E: exact_proof}
    \begin{split}
       &G_i(x, \kappa_i, DV_i(x)) \\
       =& Q(x)+\normr{\kappa_i(x)}^2+  DV_i(x)\cdot (f(x)+g(x)\kappa_i(x)) \\
        = & 0, 
    \end{split}
\end{equation}
and $\kappa_{i+1}$ is updated by \eqref{eq:pi_improve}. 
In addition,  $\{\hv_i\}$ and $\{\hk_{i+1}\}$ are updated by PINN-PI with $\hk_0=\kappa_0$. 
For each $i\geq 0$, we also denote  $\tV_i$ as the true viscosity solutions to
\begin{equation}\label{E: hat_tilde}
    \begin{split}
       & G_i(x, \hk_i, D\tV_i(x)) \\
        = &Q(x)+ \normr{\hk_i(x)}^2 +  D\tV_i(x)\cdot (f(x)+g(x)\hk_i(x)) \\
       = & 0. 
    \end{split}
\end{equation}
Accordingly, we also set $\tilde{\kappa}_{i+1}(x)=-\frac{1}{2}R^{-1}(x)g(x)D\tV_i(x)$ if $x\neq \zero$, and $\tilde{\kappa}_{i+1}(\zero)=\zero$.

Before proving  Theorem \ref{thm: conv}, we look at a lemma that entails the expected convergence result. 

\begin{lem}\label{lem: ideal}
Suppose for each $i\geq 0$, we have 
\begin{equation}\label{E: approx_PINN}
    \sup_{x\in \Omega } \norm{\hv_i(x)-\tV_i(x)} + \sup_{x\in \Omega }\norm{\hk_{i+1}(x)-\tilde{\kappa}_{i+1}(x)}\ra 0
\end{equation}
Then, PINN-PI can guarantee that
\begin{equation}\label{E: conv_PINN_ideal}
    \sup_{x\in \Omega } \norm{\hv_i(x)-V_i(x)} + \sup_{x\in \Omega }\norm{\hk_{i+1}(x)-\kappa_{i+1}(x)}\ra 0.
\end{equation}
\end{lem}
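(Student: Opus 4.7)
The plan is to proceed by induction on $i$, pivoting through the intermediate quantities $\tV_i$ and $\tilde\kappa_{i+1}$ via the triangle inequality. The hypothesis \eqref{E: approx_PINN} already handles the gap between the neural objects and their exact counterparts at the same iteration, so the task reduces to establishing continuous dependence of the exact-PI quantities on the feedback controller currently in use: if $\hk_i \to \kappa_i$ uniformly on $\Omega$, then $\tV_i \to V_i$ and $\tilde\kappa_{i+1} \to \kappa_{i+1}$ uniformly on $\Omega$.

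The base case $i=0$ is immediate since PINN-PI is initialized with $\hk_0 = \kappa_0$, so the GHJBs \eqref{E: exact_proof} and \eqref{E: hat_tilde} coincide. Uniqueness of the viscosity solution (Proposition \ref{prop: uniqueness_GHJB}) gives $\tV_0 = V_0$ and hence $\tilde\kappa_1 = \kappa_1$ on $\Omega$, after which the triangle inequality
\begin{equation*}
\sup_\Omega |\hv_0 - V_0| \le \sup_\Omega |\hv_0 - \tV_0| + \sup_\Omega |\tV_0 - V_0|
\end{equation*}
and its analogue for controllers reduce the claim to \eqref{E: approx_PINN}. In the inductive step I would assume the conclusion through index $i-1$, which in particular yields $\hk_i \to \kappa_i$ uniformly, and then apply the same two triangle inequalities at level $i$ to reduce matters to showing $\tV_i \to V_i$ and $\tilde\kappa_{i+1} \to \kappa_{i+1}$ uniformly on $\Omega$.

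For the value-function leg, I would invoke the integral representation derived inside the proof of Proposition \ref{prop: uniqueness_GHJB}: $\tV_i(x) = \int_0^\infty L(\tilde\phi_i(s;x), \hk_i(\tilde\phi_i(s;x)))\,ds$ and the analogous formula for $V_i$ under $\kappa_i$. Uniform closeness of $\hk_i$ to $\kappa_i$ together with Gronwall's inequality yields uniform closeness of the closed-loop flows on compact time intervals, while the admissibility of both controllers (ensuring integrable stage costs uniformly in $x\in\Omega$) supplies a dominating tail for passing to the limit under the integral. Dominated convergence then gives $\tV_i \to V_i$ uniformly on $\Omega$.

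The main obstacle is the controller leg, because by the update rule \eqref{eq:pi_improve} this requires $D\tV_i \to DV_i$ uniformly on $\Omega\setminus\{\zero\}$, and uniform convergence of values does not automatically imply convergence of gradients. I would handle this on each compact set $\Omega \setminus U_\eps$ by combining the uniformly bounded Lipschitz constants established in part (3) of the proof of Theorem \ref{thm: convergence} with an Arzela-Ascoli argument: any $C^1$ limit point of $\{\tV_i\}$ along a subsequence must agree with $V_i$ by the uniform convergence of values together with uniqueness from Proposition \ref{prop: uniqueness_GHJB}, and this uniqueness forces the full sequence to converge in $C^1$ on $\Omega\setminus U_\eps$. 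Since $\tilde\kappa_{i+1}(\zero) = \kappa_{i+1}(\zero) = \zero$ by construction and both are continuous, the residual neighborhood $U_\eps$ contributes an arbitrarily small error as $\eps \downarrow 0$, which closes the induction and establishes \eqref{E: conv_PINN_ideal}.
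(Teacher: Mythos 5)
Your induction scaffolding and base case match the paper, and your reduction via the triangle inequality to the continuous-dependence claim ($\hk_i\to\kappa_i$ uniformly $\Rightarrow$ $\tV_i\to V_i$ and $\tk_{i+1}\to\kappa_{i+1}$ uniformly) is exactly the right target. The gap is in the step you yourself flag as the main obstacle, and it is not closed by your argument. Arzel\`a--Ascoli applied to $\{\tV_i\}$ with uniformly bounded Lipschitz constants only yields subsequential limits in $C^0(\Omega\setminus U_\eps)$: a Lipschitz bound on the \emph{values} bounds $D\tV_i$ in the sup norm but gives no equicontinuity of the gradients, so there is no reason a subsequence converges in $C^1$, and the phrase ``any $C^1$ limit point'' presupposes precisely what has to be proved. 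Moreover, the uniform Lipschitz bounds you cite from part (3) of the proof of Theorem \ref{thm: convergence} concern the exact-PI sequence $\{V_i\}$ indexed by the iteration counter, not the family $\{\tV_i\}$ indexed by training quality at a fixed $i$, so even the $C^0$ compactness for the relevant family is borrowed from the wrong place. To obtain $D\tV_i\to DV_i$ you need either uniform regularity of the gradients (e.g.\ a uniform modulus of continuity for $D\tV_i$), which you do not establish, or a structural PDE argument. The paper takes the latter route: it sets $W_i=V_i-\tV_i$, subtracts \eqref{E: exact_proof} from \eqref{E: hat_tilde} and uses $g^TDV_i=-2R\kappa_i+2R(\kappa_i-\kappa_{i+1})$ to show that $W_i$ solves, in the viscosity sense, an equation $F_i(x,DW_i(x))=0$ with an intermediate Hamiltonian that is convex in $DW_i$ and whose perturbation terms are controlled by $\normr{\hk_i-\kappa_i}^2$ and $\norm{\kappa_i-\kappa_{i+1}}\,\norm{\hk_i-\kappa_i}$; stability of viscosity solutions (Propositions 5.1 and 2.2, Chapter II, of Bardi--Capuzzo-Dolcetta) then forces $W_i\to 0$ uniformly on $\Omega$, with $DW_i\to 0$ on $\Omega\setminus\{\zero\}$ as a byproduct of the $C^1$ regularity there, which is what delivers $\tk_{i+1}\to\kappa_{i+1}$.

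A secondary soft spot is the value-function leg: dominated convergence under the infinite-horizon integral requires a dominating tail that is uniform in the approximation index, and admissibility of each individual closed loop only gives integrability one controller at a time; you would need a uniform decay estimate for the family of flows under $\hk_i$ near $\kappa_i$ (obtainable, e.g., from $D\tV_i\cdot(f+g\hk_i)\le -Q$, but this must be argued). The paper sidesteps this entirely, since the $W_i$ comparison handles the values and the gradients in one stroke.
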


\begin{proof}
We prove the convergence by induction. For $i=0$, we have $\hk_0=\kappa_0$. Then $V_0=\tV_0$ by Corollary \ref{cor: iteration}. By the definition of $\kappa_1$ and $\tilde{\kappa}_1$, it follows that $\kappa_1=\tilde{\kappa}_1$.  We can train the neural network sufficiently well,  such that $\sup_{x\in \Omega } \norm{\hv_0(x)-\tV_0(x)} + \sup_{x\in \Omega }\norm{\hk_1(x)-\tilde{\kappa}_1(x)}\ra 0$, which immediately implies $$\sup_{x\in \Omega } \norm{\hv_0(x)-V_0(x)} + \sup_{x\in \Omega }\norm{\hk_{1}(x)-\kappa_{1}(x)}\ra 0.$$

For each $i\geq 1$, let $W_i=V_i-\Tilde{V}_i$. Then, $W_i\in C^1(\Omega\setminus\{\zero\}))$, $DW_i\in C(\Omega\setminus\{\zero\}))$, and $W_i(\zero)=0$. Since on $\Omega\setminus\{\zero\}$,   $V_i$ and $\tV_i$ solves \eqref{E: exact_proof} and \eqref{E: hat_tilde} in the conventional sense. A direct comparison of \eqref{E: exact_proof} and \eqref{E: hat_tilde} gives that 
\begin{equation*}
\begin{split}
        & DW_i(x)\cdot (f(x)+ g(x)\hk_i(x))  \\
        = & -g^T(x)DV_i(x) \cdot(\hk_i(x)-\kappa_i(x)))- \normr{\hk_i(x)}^2 +\normr{\kappa_i(x)}^2.\\
\end{split}
\end{equation*}
However, 
$g^TDV_i=-2R\kappa_i+2R(\kappa_i-\kappa_{i+1})$. Therefore, on $\Omega\setminus\{\zero\}$, 
\begin{equation*}
\begin{split}
        & DW_i(x)\cdot (f(x)+g(x)\hk_i(x)) \\ = & 2R(x)\kappa_i(\hk_i(x)-\kappa(x))-2R(x)(\kappa_i(x)-\kappa_{i+1}(x))(\hk_i(x)-\kappa_i(x)) - \hk_i^T(x) R(x) \hk_i(x) + \kappa_i^T(x) R(x) \kappa_i(x)\\
        = & -\normr{\hk_i(x)-\kappa_i(x)}^2  -2R(x)(\kappa_i(x)-\kappa_{i+1}(x))(\hk_i(x)-\kappa_i(x)),
\end{split}
\end{equation*}
and 
\begin{equation}
    \begin{split}
        & |DW_i(x)\cdot (f(x)+g(x)\hk_i(x)) | \\
        \leq & \normr{\hk_i(x)-\kappa_i(x)}^2 + 2\|R(x)\||\kappa_i(x)-\kappa_{i+1}(x) ||\hk_i(x)-\kappa_i(x) |.
    \end{split}
\end{equation}
For simplicity, we define an  intermediate Hamiltonian
\begin{equation}
    \begin{split}
        & F_i(x, DW_i(x)) \\
        := & |DW_i(x)\cdot (f(x)+g(x)\hk_i(x)) | -\|\hk_i(x)-\kappa_i(x)\|_R^2  - 2\|R\||\kappa_i(x)-\kappa_{i+1}(x)||\hk_i(x)-\kappa_i(x) |.
    \end{split}
\end{equation}
Note that for each $i\geq 1$, the mapping
$p \mapsto F_i(x, p)$
is convex for any fixed $x$. By \citep[Proposition 5.1, Chapter II, ][]{bardi1997optimal}, $W_i$ is the viscosity solution to $F_i(x, DW_i(x))=0$ on $\Omega$. Given that $\hk_i$ converges to $\kappa_i$ uniformly, applying \citep[Proposition 2.2, Chapter II, ][]{bardi1997optimal}, $W_i$ should uniformly converges (on $\Omega$) to the viscosity solution of 
$$|DW_i(x)|\cdot (f(x)+g(x)\hk_i(x)) = 0,$$ 
which is the constant 0. As a byproduct, due to the continuous differentiability on $\Omega\setminus\{\zero\}$, one can check that $D\tV_i \ra DV_i$ (or $|DW_i|\ra 0$) uniformly on $\Omega\setminus\{\zero\}$. However, by the definition of $\tilde{\kappa}_i$ and $\kappa_i$ again, we have  $\tilde{\kappa}_{i+1}\ra\kappa_{i+1}$ in the same sense on $\Omega$. By the hypothesis \eqref{E: approx_PINN}, and a triangle inequality argument, the uniform convergence in \eqref{E: conv_PINN_ideal} follows. 
\end{proof}

\begin{rem}\label{rem: conv_require}
In the proof, leveraging the convexity of $F_i(x, DW_i(x))$ in $DW_i(x)$, the result in \citep[Proposition 2.2, Chapter II, ][]{bardi1997optimal} ensures that the uniform convergence $|\tV_i-V_i|$ in $C^1$ norm on $\Omega\setminus\{\zero\}$ (almost everywhere) can lead to a weaker convergence on $\Omega$ (everywhere):
\begin{equation}\label{E: weaker}
    \sup_{x\in \Omega } \norm{\tV_i(x)-V_i(x)} + \sup_{x\in \Omega }\norm{\tilde{\kappa}_{i+1}(x)-\kappa_{i+1}(x)}\ra 0.
\end{equation}
In other words, we sacrifice the exact convergence of $|D\tV_i-DV_i|$ at $\zero$ (due to the potential lack of differentiability),  and use the convergence of $\sup_{x\in \Omega }\norm{\tilde{\kappa}_{i+1}(x)-\kappa_{i+1}(x)}$ instead. Note that, the mapping $g$ in the definition of $\tilde{\kappa}_{i+1}$ and $\kappa_i$ has a smoothing effect. And this is also the reason why we can achieve the desired convergence property on the entire $\Omega$. \qed
\end{rem}

\textbf{Proof of Proposition \ref{cor: pinn}}: 
Recall that this proposition considers the general case for any GHJB $G(x,\kappa(x), DV(x))=0$ with admissible $\kappa$ as in Proposition \ref{prop: uniqueness_GHJB}. Given that $f(x)+g(x)\kappa(x)$ is bounded from above and away from $\zero$ on $\Omega\setminus U_\eps$, it can be easily shown that $DV$ of the true solution $V$ is also Lipschitz continuous. 
We first introduce short hand notations $|\cdot|_\infty:=\sup_{x\in \Omega\setminus U_\eps }|\cdot|$ and $\|\cdot\|_2:=\int_{\Omega\setminus U_\eps} |\cdot|^2dx$ for functions. For any Lipschitz continuous function $h$ on $\Omega\setminus U_\eps$, we define the Lipschitz constant as
$$\Lip(h):=\sup_{x\neq y}\frac{|h(x)-h(y)|}{|x-y|}. $$ For any $\hv_N\in\ff$, let $\rr_N(x):= G(x, \kappa(x), D\hv_N(x))$. It is clear that $\rr_N$ is Lipschitz continuous by the definition of $G$. \\
\noindent\textbf{Step 1:} We first show some useful bounds. 
Note that, given the compactness of $\Omega\setminus U_\eps$ and the continuous differentiability of $\hv_N$, we have
\begin{equation}\label{E: b1}
\begin{split}
   |\hv_N|_{C^1} &=
     |\hv_N|_\infty+|D\hv_N|_\infty\\
     &\leq C_1\cdot |D\hv_N|_\infty + |\hv_N(\zero)|, 
\end{split}
\end{equation}
where $C_1=\sup_{x\in \Omega\setminus U_\eps } \norm{x} + 1$. In addition, since $f(x)+g(x)\kappa(x)$ is bounded from above and away from $\zero$ on $\Omega\setminus U_\eps$, it can be verified that 
\begin{equation}\label{E: b2}
    \begin{split}
        C_2\cdot \norms{D\hv_N}&\leq \sup_{x\in \Omega\setminus U_\eps }\norm{L(x,\kappa(x))+\rr_N(x)}\\
        &\leq C_3\cdot  \norms{DV_N}, 
    \end{split}
\end{equation}
where $C_2=\inf_{x\in\Omega\setminus U_\eps}|\min\{f(x)+g(x)\kappa(x)\}|$ and $C_3=\norms{f(x)+g(x)\kappa(x)}$. 

Now we show the bound for $\norms{\rr_N(x)}$. Let $x^*, x_*$ be the maximizer and minimizer for $\norm{\rr_N(x)}$, respectively. Then
\begin{equation*}
    \begin{split}
        &\int_{\Omega\setminus U_\eps} |\rr_N(x)|^2 dx\\
        \geq & \mu(\Omega\setminus U_\eps)\cdot |\rr_N(x_*)|^2\\
        = & \mu(\Omega\setminus U_\eps)\cdot |\rr_N(x_*)-\rr_N(x^*)+\rr_N(x^*)|^2\\
        \geq & \mu(\Omega\setminus U_\eps)\cdot(-2|\rr_N(x_*)-\rr_N(x^*)||\rr_N(x^*)|+\rr_N^2(x^*)),  
    \end{split}
\end{equation*}
and consequently, 
\begin{equation*}
    \begin{split}
       & \norms{\rr_N(x)}^2 \\
      \leq  &  \left(\frac{1}{\mu(\Omega\setminus U_\eps)}\|\rr_N\|_2^2\right)+2|\rr_N(x_*)-\rr_N(x^*)||\rr_N(x^*)|\\
       \leq  & \left(\frac{1}{\mu(\Omega\setminus U_\eps)}\|\rr_N\|_2^2\right)+4\Lip^2(\rr_N)\cdot\sup_{x\in \Omega\setminus U_\eps } |x|.
    \end{split}
\end{equation*}
This implies that there exists a $C_4>0$ such that 
\begin{equation}\label{E: b3}
     |\rr_N|_\infty\leq C_4(\|\rr_N\|_2+ \Lip(\rr_N)). 
\end{equation}
\noindent\textbf{Step 2:} We show the continuous dependence of $\|\rr_N\|_2$ on $E_{T,N}(\hv_N)$. Define
		\[
		\set{\hv_N\in \mathcal F: \Lip(\rr_N)+\norm{\hv_N(\zero)} < r} =:\mathcal F_{r},
		\]
which is uniformly equicontinuous and hence compact. 
Pick $\tta >0$. %
		By the uniform continuity of $G$,  there is a $\delta >0$ such that for all $\hv_N \in \mathcal{F}_r$ and every $x,z\in X$ with $\norm{x-z} <\delta$,  we have 
		\begin{equation*}
		    \begin{split}
		         |\rr_N(x)-\rr_N(y)|  <\min\set{\sqrt{\frac{\tta}3}, \frac \tta {6M_{\mathcal F_r}}}
		    \end{split}
		\end{equation*}
		where $M_{\mathcal F_r}$ is a uniform upper bound on $|\rr_N(x)|$ for $\hv_N\in \bar{\mathcal F_r}$ and $x\in \Omega\setminus U_\eps$ (this bound exists by compactness of  $\bar{\mathcal F_r}$ and continuity of $G$).
			
		For this $\delta$, we can pick an $N_1\in \N$ so that $\delta_{N_1}<\delta$. It follows that for all $N\geq N_1$, we have $\Omega\setminus U_\eps\subseteq \bigcup_{k=1}^N \overline{\B[X]{\delta_N}{x_k}}$ and $\delta_N <\delta$.  By Hypothesis \ref{hyp: lip} and the assumption that $E_{T,N}(\hv_N) \ra 0$,  there is a $N_2 \in \N$ with for all $n\geq N_2$ we have $E_{T,N}^{\text{mod}}(\hv_N) < \frac \tta {3}$, where
  \[
		E_{T,N}^{\text{mod}}(\hv_N)=\frac{1}{N} \sum_{k=1}^N{|\rr_N(x_k)|^2}
		+|\Lip(\rr_N)|+\norm{\hv(\zero)}.
		\]
Then, for all $N\geq \max\set{N_1,N_2}$
\begin{equation*}
\begin{split}
        \|\rr_N\|_2 & \leq \int_{x\in \bigcup_{k=1}^N\mathcal{B}_{\delta_N}^{\Omega\setminus U_\eps}(x_k)}\norm{\rr_N(x)}^2dx \\
        & \leq \sum_{k=1}^N\int_{x\in \mathcal{B}_{\delta_N}^{\Omega\setminus U_\eps}(x_k)}\norm{\rr_N(x)}^2dx\\
        & \leq \sum_{k=1}^N \mu\left(\mathcal{B}_{\delta_N}^{\Omega\setminus U_\eps}(x_k)\right)|\rr_N(x_k^*)|^2
\end{split}
\end{equation*}
		where $x_k^* \in \B[\Omega\setminus U_\eps]{\delta_N}{x_k}$ is the maximizer of  $\norm{\rr_N(x)}$. Let $\mu_N:=\mu\round{\B[\Omega\setminus U_\eps]{\delta_N}{x_k}}$.  
Then, by uniform continuity of $\rr_N$, we see
		\begin{align*}
			\sum_{k=1}^N \mu_N|\rr_N(x_k^*)|^2 
   \leq &\mu_N \sum_{k=1}^N \round{\norm{\rr_N(x_k^*)-\rr_N(x_k)} +\norm{\rr_N(x_k)}}^2 \\
			< &\mu_N \sum_{k=1}^N\round{ \min\set{\sqrt{\frac{\tta}3}, \frac \tta {6M_{\mathcal F_r}}} + \norm{\rr_N(x_k)}}^2 \\
			\leq &\mu_N \sum_{k=1}^N \round{\frac \tta 3 + \frac \tta 3 \frac {\norm{\rr_N(x_k)}} {M_{\mathcal F_r}} +\norm{\rr_N(x_k)}_2^2} \\
			\leq &\mu_N \sum_{k=1}^N \round{\frac \tta {3}+\frac \tta 3} +\mu_N \sum_{k=1}^N  \norm{\rr_N(x_k)}_2^2 \\
			\leq & \tta C_5.
		\end{align*}
		where $C_5:=N\mu_N$.  This completes the proof of Step 2.\\
  \noindent\textbf{Step 3:} Now we are ready to prove the statement in this proposition. For simplicity, 
  we write $a\lsim b$ if there exists a constant
$C>0$, independent of $a$ and $b$, such that $a\leq C b$. 
  
  We pick sufficiently large $N,M$, then, by Eq. \eqref{E: b1}, \eqref{E: b2}, and \eqref{E: b3}, 
  \begin{equation*}
      \begin{split}
          |\hv_N-\hv_M|_{C^1}\lsim& |\rr_N-\rr_M|_\infty+|\hv_N(\zero)-\hv_M(\zero)|\\
           \lsim &\|\rr_N-\rr_M\|_2+\Lip(\rr_N-\rr_M) +|\hv_N(\zero)-\hv_M(\zero)|.
      \end{split}
  \end{equation*}
In addition, given that $E_{T,N}(\hv_N)$ can be arbitrarily small, by Step 2 and Hypothesis \ref{hyp: lip}, it is  clear that $\{\hv_N\}_N$ forms a Cauchy sequence. We denote the limit as $\hv^*\in\ff$, which solves the equation $G(x, \kappa(x), D\hv^*(x))=0$. By the uniqueness of the solution of $G$ as in Proposition \ref{prop: uniqueness_GHJB}, one can conclude that $\hv^*(x)=V(x)$.

\textbf{Proof of Theorem \ref{thm: conv}}: 
For any $\eps>0$ and for each $i$, by Proposition \ref{cor: pinn}, one can 
guarantee a $C^1$-uniform convergence of $\hv_i$ to $\tV_i$ on the compact set $\Omega\setminus U_\eps$, where $\hv_i$ plays the role of the representable functions in Proposition \ref{cor: pinn}. In view of Remark \ref{rem: conv_require},  one can obtain the convergence  
\begin{equation}\label{E: conv_PINN_real_2}
    \sup_{x\in \Omega\setminus U_\eps } \norm{\hv_i(x)-\tV_i(x)} + \sup_{x\in \Omega\setminus U_\eps }\norm{\hk_{i+1}(x)-\tk_{i+1}(x)}\ra 0,
\end{equation}
with a modified convergence region $\Omega\setminus U_\eps$ instead of $\Omega$. 

On $U_\eps\setminus\{\zero\}$, by the continuous differentiability of $\hv_i$, $\tV_i$, and $g$, one can immediately verify that 
\begin{equation}\label{E: V_eps}
    |\hv_i(x)-\tV_i(x)|\leq \sup_{x\in U_\eps} (|D\hv_i(x)|+|D\tV_i(x)|)\cdot|x|
\end{equation}
and 
\begin{equation}\label{E: kappa_eps}
    \begin{split}
        |\hk_{i+1}(x)-\tk_{i+1}(x)|\leq  C|x|, 
    \end{split}
\end{equation}
where $$C=\sup_{x\in U_\eps\setminus\{\zero\}}|Dg^T(x)|\cdot \sup_{x\in U_\eps\setminus\{\zero\}} (|D\hv_i(x)|+|D\tV_i(x)|)$$ and $Dg^T(x)$ is the Fr\'{e}chet derivative at $x$. Note that the quantities $\hv_i$, $\tV_i$, $\hk_{i+1}$, and $\tk_{i+1}$ are all null values at $\zero$. In addition, since $\eps>0$ is arbitrarily small, combining Eq. \eqref{E: conv_PINN_real_2}, \eqref{E: V_eps}, \eqref{E: kappa_eps}, one can have the arbitrarily smallness of 
$\sup_{x\in \Omega} \norm{\hv_i(x)-\tV_i(x)} + \sup_{x\in \Omega}\norm{\hk_{i+1}(x)-\tk_{i+1}(x)}$. By
Lemma \ref{lem: ideal}, the statement in Theorem \ref{thm: conv} follows immediately. \qed

\section{Further details on verification of Lyapunov conditions}\label{sec:lyap}

We aim to verify that a value function returned by Algorithms \ref{alg:elm-pi} or \ref{alg:pinn-pi} satisfies the Lyapunov condition (\ref{eq:dVeps}), recalled here as
\begin{equation}
    \label{eq:dVeps2}
    DV(x)(f+g(x)\kappa(x)) \le - \mu,\quad x\in \Omega \setminus U_\eps. 
\end{equation}
To obtain more information, we can verify a slightly stronger set of conditions defined below to facilitate the claim for asymptotic attraction and region of attraction:
\begin{enumerate}
    \item $DV(x)(f+g(x)\kappa(x)) \le - \mu,\quad\forall x\in \set{x\in \Omega:\,c_1\le V(x)\le c_2}$;
    
    \item $\set{x\in \Omega:\,V(x)\le c_2} \cap \partial \Omega = \emptyset$, where $\partial \Omega$ is the boundary of $\Omega$;
    
    \item $\set{x\in \Omega:\,V(x)\le c_1} \subset U_\eps$.
\end{enumerate}
By these conditions, $V$ will decrease along solutions of the closed-loop system under control $u=\kappa(x)$ and cannot escape $\Omega$ if solutions start in $\Omega_{c_2}:=\set{x\in \Omega:\,V(x)\le c_2}$. Furthermore, these solutions eventually reach the set $\Omega_{c_1}=\set{x\in \Omega:\,V(x)\le c_1}$, which is contained in $B_\eps$. While (\ref{eq:dVbasis}) appears to be a weaker condition than the above set of three conditions, if $V$ is positive definite on $\Omega$ with respect to the origin, then, for any $\eps>0$, we can always choose $c_1$ and $c_2$ such that the above conditions hold. Verifying these conditions readily gives a region of attraction $\Omega_{c_2}$ and an attractive set $\Omega_{c_1}$. 

\section{Numerical experiments}\label{sec:appendix:experiments}

\subsection{Implementation details}

All instances of ELM-PI and PINN-PI are run with the tanh activation function, unless otherwise noted. The tanh activation function is effective in approximating smooth functions. If non-smooth functions are involved, ReLU activation might be preferred (see Section \ref{sec:bilinear}). 

The same as in Section \ref{sec:examples} and Table \ref{tab:synthetic}, we run all examples with $N= m*d$, where $m$ is the size of the network, and $d$ is the dimension of the problem. The iteration number of PI is set to be 10. We only implemented a one-layer network for PINN-PI to draw fair comparisons with ELM-PI and basis function approaches such as successive Galerkin approximations (see Section \ref{sec:sga}).

For both ELM-PI and PINN-PI, the number of iterations in PI is set to be 10. For PINN-PI, we train the network for 10,000 steps in each iteration with Adam. For $m\le 1,600$, we conduct three separate runs of the experiments to report the average maximum testing errors and runtimes. For $m\ge 3,200$, the computational time is significantly larger and reported for a single run. ELM-PI experiments were run with an Intel Gold 6148 Skylake @ 2.4 GHz, and PINN-PI experiments were run with an NVidia V100SXM2 (16G memory). The errors reported are testing errors on $2*N$ points, where $N$ is the number of collocation points. 

\subsection{Verification of synthetic $n$-dimensional nonlinear control}

The domain on which we solve the problem is set to be $\Omega = [-1,1]^n$. Recall that the optimal value function is given by
$$
V^*(x)= \sum_{i=1}^n \left(\frac12 x_i^4 + \frac12  (x_i^2 + 1)^2 - \frac12\right).
$$ 

The detailed comparisons of training ELM-PI and PINN-PI on this synthetic example are shown in Table \ref{tab:synthetic}.

We also conducted formal verification experiments on this example. It is easy to verify that the largest level set of $V^*$ contained in $\Omega$ is $\set{V^*\le 2}$. In other words, we should be able to verify conditions (1)--(3) in Section \ref{sec:lyap} with any $0<c_1<c_2<2$. We verified value functions returned by ELM-PI and PINN-PI against conditions (1)--(3). The results are summarized in Table \ref{tab:verify_synthetic}. The parameters used for verification are $\mu=1e-4$, $c_1=0.01$, $c_2=1.99$, and $\eps = 0.1$. We employed dReal \cite{gao2013dreal} for verification. It is evident from the results that dReal, which utilizes interval analysis, falls prey to the curse of dimensionality. We also noted an intriguing observation that the value functions returned by ELM-PI, despite possessing the same form and number of neurons as those from PINN-PI, appear to pose a greater challenge for verification by dReal. This peculiar observation currently lacks a robust explanation, and it may pertain to the implementation of the dReal tool \cite{gao2013dreal}. On potential explanation is that the value function trained with ELM tends to have large coefficients, which may pose a challenge for verification with dReal. One may mitigate the issue by adding a L2 regularization term for the coefficients, at the expense of accuracy. We plan to investigate this issue further as well as alternative tools for verification in our future work.

\begin{table}
  \caption{Performance of ELM-PI and PINN-PI on a synthetic $n$-dimensional nonlinear problem: Here, $n$ represents the dimension of the problem, $m$ denotes the number of hidden units used for approximation, and $N$ indicates the number of collocation points.\\ %
  }
  \label{tab:synthetic}
  \centering
  \begin{tabular}{ccccccc}
    \toprule
    \multicolumn{3}{c}{Problem \& model size} & \multicolumn{2}{c}{ELM-PI} & \multicolumn{2}{c}{PINN-PI} \\
    \cmidrule(r){1-3} \cmidrule(lr){4-5} \cmidrule(l){6-7}
    $n$ & $m$ & $N$ & Error & Time (s)  & Error & Time (s) \\
    \midrule
    1 & 50 & 50 & 1.54E-10 & 0.03 & 2.45E-03 & 236.69 \\
    1 & 100 & 100 & 1.14E-10 & 0.08 & 2.86E-03 & 240.79 \\
    1 & 200 & 200 & 1.25E-11 & 0.21 & 2.17E-03 & 284.02 \\
    \midrule
    2 & 50 & 100 & 1.26E-01 & 0.07 & 1.03E-02 & 237.31 \\
    2 & 100 & 200 & 8.75E-04 & 0.18 & 4.97E-03 & 273.05 \\
    2 & 200 & 400 & 8.37E-07 & 0.52 & 9.87E-03 & 336.04 \\
    2 & 400 & 800 & 1.79E-08 & 1.81 & 1.69E-02 & 560.14\\
    2 & 800 & 1600 & 2.47E-09 & 25.94 & 3.10E-02 & 843.12  \\
    2 & 1600 & 3200 & 6.10E-10 & 184.98 & 1.53E-02 & 953.37 \\
    \midrule
    3 & 200 & 600 & 7.65E-02 & 0.82 & 1.95E-02 & 371.18 \\
    3 & 400 & 1200 & 7.66E-03 & 2.62 & 1.21E-02 & 762.42\\
    3 & 800 & 1600 & 1.34E-04 & 33.77 & 1.48E-02 & 866.44 \\
    3 & 1600 & 4800 & 2.18E-06 & 284.53 & 2.55E-02 & 1034.34 \\
    3 & 3200 & 9600 & 2.58E-07 & 3593.61 & 2.81E-02 & 768.84 \\
    \midrule
    4 & 800 & 3200 & 2.02E-01 & 24.59 & 2.85E-02 & 903.55\\
    4 & 1600 & 6400 & 3.29E-02 & 346.79 & 3.51E-02 & 1121.94 \\
    4 & 3200 & 12800 & 2.92E-03 & 4771.60 & 3.52E-02 & 984.04\\
    4 & 6400 & 25600 & 4.35E-05 & 43052.25 & 4.39E-02 & 3304.76\\
    \midrule
    5 & 800 & 4800 & 5.36E00 & 32.22 & 3.63E-02 & 770.54 \\
    5 & 3200 & 16000 & 3.08E-01 & 6303.20 & 5.20E-02 & 1204.22\\ 
    5 & 6400 & 32000 & 6.37E-02 & 53479.10 & 9.10E-02 & 5906.69 \\
    \midrule
    6 & 800 & 4800 & 8.76E00 & 39.03 & 4.31E-02 & 800.76 \\
    6 & 6400 & 38400 & 2.33E00 & 63403.53 & 1.38E-01 & 6995.10\\
    \midrule
    7 & 800 & 100000 & -- & -- & 3.74E-02 & 4414.28\\
    8 & 800 & 100000 & -- & -- & 5.28E-02 & 4415.12\\
    9 & 800 & 100000 & -- & -- & 4.88E-02 & 4422.54\\
    10 & 800 & 100000 & -- & -- &3.66E-02 & 4424.33\\
    11 & 800 & 100000 & -- & -- & 8.29E-02 & 4424.00\\
    12 & 800 & 100000 & -- & -- & 6.11E-02 & 4426.54 \\
    \bottomrule
  \end{tabular}
\end{table}

\begin{table}
  \caption{Training and verification of ELM-PI and PINN-PI on an $n$-dimensional nonlinear control example: value functions returned by ELM-PI and PINN-PI are verified against the Lyapunov conditions in Section \ref{sec:lyap}. The experimental setup is the same as the results in Table \ref{tab:synthetic}. The experiments in this table were run on a MacBook Pro with a 2 GHz Quad-Core Intel Core i5. Note that the results may slightly vary due to the selection of a random seed. The symbol $\times$ indicates that verification by dReal \cite{gao2013dreal} returned a counterexample for $c_1=0.01$ and $c_2=1.99$ and \text{t.o.} indicates verification was not conclusive within 1,800 (s). \\}
  \label{tab:verify_synthetic}
  \centering
  \begin{tabular}{cccc}
    \toprule
    \multicolumn{2}{c}{Problem \& model size} & 
    \multicolumn{1}{c}{ELM-PI}   & 
    \multicolumn{1}{c}{PINN-PI}    \\
    \cmidrule(r){1-2} \cmidrule(r){3-3} \cmidrule(r){4-4}
    $n$     & $m$  &  Verify (s) & Verify (s) \\
    \midrule
    1 & 50  & 6.99 & 0.02 \\
    1 & 100 & 2.20 & 0.05 \\
    1 & 200 & 4.13 & 0.14 \\
        \midrule
    2 & 50 & \text{t.o.} &  $\times $\\
    2 & 100 & \text{t.o.}&  0.68\\
    2 & 200 & \text{t.o.} &  1.88 \\
    2 & 400 & \text{t.o.} & 6.10\\
    \bottomrule
  \end{tabular}
\end{table}

\begin{figure}[ht]
    \centering
        \centering
         \subfigure[$m=100,\;N=200$]{\includegraphics[width=0.47\linewidth]{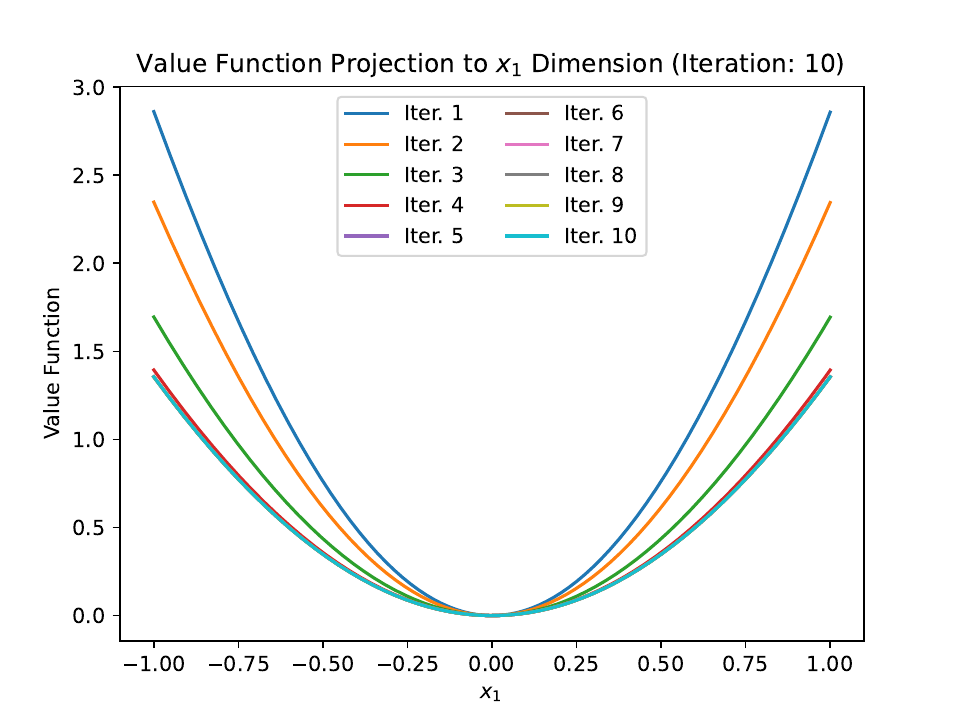}}
         \subfigure[$m=50,\;N=200$]{\includegraphics[width=0.47\linewidth]{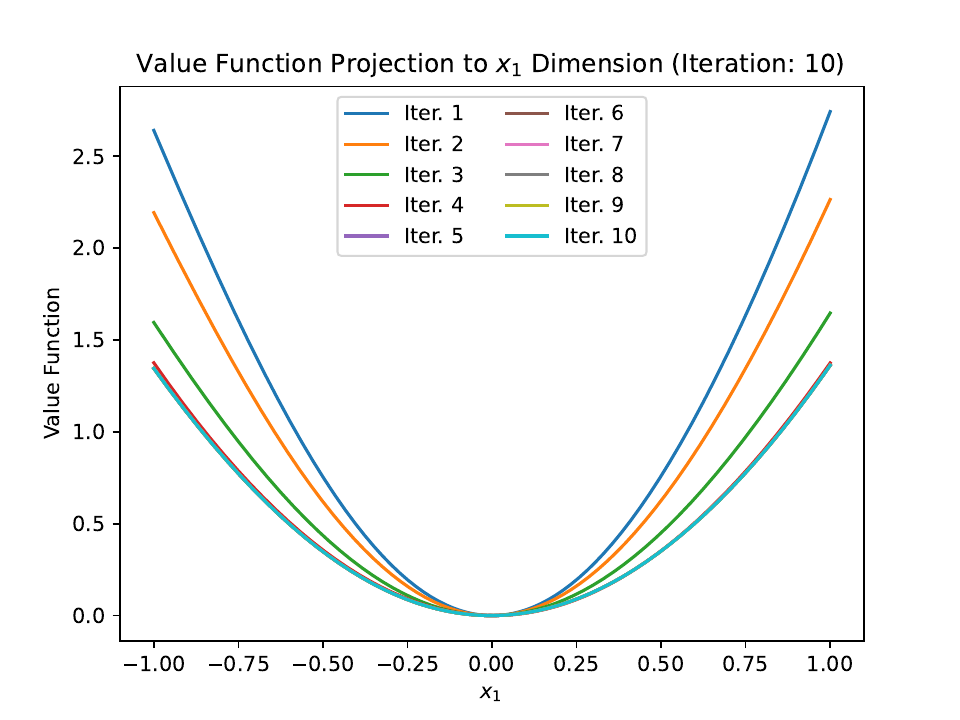}}
    \caption{ELM-PI on inverted pendulum: despite visual similarity and apparent convergence, the controller obtained from $m=50$ fails to stabilize the system, while the one from $m=100$ can be verified to be stabilizing using an SMT solver.}
    \label{fig:pendulum}
\end{figure}

\subsection{Inverted pendulum and comparison with successive Gakerlin approximations}\label{sec:sga}

The dynamics of the inverted pendulum are described by 
$
\ddot{\theta} = \frac{mg\ell \sin\theta - \mu\dot\theta + u}{m\ell^2}
$, where $u$ is the control input. We consider $\ell=0.5$, $m=0.1$, $g=9.8$, $\mu=0.1$. The cost function is defined by $Q=I_2$ and $R=2$. We run both ELM-PI and PINN-PI to compute the optimal control and policy. In this case, we do not have the analytical expression of the optimal value function as the ground truth. We extract the resulting optimal controllers and plot trajectories from random initial conditions to show the stability and performance of the controllers. 

We use the inverted pendulum example to compare successive Gakerlin approximations (SGA) \cite{beard1997galerkin} and the proposed neural policy iteration algorithms. We run ELM-PI and PINN-PI with $m=50$, $m=100$, $m=200$, and $m=400$. We also run SGA with polynomial bases 
of order 2, 4, 6, 8. All algorithms are run with 10 iterations. The results are summarized in Table \ref{tab:sga_compare}. From these results, it is clear that ELM-PI requires significantly less computational time on this low-dimensional example, as shown already in Section \ref{sec:examples} and Table \ref{tab:synthetic}. To demonstrate the performance of the obtained controllers, we simulate trajectories resulting from different controllers with random initial conditions with the same random seed for different controllers). Figure \ref{fig:sga_compare_cost} depicts the simulated costs averaged over 50 trajectories. It can be seen that high-order SGA achieves the same cost as ELM-PI with different $m$ values. PINN-PI achieves similar costs but requires longer training time as expected. 

We also verified the stability of the resulting controllers. In all the cases, we are able to verify the Lyapunov stability conditions outlined in Section \ref{sec:lyap} with $c_1=0.01$ and $c_2=0.029$. While $c_2=0.029$ appears to be small, it indeed gives the largest level set of the optimal value function contained in the region of interest, as shown in Figure \ref{fig:ROAs}. 

Furthermore, Figure \ref{fig:pendulum} depicts the training results for ELM-PI with $m=50$ and $m=100$. While the plots appear similar, the controller obtained from $m=50$ fails to stabilize the system,  while the one from $m=100$ can be verified to be stabilizing using an SMT solver. This highlights the importance of formal verification to ensure stability guarantees. 

\begin{figure}[h]
    \centering
        \includegraphics[width=0.47\textwidth]{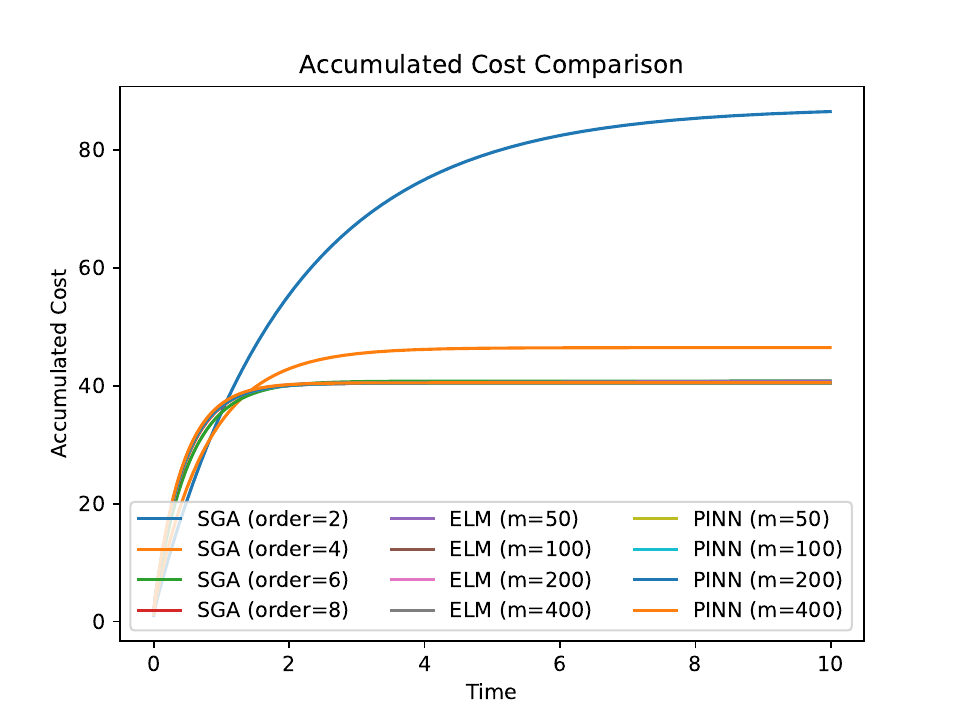}
    \caption{ELM-PI, PINN-PI, and SGA on the inverted pendulum example: it can be seen that the value returned by a high-order SGA achieves the same cost as ELM-PI with a different number of neurons, while the computational time required by ELM-PI is significantly less. In all the cases, we are able to verify the Lyapunov stability conditions outlined in Section \ref{sec:lyap} are met. }
    \label{fig:sga_compare_cost}
\end{figure}

\begin{figure}[h]
    \centering
        \includegraphics[width=0.47\textwidth]{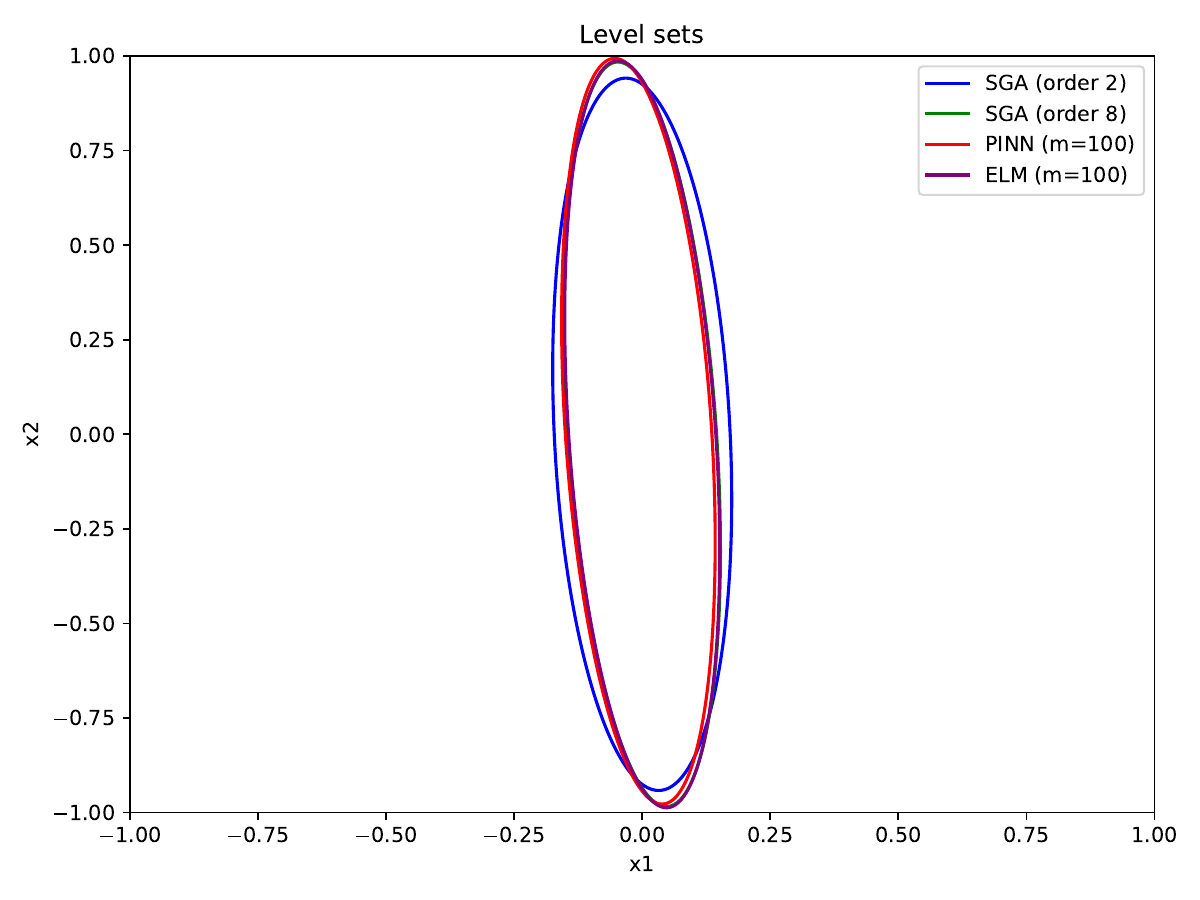}
    \caption{Certified regions of attraction by ELM-PI, PINN-PI, and SGA on the inverted pendulum example: it can be seen that for high-order SGA, PINN-PI, and ELM-PI, a region of attraction close to the boundary of the region of interest $\Omega$ can be verified using SMT solvers. }
    \label{fig:ROAs}
\end{figure}

\begin{table}
  \caption{Comparison of ELM-PI, PINN-PI, and SGA on the inverted pendulum example: we run ELM-PI and PINN-PI with $m=50$, $m=100$, $m=200$, and $m=400$
and SGA \cite{beard1997galerkin} with polynomial bases of order 2, 4, 6, 8. We record the training/computational time and whether the resulting controller is verifiably stabilizing.  \\}
  \label{tab:sga_compare}
  \centering
  \begin{tabular}{ccccccccc}
    \toprule
    \multicolumn{3}{c}{SGA} & 
    \multicolumn{3}{c}{ELM-PI}   & 
    \multicolumn{3}{c}{PINN-PI}    \\
    \cmidrule(r){1-3} \cmidrule(r){4-6} \cmidrule(r){7-9}
    Order & Time (s)   & Verified?   &  $m$   & Time (s) & Verified?  &  $m$   & Time (s) & Verified? \\
    \midrule
    2 & 4.80 & Yes & 50 & \textbf{0.11} & Yes & 50 & 255.15 & Yes \\
    4 & 19.37 & Yes & 100 & \textbf{0.24} & Yes & 100 & 256.53 & Yes \\
    6 & 66.52 & Yes & 200 & \textbf{0.71} & Yes  & 200 & 258.89 & Yes  \\
    8 & 212.42 & Yes & 400 & \textbf{2.92} & Yes & 400& 256.52 & Yes \\
    \bottomrule
  \end{tabular}
\end{table}

\section{Comparison with reinforcement learning algorithms on benchmark nonlinear control problems}\label{app:comparison}

\subsection{Rationale for algorithms selection in comparison}\label{sec:rationale}

We chose three recent RL algorithms as benchmarks: two model-free ones, PPO and HJBPPO, and one model-based, CT-MBRL. We used the implementation of PPO from the stable-baselines3 library \cite{stable-baselines3}, and we implemented HJBPPO by modifying this library. Since HJBPPO addresses the same problems as ours and incorporates the HJB equation to derive better policies than PPO, we compare our algorithm with both as model-free RL benchmarks. On the other hand, CT-MBRL is a state-of-the-art model-based RL algorithm that concerns similar environments to those in our numerical experiments.

It is worth mentioning that the ultimate goal of our algorithm is to devise optimal and stabilizing controllers for nonlinear systems after a few policy iterations, which can provide asymptotic stability for an infinite time horizon. This differs from the typical performance comparisons for RL algorithms, where success rate or normalized/averaged scores obtained using episodic training are used. For instance, for Cartpole, our PI-generated policy can maintain the rod in the upright position after a few seconds, while policies generated by most RL algorithms oscillate around the upright position. As long as it does not fall out of a given interval, it is regarded as a success with a reward, which means the trajectories do not asymptotically converge to the equilibrium points. In contrast, our algorithms achieve \textit{asymptotic} stability, meaning the controller can be deployed for any duration of time with convergence guarantees.

\subsection{Comparison results}
We compare PINN-PI with three reinforcement learning (RL) algorithms in Figure \ref{fig:acc_cost} by comparing their accumulated control costs over time. As shown in the plots, PINN-PI significantly outperforms the rest of the algorithms in the inverted pendulum environment and the remaining higher-dimensional environments. Code for these comparisons, along with other examples, is provided in the supplementary material.

\begin{figure}[H]
    \centering
     \subfigure[Inverted Pendulum]{\includegraphics[width=0.4\linewidth]{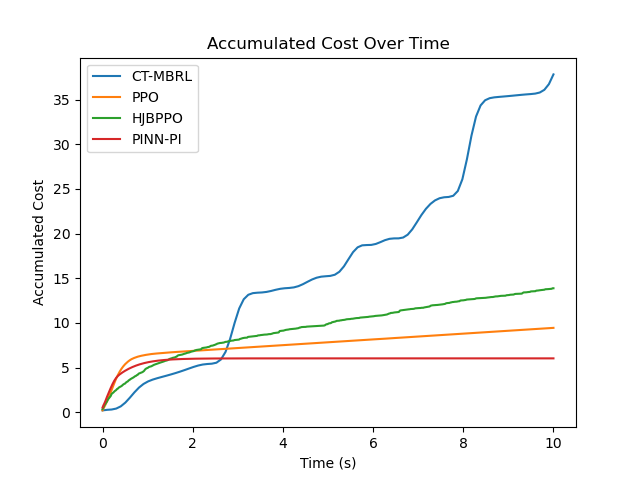}}
     \subfigure[Cartpole]{\includegraphics[width=0.4\linewidth]{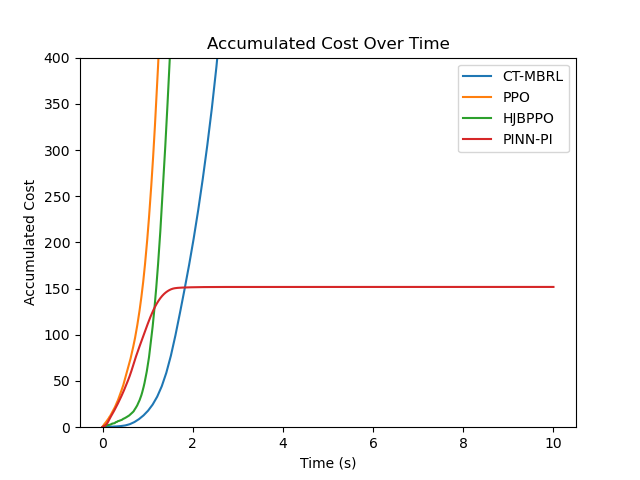}}
     \subfigure[2D Quadrotor]{\includegraphics[width=0.4\linewidth]{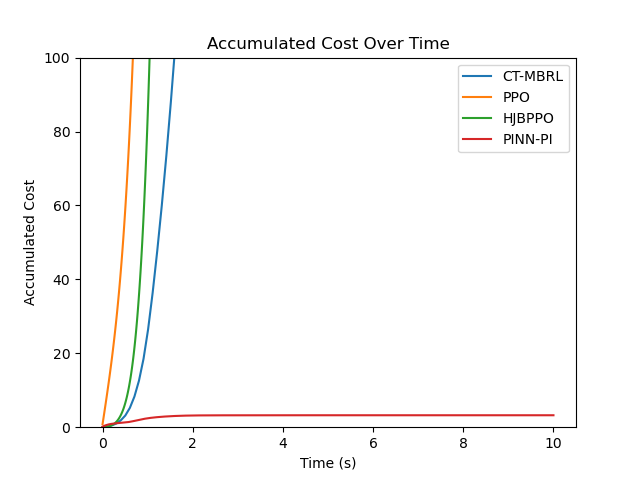}}
     \subfigure[3D Quadrotor]{\includegraphics[width=0.4\linewidth]{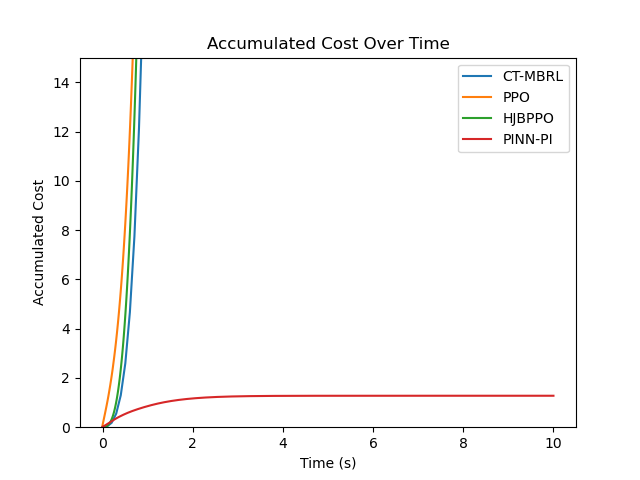}}
        \caption{Plots of accumulated costs over time for the four environments}\label{fig:acc_cost}
\end{figure}

Furthermore, Figure \ref{fig:trajectories} show simulated trajectories from different initial conditions under the learned optimal controllers. 

\begin{figure}[H]
    \centering
     \subfigure[Inverted Pendulum]{\includegraphics[width=0.4\linewidth]{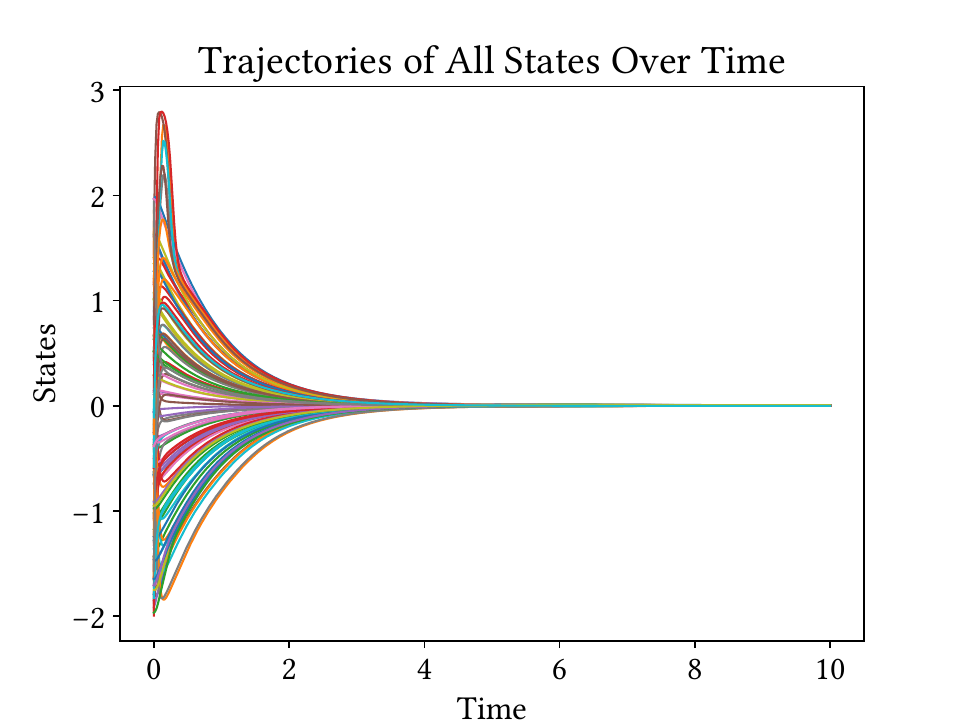}}
     \subfigure[Cartpole]{\includegraphics[width=0.4\linewidth]{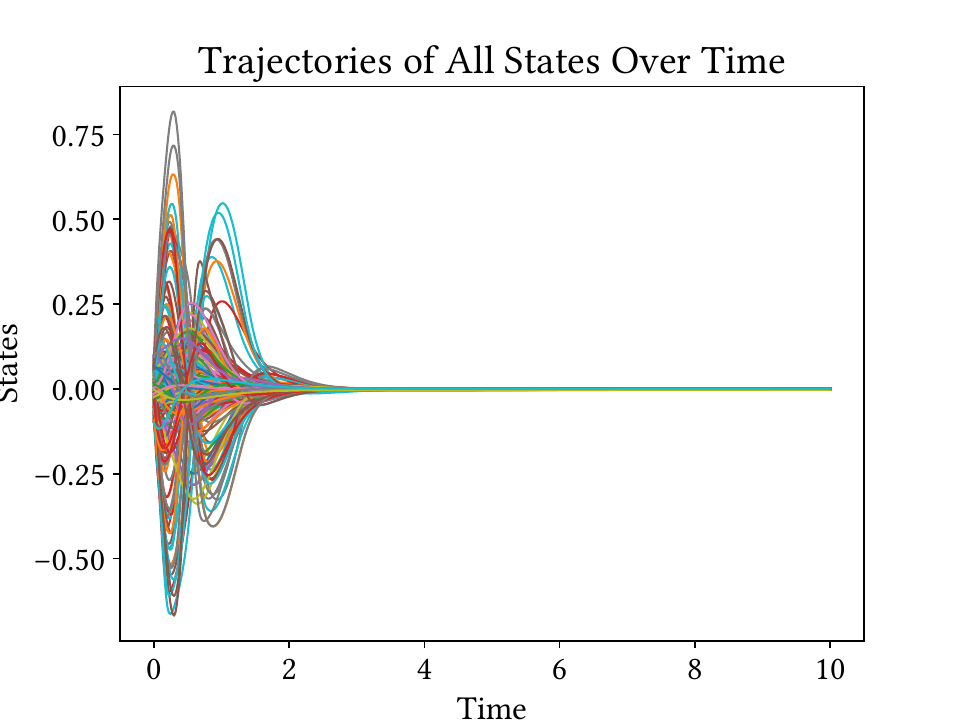}}
     \subfigure[2D Quadrotor]{\includegraphics[width=0.4\linewidth]{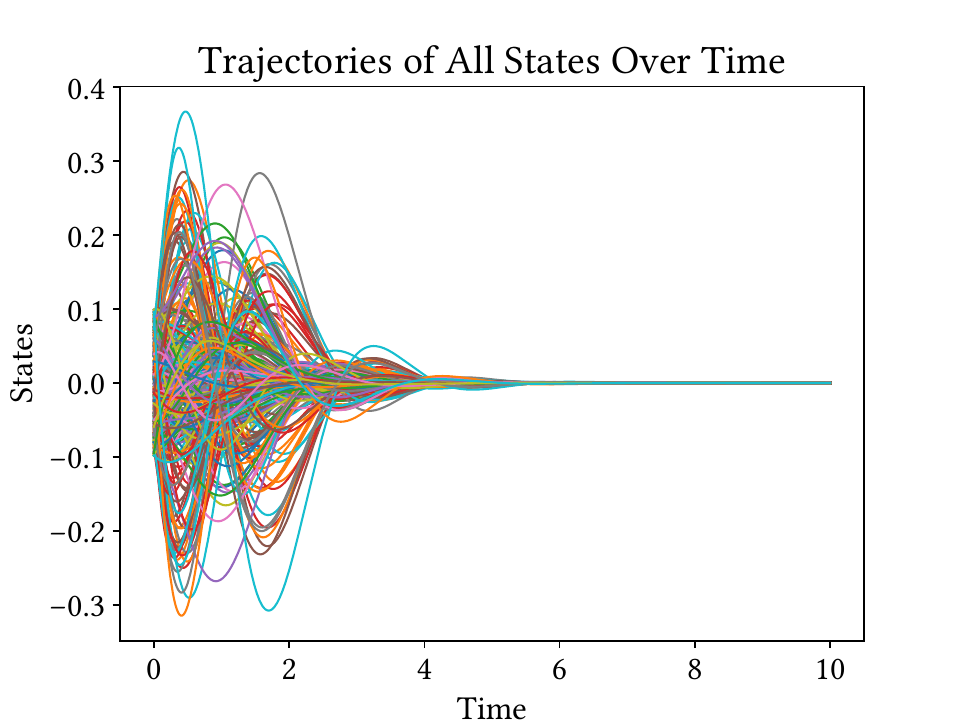}}
     \subfigure[3D Quadrotor]{\includegraphics[width=0.4\linewidth]{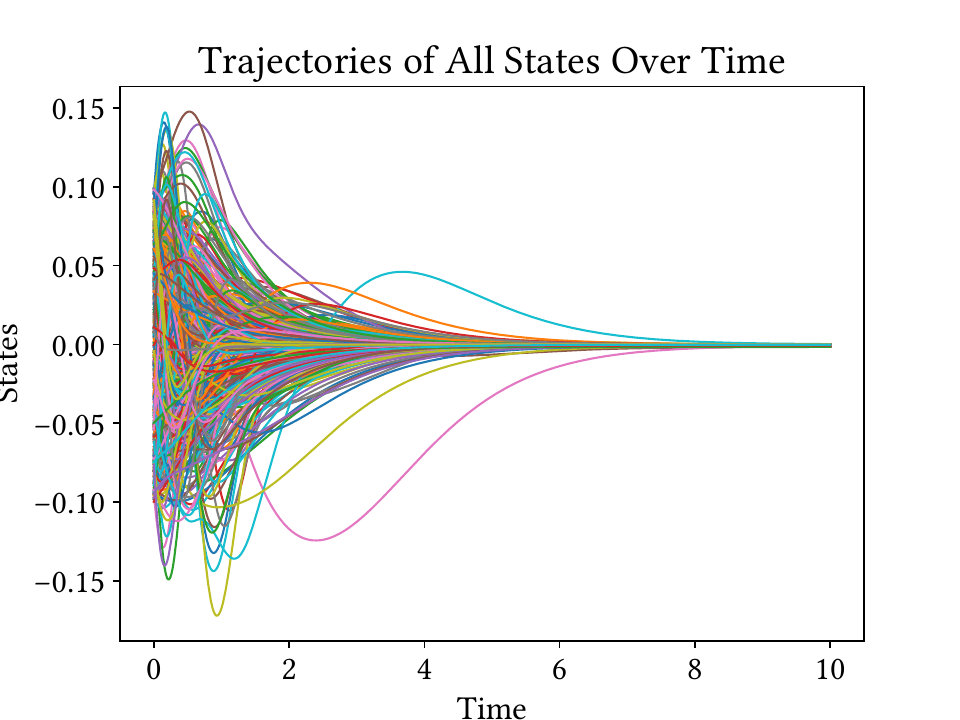}}
        \caption{Plots of trajectories starting from different initial conditions under the optimal controller learned using PINN-PI for the four environments. All trajectories converge to the origin.}\label{fig:trajectories}
\end{figure}

\subsection{Additional case studies}

\subsubsection{1D bilinear system}\label{sec:bilinear}

Recall the bilinear scalar problem $\dot x = xu$, with $Q(x) = x^2$ and $R = 1$. The optimal value function is $V(x) = 2\abs{x}$, which fails to be differentiable at $x = 0$. We show that ELM-PI can converge to the optimal value function. We choose the activation function to be ReLU and set the bias term to zero. ELM-PI achieves 1E-16 accuracy with $m \geq 3$. While this is a simple example, it demonstrates that ELM-PI can potentially achieve arbitrary accuracy, provided that the neural network is capable of approximating the value function. A plot of the obtained value function after 10 iterations is included in Figure \ref{fig:bilinear}. 

\begin{figure}[h]
    \centering
        \includegraphics[width=0.47\textwidth]{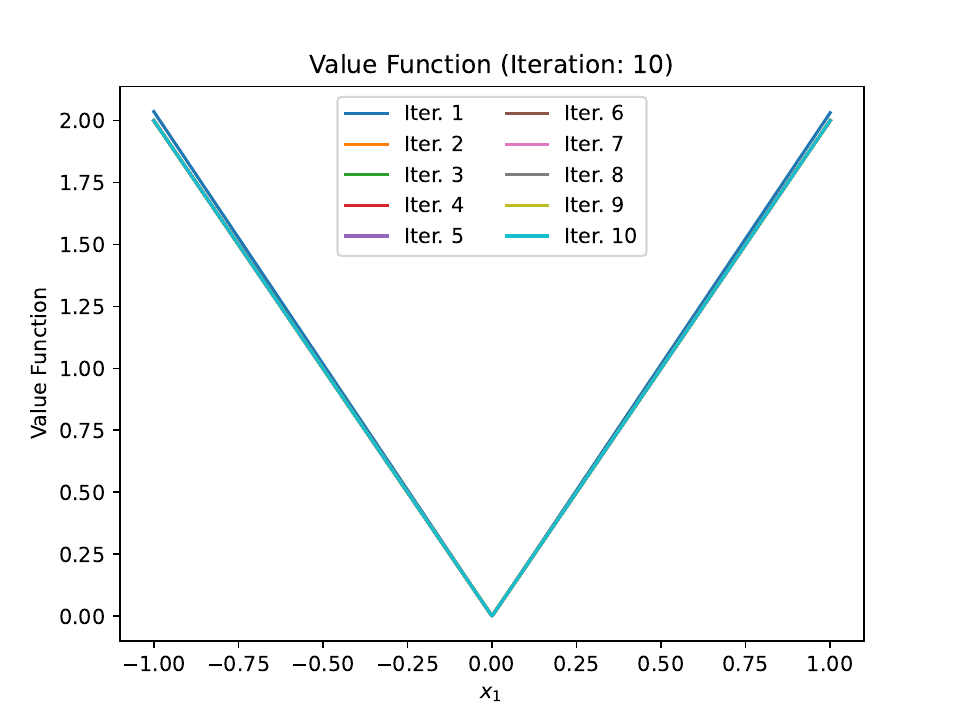}
    \caption{ELM-PI on the bilinear example with $m=10$. The error between the optimal value and computed value function is within 1E-15 after two iterations.}
    \label{fig:bilinear}
\end{figure}

\subsubsection{Lorenz system}

We consider the stabilization of a chaotic system 
\begin{equation}
    \begin{aligned}
        \dot x_1 &= -10x_1 + 10x_2 + u\\
        \dot x_2 & = 28x_1 - x_2 -x_1x_2\\
        \dot x_3 & = -\frac83 x_2 + x_1x_2
    \end{aligned}
\end{equation}
Without control, the origin is a saddle equilibrium point. We would like to stabilize the system to the origin via policy iteration. We first run ELM-PI with $m=100$, $m=200$, $m=400$, and $m=800$. The computational times are 0.68, 1.55, 7.03, and 46.84 seconds, respectively. In comparison, SGA with polynomial bases of order 2, 4, and 6 takes 6.63, 69.15, and 893.28 seconds. It can be seen from numerical simulations that ELM-PI with $m=400$ and $m=800$ leads to stabilizing controllers, whereas $m=100$ and $m=200$ give unstable controllers. Figure \ref{fig:lorenz_trajectories} depicts 10 simulated closed-loop trajectories under the controllers returned by ELM-PI with $m=400$ and $m=800$. The performance of the controllers returned by ELM-PI and the initial controller obtained from eigenvalue assignment for the linearized system are shown in Figure \ref{fig:lorenz_cost_compare} through simulated trajectories. 

\begin{figure}[h]
    \centering
        \includegraphics[width=0.47\textwidth]{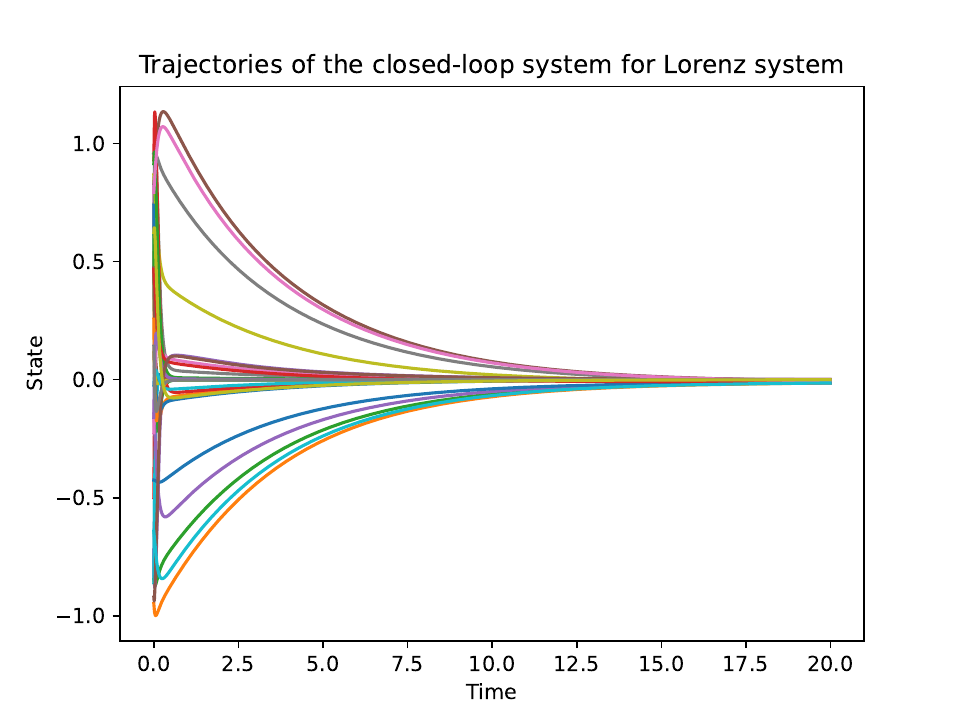}
        \includegraphics[width=0.47\textwidth]{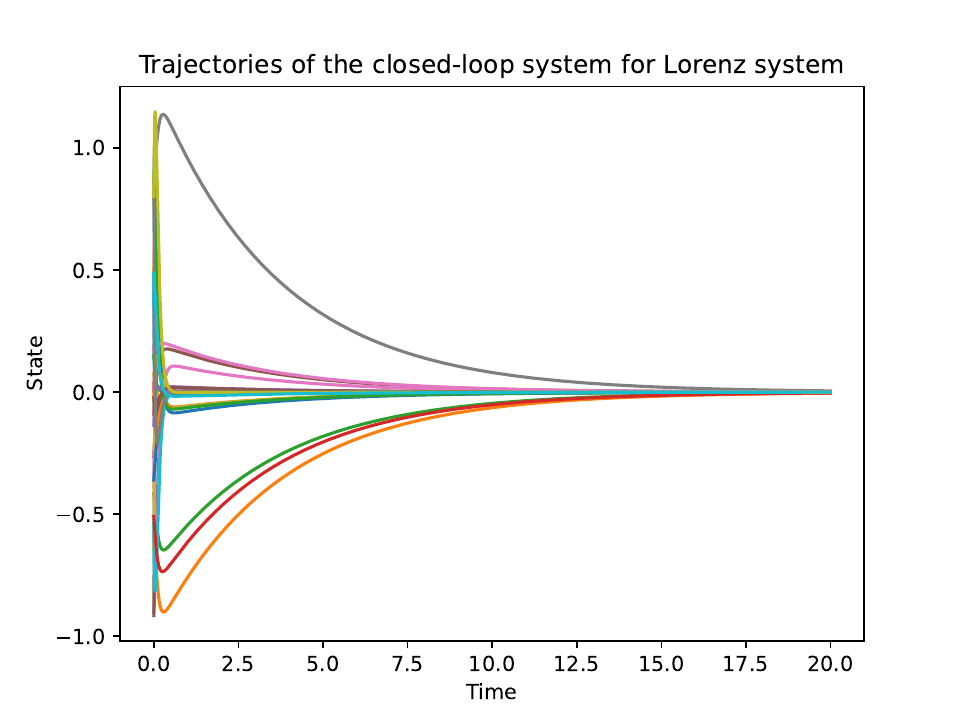}
    \caption{ELM-PI with $m=400$ and $m=800$ for the Lorenz system: the left panel shows the closed-loop trajectories under the controller returned by ELM-PI with $m=400$, and the right panel for $m=800$.}
    \label{fig:lorenz_cost_compare}
\end{figure}

\begin{figure}[h]
    \centering
        \includegraphics[width=0.47\textwidth]{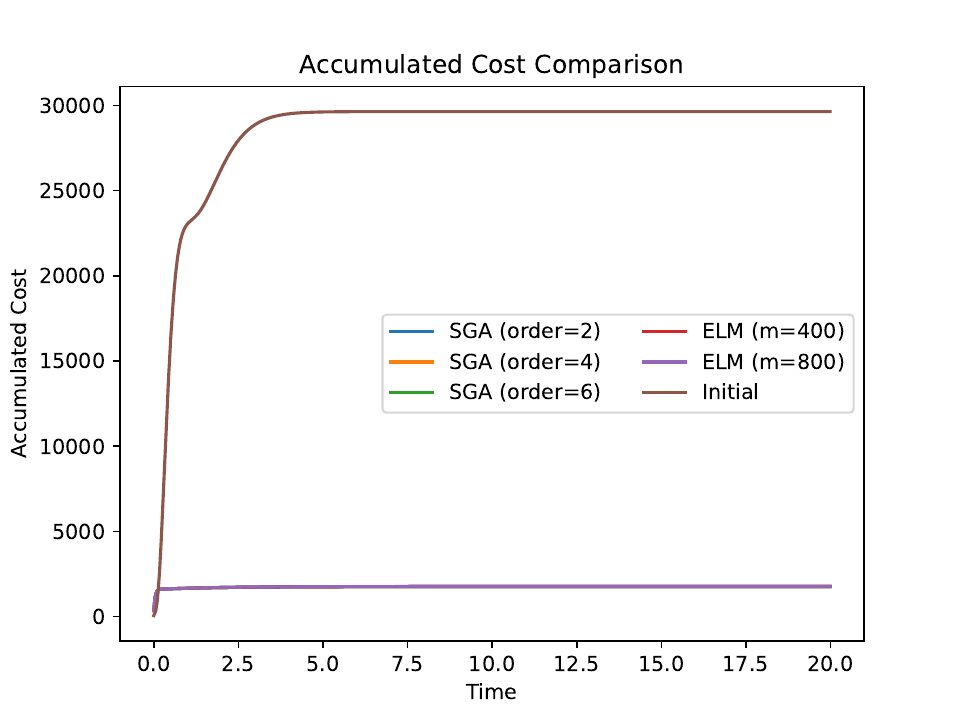}
        \includegraphics[width=0.47\textwidth]{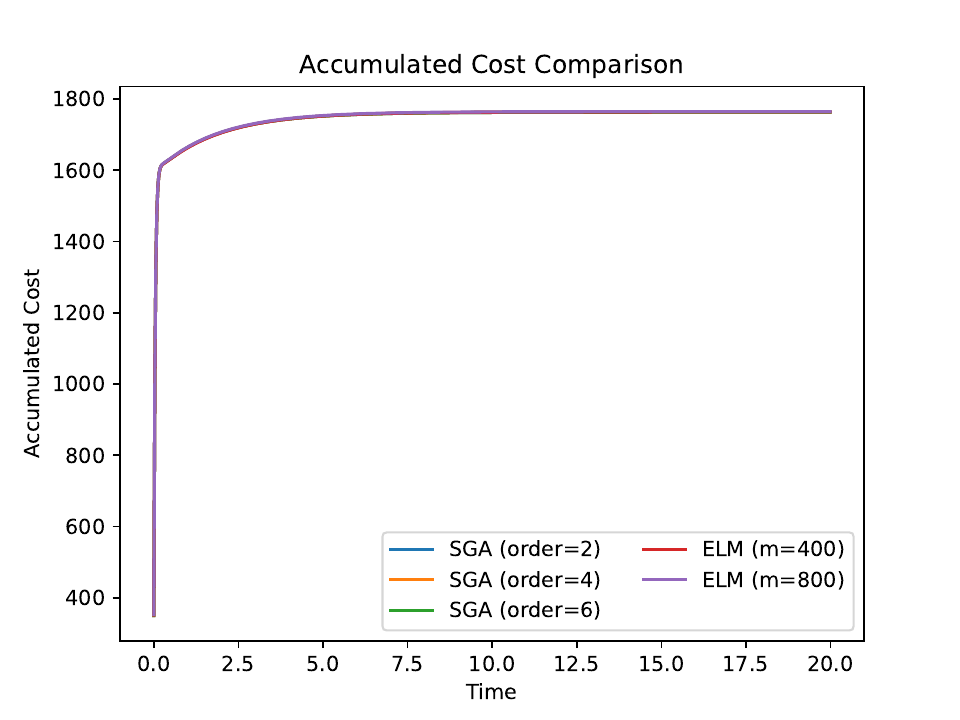}
    \caption{Simulated costs for ELM-PI with $m=400$ and $m=800$ for the Lorenz system, compared with the performance of the initial controller obtained from eigenvalue assignment for the linearized system and SGA with polynomial bases of orders 2, 4, and 6. It can be seen that both ELM and SGA achieve almost identical cost, while the computational time required by ELM-PI is considerably less.}
    \label{fig:lorenz_trajectories}
\end{figure}

\section{Limitations and future work}\label{sec:future}

We discuss a few limitations of the proposed work and potential future work in this section. 

\begin{itemize}
    \item \textbf{Convergence analysis:} In the convergence analysis, we established that both the exact PI and approximate PI can converge to the true optimal value and controller, provided that the training error can be made arbitrarily small and the training set forms a dense subset of the domain. While this is theoretically interesting, the results do not offer convergence rates or finite sample approximation guarantees. This could be an interesting topic for future research.

    \item \textbf{Initial controller and training over larger domains:} One of the main drawbacks of PI is that it requires a stabilizing controller to begin with. On the other hand, we noticed that both PINN-PI and current RL algorithms also struggle to learn a stabilizing controller over larger domains. In a small region around the equilibrium point, it is always possible to use a linear controller. Since PI requires a controlled invariant set to train the subsequent value and control functions, an interesting topic for future investigation is how to combine controllers that can guarantee to reach a small region of attraction, patched together with a local stabilizing controller, to offer opportunities for training PINN-PI over a larger domain. 

    \item \textbf{Verification:} Formal verification remains challenging for high-dimensional value functions. This difficulty seems unavoidable when computing general optimal value functions. However, there might be ways to circumvent this issue by designing cost metrics that encourage compositional controllers and value functions, or one can deliberately seek compositionally verifiable controllers that are suboptimal, yet still provide satisfactory performance with stability guarantees. We also remarked that the value functions computed with ELM-PI seem harder to verify than those computed with PINN-PI. This may be due to gradient descent implicitly regularizing the functions. Future work can investigate this discrepancy in more detail. One can also incorporate probabilistic guarantees with randomized algorithms for testing.

\end{itemize}

\newpage

\end{document}